\documentclass{article}

% if you need to pass options to natbib, use, e.g.:
%     \PassOptionsToPackage{numbers, compress}{natbib}
% before loading neurips_2019

% ready for submission
% \usepackage{neurips_2019}

% to compile a preprint version, e.g., for submission to arXiv, add add the
% [preprint] option:
%     \usepackage[preprint]{neurips_2019}

% to compile a camera-ready version, add the [final] option, e.g.:
\usepackage[final]{neurips_2019}

% to avoid loading the natbib package, add option nonatbib:
%     \usepackage[nonatbib]{neurips_2019}

\usepackage[utf8]{inputenc} % allow utf-8 input
\usepackage[T1]{fontenc}    % use 8-bit T1 fonts
\usepackage{hyperref}       % hyperlinks
\usepackage{url}            % simple URL typesetting
\usepackage{booktabs}       % professional-quality tables
\usepackage{amsfonts}       % blackboard math symbols
\usepackage{nicefrac}       % compact symbols for 1/2, etc.
\usepackage{microtype}      % microtypography
\usepackage{amsmath, amsthm, amssymb}
\usepackage{graphicx}
\usepackage{stefan_tex}
\usepackage{sidecap}

\hypersetup{colorlinks,citecolor=blue,urlcolor=blue,linkcolor=blue}

\usepackage{natbib}

\usepackage{caption}

\usepackage{floatrow}
% Table float box with bottom caption, box width adjusted to content
\newfloatcommand{capbtabbox}{table}[][\FBwidth]

% Theorems etc...

\theoremstyle{plain}
\newtheorem{prop}{Proposition}

\newtheorem{coro}[prop]{Corollary}
\newtheorem{lemm}[prop]{Lemma}
\newtheorem{theo}[prop]{Theorem}

\theoremstyle{definition}

\theoremstyle{remark}

\newtheorem{rema}[prop]{Remark}

\title{Covariate-Powered Empirical Bayes Estimation}

\usepackage{changes}

% The \author macro works with any number of authors. There are two commands
% used to separate the names and addresses of multiple authors: \And and \AND.
%
% Using \And between authors leaves it to LaTeX to determine where to break the
% lines. Using \AND forces a line break at that point. So, if LaTeX puts 3 of 4
% authors names on the first line, and the last on the second line, try using
% \AND instead of \And before the third author name.

\author{%
  Nikolaos Ignatiadis\\
  Statistics Department\\
  Stanford University\\
  \texttt{ignat@stanford.edu} \\
  \And
   Stefan Wager \\
   Graduate School of Business \\
   Stanford University \\
  \texttt{swager@stanford.edu} \\
}

\begin{document}

\maketitle

\begin{abstract}
We study methods for simultaneous analysis of many noisy experiments in the presence
of rich covariate information. The goal of the analyst is to optimally estimate the true effect underlying
each experiment. Both the noisy experimental results and the auxiliary covariates are useful for this purpose,
but neither data source on its own captures all the information available to the analyst.
In this paper, we propose a flexible plug-in empirical Bayes estimator that synthesizes both sources of information and may leverage any black-box predictive model. We show that our approach is within a constant factor of minimax for a simple data-generating model.
Furthermore, we establish robust convergence guarantees for our method that hold under considerable
generality, and exhibit promising empirical performance on both real and simulated data.
\end{abstract}

\section{Introduction}

It is nowadays common for a geneticist to simultaneously study the association
of thousands of different genes with a disease \citep{efron2001empirical, lonnstedt2002replicated, love2014moderated},
for a technology firm to have records from thousands of randomized experiments \citep{mcmahan2013ad},
or for a social scientist to examine data from hundreds of different regions at once \citep{abadie2018choosing}.
In all of these settings, we are fundamentally interested in learning something about each sample
(i.e., gene, experimental intervention, etc.) on its own; however, the abundance of data on other samples
can give us useful context with which to interpret our measurements about each individual sample
\citep{efron2010large,robbins1964empirical}.
In this paper, we propose a method for simultaneous analysis of many noisy experiments, and show that it is able to
exploit rich covariate information for improved power by leveraging existing machine learning tools
geared towards a basic prediction task.

As a motivation for our statistical setting, suppose we have access to a dataset of movie reviews where each
movie $i = 1, \, ..., \, n$ has an average rating $Z_i$ over a limited number of viewers; we also have
access to a number of covariates $X_i$ about the movie (e.g., genre, length, cast, etc.).
The task is to estimate the ``true'' rating $\mu_i$ of the movie, i.e., the average rating had the
movie been reviewed by a large number of reviewers similar to the ones who already reviewed it.
A first simple approach to estimating $\mu_i$ is to use its observed average rating
as a point estimate, i.e., to set $\hmu_i = Z_i$. This approach is clearly valid for movies
where we have enough data for sampling noise to dissipate, e.g., with over 50,000 reviews in the MovieLens 20M
data \citep{harper2016movielens}, we expect the 4.2/5 rating of Pulp Fiction to be quite stable. 
Conversely, for movies with fewer reviews, this strategy may be unstable: the rating 1.6/5 of Urban Justice
is based on less than 20 reviews, and appears liable to change as we collect more data. A second alternative
would be to just rely on covariates: We could learn to predict average ratings from covariates,
$m(x) = \EE{Z_i \cond X_i = x}$, and then set $\hmu_i = \hatm(X_i)$. This may be more appropriate
than using the observed mean rating for movies with very few reviews, but is limited in its accuracy
if the covariates aren't expressive enough to perfectly capture $\mu_i$.

We develop an approach that reconciles (and optimally interpolates between) the two estimation
strategies discussed above. The starting point for our discussion is the following generative model,
\begin{equation}
\label{eq:np-fay-herriot-model}
X_i \sim \mathbb{P}^X,\;\; \mu_i \mid X_i \sim \nn\p{ m(X_i), \,  A},\;\; Z_i \mid \mu_i \sim \nn\p{\mu_i, \,\sigma^2},
\end{equation}
according to which the true rating $\mu_i$ of each movie is partially explained by its covariates
$X_i$, but also has an idiosyncratic and unpredictable component with a Gaussian distribution $\nn\p{0, \, A}$.
Recall that we observe $X_i$ and $Z_i$ for each $i = 1, \, ..., \, n$, and want to estimate the vector of
$\mu_i$. Given this setting, if we knew both the idiosyncratic noise level $A$ and $m(x)$, the conditional mean of $\mu_i$ given $X_i = x$,
then the mean-square-error-optimal estimate of $\mu_i$ could directly be read off of Bayes' rule,
$\hmu_i^* = t^*_{m,A}(X_i, \, Z_i)$, with
\begin{equation}
\label{eq:Bayes_rule}
t^*_{m,A}(x,z) := \EE[m,A]{\mu_i \mid X_i=x, \, Z_i=z} = \frac{A}{\sigma^2 + A} z + \frac{\sigma^2}{\sigma^2 + A} m(x).
\end{equation}
As shown in Figure \ref{fig:np_vs_fh}, the behavior of this shrinker depends largely on the ratio
$A/\sigma^2$: As this ratio gets large, the Bayes rule gets close to just setting $\hmu_i = Z_i$, whereas
when the ratio is small, it shrinks everything to predictions made using covariates.

\begin{figure}
    \centering
    \includegraphics[width=0.8\linewidth]{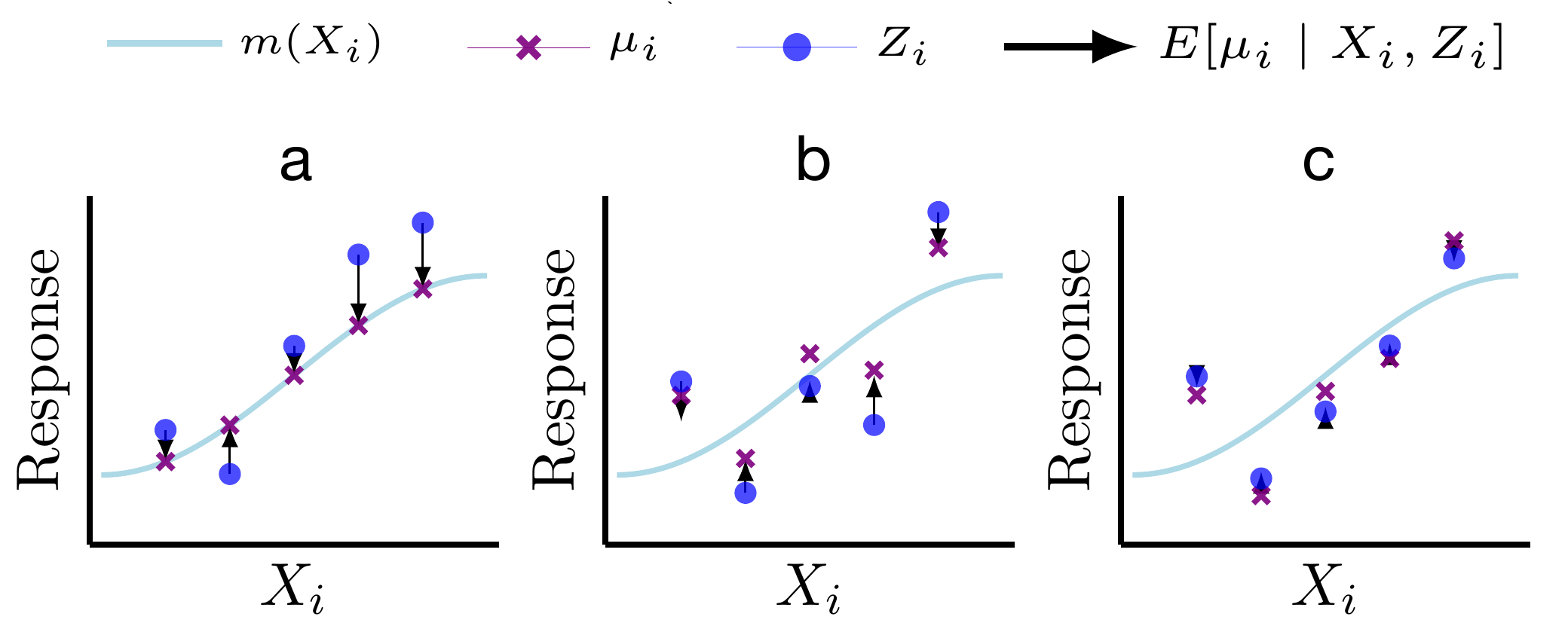}
    \caption{\textbf{Optimal empirical Bayes shrinkage.} All three plots show $\mu_i$ and $Z_i$ drawn from~\eqref{eq:np-fay-herriot-model} for various values of $A/\sigma^2$, with the covariate values $X_i$ fixed and the regression curve $m(\cdot)$ shown in blue.  The arrows depict how the oracle Bayes denoiser from~\eqref{eq:Bayes_rule} moves the point estimate \smash{$\hmu_i$} away from the raw observation $Z_i$ and towards $m(X_i)$. \textbf{a)} When $A/\sigma^2=0$, the oracle estimator shrinks $Z_i$ all the way back to $m(X_i)$. \textbf{b)} For $A/\sigma^2=1$, optimal shrinkage uses $(Z_i + m(X_i))/2$ to estimate $\mu_i$. \textbf{c)} When $A/\sigma^2$ is very large, it is preferable to discard $m(X_i)$ and just use the information in $Z_i$.}
    \label{fig:np_vs_fh}
\end{figure}

Now in practice, $m(\cdot)$ and $A$ are unlikely to be known a-priori and, furthermore, we
may not believe that the hierarchical structure \eqref{eq:np-fay-herriot-model} is a perfect description
of the underlying data-generating process. The main contribution of this paper is an estimation strategy
that addresses these challenges. First, we derive the minimax risk for estimating $\mu_i$ in model
\eqref{eq:np-fay-herriot-model} in a setting where $m(\cdot)$ is unknown but we are willing to make
various regularity assumptions (e.g., that $m(\cdot)$ is Lipschitz). Second, we show that a feasible
plug-in version of \eqref{eq:Bayes_rule} with estimated \smash{$\hatm(\cdot)$} and \smash{$\hA$}
attains this lower bound up to constants that do not depend on $\sigma^2$ or $A$.

Finally, we consider robustness of our approach to misspecification of the model
\eqref{eq:np-fay-herriot-model}, and establish an extension to the classic result of
\citet{james1961estimation}, whereby without any assumptions on the distribution of
$\mu_i$ conditionally on $X_i$, we can show that our approach still improves over
both simple baselines $\hmu_i = Z_i$ and $\hmu_i = \hatm(X_i)$ in considerable generality
(see Section \ref{sec:compound} for precise statements). We also consider behavior of our
estimator in situations where the distribution of $Z_i$ conditionally on $\mu_i, X_i$ may not be
Gaussian, and the conditional variance $\sigma^2_i$ of $Z_i$ given $\mu_i, X_i$ may be different
for different samples.

Our approach builds on a long tradition of empirical Bayes estimation
that seeks to establish frequentist guarantees for plug-in Bayesian estimators and related procedures
in data-rich environments \citep{efron2010large,robbins1964empirical}.
Empirical Bayes estimation in the setting without covariates $X_i$ is by now well understood
\citep{brown2009nonparametric,efron2011tweedie,efron1973stein, ignatiadis2019empirical, ignatiadis2019bias,james1961estimation, jiang2009general,johnstone2004needles,muralidharan2010empirical,stephens2016false, weinstein2018group}.

In contrast, empirical Bayes analysis with covariates has been less comprehensively explored,
and existing formal results are confined to special cases.
\citet{fay1979estimates} introduced a model of the form~\eqref{eq:np-fay-herriot-model}
with a linear specification, \smash{$m(x) = x^\top \beta$}, motivated by the problem of ``small area
estimation'' that arises when studying small groups of people based on census data.
Further properties of empirical Bayes estimators in the linear specification (including robustness
to misspecification) were established  by~\citet{green1991james} in the case where $X_i\in \RR$ and $m(x)=x$, 
and by~\citet{cohen2013empirical, tan2016steinized} and \citet{kou2017optimal} when \smash{$m(x)=x^\top \beta$}.
There has also been some work on empirical Bayes estimation with nonparametric specifications for $m$, e.g.,
\citet{mukhopadhyay2004two} and \citet{opsomer2008non}.  In a genetics application,~\citet{stephan2015random} 
parametrized $m(x)$ as a random forest. \citet{banerjee2018adaptive} utilize univariate side information to estimate sequences of $\mu_i$ that consist mostly of zeros. We also note recent work by \citet{coey2019improving} who considered experiment splitting as an alternative
to empirical Bayes estimation. Our paper adds to this body of knowledge by providing the
first characterization of minimax-optimal error in the general model~\eqref{eq:np-fay-herriot-model},
by proposing a flexible estimator that attains this bound up to constants, and
by studying robustness of non-parametric empirical Bayes methods to model misspecification.

% -- 

%Returning to the problem at hand, our general strategy towards optimal shrinkage estimation with covariates, amounts to simultaneously learning $A$ and $m(\cdot)$, so that we may implement~\eqref{eq:Bayes_rule}. However, there is a further catch: We would like the resulting procedure to provide guarantees,\emph{even} if~\eqref{eq:np-fay-herriot-model}, but instead the following considerably more general model holds:
%\begin{equation}
%\label{eq:np-fay-herriot-compound-model-variance}
%(X, \mu) \sim \mathbb{P}^{(X,\mu)},\;\; Z \mid \mu, X \sim \p{\mu, \sigma^2},\;
%\text{ i.e. }\EE{Z \mid \mu, X} = \mu, \; \Var{Z \mid \mu, X} = \sigma^2
%\end{equation}
%Guarantees under such a general setting are necessary if we want to apply the method as an off-the-shelf solution to denoising problems with covariates.

%Finally, we also note recent work on covariate-powered multiple testing \cite{ignatiadis2016data,lei2018adapt,li2016multiple} which, in an analogy to our results, enables considerable power gains over classical methods like the Benjamini-Hochberg procedure that ignore covariates.

\section{Minimax rates for empirical Bayes estimation with covariates}
\label{sec:minimax}
We first develop minimax optimality theory for model~\eqref{eq:np-fay-herriot-model}, when $m$ is known to lie in a class $\mathcal{C}$ of functions. To this end, we formalize the notion of regret in empirical Bayes estimation, following~\citet{robbins1964empirical}. Concretely, as before, we assume that we have access to $n$ i.i.d. copies $(X_i, Z_i)$ from model~\eqref{eq:np-fay-herriot-model}; $\mu_i$ is not observed. Our task at hand then is to construct a denoiser \smash{$\hat{t}_n:  \mathcal{X} \times \RR  \to \RR$} that we will use to estimate $\mu_{n+1}$ by \smash{$\hat{t}_n(X_{n+1}, Z_{n+1})$} for a future sample $(X_{n+1}, \, Z_{n+1})$. We benchmark this estimator against the unknown Bayes estimator \smash{$t^*_{m,A}(X_{n+1}, Z_{n+1})$} from ~\eqref{eq:Bayes_rule} in terms of its regret (excess risk) \smash{$L(\hat{t}_n; m, A)$}, where:
\begin{equation}
\label{eq:excess_risk}
L(t; m, A) := \EE[m, A]{\p{t(X_{n+1},Z_{n+1}) - \mu_{n+1}}^2} -  \EE[m,A]{\p{t_{m,A}^*(X_{n+1},Z_{n+1}) - \mu_{n+1}}^2}
\end{equation}
We characterize the difficulty of this task by exhibiting the minimax rates for the empirical Bayes excess risk incurred by not knowing $m \in \mathcal{C}$  (but knowing $A$), where $\mathcal{C}$ is a pre-specified class of functions:\footnote{We will propose procedures adaptive to unknown $A$ in Section~\ref{sec:sample_split_eb}.}
\begin{equation}
\label{eq:expected_eb_regret}
\mathfrak{M}_n^{\text{EB}}\p{\mathcal{C}; A, \sigma^2} := \inf_{\hat{t}_n} \sup_{m \in \mathcal{C}} \cb{\EE[m,A]{L(\hat{t}_n; m, A)}}
\end{equation}
Our key result, informally stated, is that the minimax excess
risk \smash{$\mathfrak{M}_n^{\text{EB}}$} can be characterized in terms of the minimax risk for estimating $m(\cdot)$ with
respect to $L^2(\mathbb P^X)$ in the regression problem in which we observe
$\smash{(X_i,Z_i)_{1 \leq i \leq n}}$ with $\smash{Z_i \mid X_i \sim \nn(m(X_i), \, A+\sigma^2)}$, i.e.,
\begin{equation}
\label{eq:nonparametric_minimax_rate}
\mathfrak{M}_n^{\text{Reg}}\p{\mathcal{C} ; A+\sigma^2} := \inf_{\hat{m}_n} \sup_{m \in \mathcal{C}}\EE[m, A]{\int \p{\hat{m}_n(x) - m(x)}^2 d\mathbb P^X(x)},
\end{equation}
such that, for many commonly used function classes $\mathcal{C}$, we have \footnote{Throughout,
we use the following notation for the asymptotic rates: For two sequences $a_n, b_n >0$, we say
$a_n \lesssim b_n$ if $\limsup_{n \to \infty} a_n/b_n \leq c$ for a constant $c$ that \emph{does not}
depend on $A, \sigma, n$. Similarly, we say $a_n \gtrsim b_n$ if $b_n \lesssim a_n$ and finally
$a_n \asymp b_n$ if both $a_n \gtrsim b_n$ and $a_n \lesssim b_n$.}
 \begin{equation}
 \label{eq:risk_scaling}
 \mathfrak{M}_n^{\text{EB}}\p{\mathcal{C}; A,\sigma^2} \asymp \frac{\sigma^4}{\p{\sigma^2 + A}^2}\mathfrak{M}_n^{\text{Reg}}\p{\mathcal{C}; A+\sigma^2}.
 \end{equation}
In other words, when $A/\sigma^2$ is very large, we find that it is easy to match the performance of Bayes rule~\eqref{eq:Bayes_rule},
since it collapses to $Z_i$. On the other hand, when $A/\sigma^2$ is small, matching the Bayes rule requires estimating $m(\cdot)$
well, and~\eqref{eq:risk_scaling} precisely describes how the difficulty of estimating $m(\cdot)$ affects our problem of interest.

Previous work on minimax rates for the excess risk~\eqref{eq:excess_risk} has been sparse; some exceptions include \citet{benhaddou2013adaptive}, \citet{li2005convergence} and \citet{penskaya1995lower}, who develop minimax bounds on~\eqref{eq:excess_risk} when $\mu \sim G, Z \mid \mu \sim \nn\p{0,\sigma^2}$, i.e., in the setting without covariates but with potentially more general priors. Beyond the modulation through covariates, a crucial difference of our approach is that we pay attention to the behavior in terms of $A$ and $\sigma$, instead of absorbing them into constants.

\paragraph{Lower bound} Here we provide a lemma for deriving lower bounds for worst case expected excess risk~\eqref{eq:expected_eb_regret} through reduction to hypothesis testing. The result is applicable to any class $\mathcal{C}$ for which we can prove a lower bound on the minimax regression error using Le Cam's two point method or Fano's method~\citep{duchi2019lecture, gyorfi2006distribution, ibragimov2013statistical, tsybakov2008introduction}; we will provide concrete examples below.
\begin{lemm}
\label{lemm:minimax}
For each $n$, let $\mathcal{V}_n$ be a finite set and $\mathcal{C}_n=\cb{m_{n,v} \mid v \in \mathcal{V}_n} \subset \mathcal{C}$ be a collection of functions indexed by $\mathcal{V}_n$ such that for a sequence $\delta_n > 0$:
$$ \int\p{m_{n,v}(x) - m_{n, v'}(x)}^2 d\mathbb P^X(x) \geq \delta_n^2 \;\text{ for all } v\neq v' \in \mathcal{V}_n,\; \text{for all}\; n$$
If furthermore, $\sup_{v,v' \in \mathcal{V}_n}\sup_{x}\p{m_{n,v}(x) - m_{n, v'}(x)}^2 \to 0 \text{ as } n \to \infty$, then:
$$ \mathfrak{M}_n^{\text{EB}}\p{\mathcal{C}; A,\sigma^2} \gtrsim  \frac{\sigma^4}{\p{\sigma^2 + A}^2} \cdot \delta^2_n \cdot \inf_{\hat{V}_n} \mathbb P[\hat{V}_n \neq V_n]$$
Here, \smash{$\inf_{\hat{V}_n}\mathbb P[\hat{V}_n \neq V_n]$} is to be interpreted as follows: $V_n$ is drawn uniformly from $\mathcal{V}_n$ and conditionally on $V_n=v$, we draw the pairs $(X_i,Z_i)_{1 \leq i \leq n}$ from model~\eqref{eq:np-fay-herriot-model} with regression function $m_{n,v}(\cdot)$. The infimum is taken over all estimators $\hat{V}_n$ that are measurable with respect to  $(X_i,Z_i)_{1 \leq i \leq n}$.
\end{lemm}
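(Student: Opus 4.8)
The plan is to run a reduction from estimation to multiple hypothesis testing (in the style of Le~Cam and Fano), but to carry it out at the level of the Bayes \emph{denoisers} rather than the regression functions, so that the prefactor $\sigma^4/(\sigma^2+A)^2$ emerges transparently. Write $\tau^2 := \sigma^2 + A$ and let $P_m$ denote the joint law of $(X_{n+1},Z_{n+1})$ under regression function $m$ (and variance $A$). The first ingredient is the exact-risk identity that drives everything: since $t^*_{m,A}$ is the conditional mean $\EE[m,A]{\mu_{n+1}\cond X_{n+1},Z_{n+1}}$, the cross term in the bias--variance expansion vanishes, so for any denoiser $t$ (in particular for the data-dependent $\hat t_n$, conditionally on the training sample)
\[
L(t;m,A) = \EE[m,A]{\p{t(X_{n+1},Z_{n+1}) - t^*_{m,A}(X_{n+1},Z_{n+1})}^2};
\]
that is, excess risk equals the squared $L^2(P_m)$-distance from $t$ to the oracle denoiser. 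The second ingredient is the explicit form of that oracle: from~\eqref{eq:Bayes_rule}, for any two regression functions $m,m'$ one has $t^*_{m,A}(x,z) - t^*_{m',A}(x,z) = \tfrac{\sigma^2}{\tau^2}\p{m(x)-m'(x)}$, which is \emph{independent of $z$}. Squaring and integrating against $\mathbb P^X$ turns the separation hypothesis on $\mathcal C_n$ into a separation of oracles, $\int\p{t^*_{m_{n,v},A} - t^*_{m_{n,v'},A}}^2 d\mathbb P^X \geq \tfrac{\sigma^4}{\tau^4}\delta_n^2$; this is where the claimed prefactor is born.

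Next I would lower bound the worst case by the Bayes (uniform-prior) risk over $\mathcal V_n$ and build a test out of $\hat t_n$. Fix a reference measure $\bar P$ on $\mathcal X\times\RR$ with $X$-marginal $\mathbb P^X$, write $\|f\|_{\bar P}^2 := \int f^2\,d\bar P$, and set $\hat V_n := \arg\min_{v\in\mathcal V_n}\|\hat t_n - t^*_{m_{n,v},A}\|_{\bar P}^2$, a legitimate data-measurable estimator of $V_n$. A triangle-inequality argument shows that whenever $\hat V_n \neq V_n$, the denoiser $\hat t_n$ sits at squared $\bar P$-distance at least $\tfrac14\tfrac{\sigma^4}{\tau^4}\delta_n^2$ from the true oracle $t^*_{m_{n,V_n},A}$. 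Averaging over $V_n\sim\mathrm{Unif}(\mathcal V_n)$ then lower bounds the average $\bar P$-risk by $\tfrac14\tfrac{\sigma^4}{\tau^4}\delta_n^2\,\mathbb P[\hat V_n\neq V_n] \geq \tfrac14\tfrac{\sigma^4}{\tau^4}\delta_n^2\,\inf_{\hat V_n}\mathbb P[\hat V_n\neq V_n]$.

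The one genuine subtlety, and the step I expect to be the main obstacle, is that the excess risk in~\eqref{eq:excess_risk} is weighted by the true sampling law $P_{m_{n,v}}$ (whose $Z\mid X$ marginal is $\nn(m_{n,v}(X),\tau^2)$), whereas the test was scored against the fixed reference $\bar P$. To pass between them I need a domination $P_{m_{n,v}} \geq c_n\,\bar P$ with $c_n$ bounded away from $0$ uniformly in $\sigma,A$. A pointwise Gaussian density ratio cannot be controlled in the tails, but here I can exploit that the oracle gap $t^*_{m_{n,v},A}-t^*_{m_{n,v'},A}$ does not depend on $z$: I am free to take $\bar P$ compactly supported in $z$ (say $Z$ uniform on a window of half-width proportional to $\tau$, centered at a fixed $m_{n,v_0}$) \emph{without losing any of the $\delta_n^2$ signal}, since the integrand is constant in $z$ and $\bar P$ retains $X$-marginal $\mathbb P^X$. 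On such a $\tau$-scaled window the ratio $dP_{m_{n,v}}/d\bar P$ is bounded below by an absolute constant once $\sup_{v,v',x}\abs{m_{n,v}(x)-m_{n,v'}(x)} \to 0$ places the window within $o(1)$ of every conditional mean; this is exactly the role of the uniform-smallness hypothesis, and it yields $c_n \to c_\infty>0$ with $c_\infty$ free of $\sigma, A, n$.

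Finally I would chain the inequalities: worst-case regret $\geq$ average $P_{m_{n,v}}$-risk $\geq c_n\times$ average $\bar P$-risk $\geq c_n\cdot \tfrac14\tfrac{\sigma^4}{\tau^4}\delta_n^2\cdot\inf_{\hat V_n}\mathbb P[\hat V_n\neq V_n]$, using the risk identity and the domination for the first two inequalities and the testing bound for the last. Passing to $\liminf_n c_n = c_\infty$ then delivers the stated $\gtrsim$ conclusion, with a constant depending on neither $\sigma$, $A$, nor $n$.
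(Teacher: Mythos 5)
Your proposal is correct, but it follows a genuinely different route from the paper's. The paper's engine is a two-point Bayes-risk lower bound (its Lemma~\ref{lemm:bayes_risk_mixture}): for the mixture prior $\frac12\sqb{\nn\p{m_1(x),A}+\nn\p{m_2(x),A}}$ it computes the Bayes risk via Brown's Fisher-information identity and a Taylor expansion, concluding that no single denoiser can have small pointwise excess risk under both $m_1$ and $m_2$ once $\abs{m_1(x)-m_2(x)}\leq\Delta\sqrt{\sigma^2+A}$; this dichotomy is integrated over $x$ (Lemma~\ref{lemm:two_point_bound}), and the test $\tilde V_n=\argmin_v L(\hat t_n;m_v,A)$ together with Markov's inequality finishes the reduction. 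You instead exploit two exact algebraic facts --- excess risk equals squared $L^2(P_m)$-distance to the oracle, and the oracle gap $t^*_{m,A}-t^*_{m',A}=\frac{\sigma^2}{\sigma^2+A}\p{m-m'}$ is free of $z$ --- and replace the information-theoretic lemma by a change of measure: a compactly supported reference law $\bar P$ (uniform in $z$ on a $\tau$-scaled window) retains the full oracle separation and is dominated, $P_{m_v}\geq c_n\,\bar P$ on its support, with $c_n$ tending to a universal constant precisely because of the uniform-smallness hypothesis; the triangle-inequality/argmin-distance test then yields the bound. Both proofs use the hypothesis $\sup_{v,v',x}\p{m_{n,v}(x)-m_{n,v'}(x)}^2\to0$ for the same structural purpose (to make a localization harmless) but in different places: the paper to guarantee its $\Delta$-truncation is eventually vacuous, you to pin down the density ratio on the window. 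What the paper's route buys is a constant that can be taken arbitrarily close to $1/4$, and, more importantly, the strictly more general truncated-separation version stated in its appendix, in which $m_v-m_{v'}$ need only be well separated on the region where they are close relative to $\sqrt{\sigma^2+A}$; your single global density-ratio constant would degrade in that setting. What your route buys is the elimination of the entire Fisher-information/Taylor computation and a transparent account of where the prefactor $\sigma^4/\p{\sigma^2+A}^2$ originates. Modulo the usual measurability-of-argmin caveats (shared by the paper's own test construction), your argument is sound.
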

The Lemma may be interpreted as follows: If information theoretically we cannot determine which $m_{n,v} \in \mathcal{C}_n$ generated $(X_i,Z_i)_{1 \leq i \leq n}$, yet the $m_{n,v}$ are well separated in $L^2(\mathbb P^X)$ norm, then the minimax empirical Bayes regret~\eqref{eq:expected_eb_regret} must be large. Proving lower bounds involves contructing $\mathcal{C}_n$.
\paragraph{Upper bound} Previously, we described the relationship of model~\eqref{eq:np-fay-herriot-model} to nonparametric regression. However, there is a further connection: Under~\eqref{eq:np-fay-herriot-model}, it also holds that $Z_i \mid X_i \sim \nn\p{m(X_i), \sigma^2 + A}$. Thus $m(\cdot)$ may estimated from the data by directly running a regression $Z_i \sim X_i$. Then, for known $A$, the natural impetus to approximate~\eqref{eq:Bayes_rule} in a data-driven way is to use a plug-in estimator. Concretely, given a $\hat{m}_n$ that achieves the minimax risk~\eqref{eq:nonparametric_minimax_rate}, we just plug that into the Bayes rule~\eqref{eq:Bayes_rule}:
\begin{equation}
\label{eq:plugin_hat_t}
\hat{t}_n(x,z) := t^*_{\hat{m}_n,A}(x,z) = \frac{A}{\sigma^2 + A} z + \frac{\sigma^2}{\sigma^2 + A} \hat{m}_n(x)
\end{equation}
This plug-in estimator, establishes the following upper bound on~\eqref{eq:expected_eb_regret}:
\begin{theo}
\label{theo:upper_bd}
Under model~\eqref{eq:np-fay-herriot-model}, it holds that:
$$\mathfrak{M}_n^{\text{EB}}\p{\mathcal{C}; A,\sigma^2} \leq \frac{\sigma^4}{\p{\sigma^2 + A}^2}\mathfrak{M}_n^{\text{Reg}}\p{\mathcal{C}; A+\sigma^2}$$
\end{theo}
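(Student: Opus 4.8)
The plan is to prove the bound by exhibiting a single good denoiser: the plug-in rule $\hat{t}_n$ of~\eqref{eq:plugin_hat_t}, built on a minimax-optimal regression estimator $\hat{m}_n$. Since $\mathfrak{M}_n^{\text{EB}}$ is an infimum over all denoisers, controlling the worst-case expected regret of this one estimator immediately controls the infimum. The observation that makes the plug-in estimator legitimate is the one recorded just before the statement: under~\eqref{eq:np-fay-herriot-model} the marginal law of $(X_i,Z_i)$ satisfies $Z_i \mid X_i \sim \nn(m(X_i), \sigma^2 + A)$, so any estimator for the regression problem defining $\mathfrak{M}_n^{\text{Reg}}(\mathcal{C}; A+\sigma^2)$ acts on exactly the data available to the analyst and can be inserted directly into~\eqref{eq:plugin_hat_t}.

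First I would reduce the regret to a squared distance from the Bayes rule. Because $t^*_{m,A}(x,z) = \EE[m,A]{\mu_{n+1} \mid X_{n+1}=x, Z_{n+1}=z}$ is the conditional mean, expanding $(t-\mu_{n+1})^2 = ((t-t^*) + (t^*-\mu_{n+1}))^2$ kills the cross term: $t^*-\mu_{n+1}$ has conditional mean zero given $(X_{n+1},Z_{n+1})$, while $t-t^*$ is $(X_{n+1},Z_{n+1})$-measurable. This gives the identity $L(t; m, A) = \EE[m,A]{\p{t(X_{n+1},Z_{n+1}) - t^*_{m,A}(X_{n+1},Z_{n+1})}^2}$, valid for any $t$ independent of the test triple $(X_{n+1},Z_{n+1},\mu_{n+1})$, in particular for $t = \hat{t}_n$ once we condition on the training sample.

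The decisive step is an algebraic cancellation. Subtracting~\eqref{eq:Bayes_rule} from~\eqref{eq:plugin_hat_t}, the $z$-terms are identical, so $\hat{t}_n(x,z) - t^*_{m,A}(x,z) = \frac{\sigma^2}{\sigma^2+A}\p{\hat{m}_n(x) - m(x)}$, with no dependence on $z$. Substituting into the regret identity and integrating out $X_{n+1} \sim \mathbb{P}^X$ (independent of the training data) yields
\begin{equation*}
L(\hat{t}_n; m, A) = \frac{\sigma^4}{\p{\sigma^2+A}^2} \int \p{\hat{m}_n(x) - m(x)}^2 d\mathbb{P}^X(x).
\end{equation*}
Taking expectation over the training sample, then the supremum over $m \in \mathcal{C}$, and finally choosing $\hat{m}_n$ to achieve the minimax regression risk~\eqref{eq:nonparametric_minimax_rate} at noise level $A+\sigma^2$, the right-hand side becomes exactly $\frac{\sigma^4}{\p{\sigma^2+A}^2}\mathfrak{M}_n^{\text{Reg}}(\mathcal{C}; A+\sigma^2)$; bounding the infimum over denoisers by this particular plug-in rule completes the argument.

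There is no genuine analytic obstacle here---no concentration or approximation is needed---so the difficulty is entirely conceptual: the proof hinges on recognizing that the induced regression problem has noise variance $A+\sigma^2$ rather than $\sigma^2$, and that the $z$-linear part of the Bayes rule is estimator-free and hence cancels exactly, leaving only the variance-ratio prefactor. The one point I would double-check is the measurability bookkeeping: $\hat{m}_n$, and hence $\hat{t}_n$, depends on the training sample alone, so conditioning on the training data freezes $\hat{t}_n$ as a deterministic function of $(x,z)$ and legitimizes applying the regret identity with $t = \hat{t}_n$ before taking the outer expectation.
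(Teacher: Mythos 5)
Your proposal is correct and is essentially the paper's own proof: both exhibit the plug-in rule~\eqref{eq:plugin_hat_t}, use the orthogonality of the Bayes-rule residual to kill the cross term, and arrive at the exact identity $\EE[m,A]{L(\hat{t}_n; m, A)} = \frac{\sigma^4}{(\sigma^2+A)^2}\EE[m,A]{(\hat{m}_n(X_{n+1})-m(X_{n+1}))^2}$, after which taking the supremum over $m \in \mathcal{C}$ and a (near-)minimax $\hat{m}_n$ finishes the bound. The only cosmetic difference is that you state the Pythagorean identity $L(t;m,A)=\EE[m,A]{(t-t^*_{m,A})^2}$ in general before computing $\hat{t}_n - t^*_{m,A}$, whereas the paper expands the same decomposition directly, conditionally on $X_{n+1}=x$.
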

In deriving the lower bound Lemma~\eqref{lemm:minimax}, the estimators considered may use the unknown $A$. For this reason, for the upper bound we also benchmark against estimators that know $A$; however in Section~\ref{sec:sample_split_eb} we demonstrate that in fact knowledge of $A$ is not required to attain optimal rates. Next we provide two concrete examples of classes, where the lower and upper bounds match up to constants.

\paragraph{The linear class (Fay-Herriot shrinkage)} As a first, simple example, we consider the model of~\citet{fay1979estimates}, in which: $\mathcal{X} = \RR^d,\;\text{and}\; \mathcal{C} = \text{Lin}\p{\RR^d} = \cb{ m \mid m(x) = x^\top \beta,\; \beta \in \RR^d}$.

\begin{theo}
\label{theo:linreg}
Assume the $X_i$ are $\simiid \nn\p{0, \Sigma}$ for an unknown covariance matrix $\Sigma \succ 0, \Sigma \in \RR^{d \times d}$. Then there exists a constant $C_{\text{Lin}}$ (which does not depend on the problem parameters) such that:

$$ \lim_{n \to \infty} \abs{\log\p{\mathfrak{M}_n^{\text{EB}}\p{\text{Lin}\p{\RR^d}; A,\sigma^2} \bigg / \frac{\sigma^4}{\p{\sigma^2 + A}^2} \cdot \frac{(\sigma^2+A)d}{n}}} \leq C_{\text{Lin}}$$
\end{theo}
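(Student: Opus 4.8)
The plan is to combine the general upper bound (Theorem~\ref{theo:upper_bd}) with a matching lower bound obtained by a direct Bayes (van Trees-type) argument, after reducing everything to the Gaussian linear model $Z_i \mid X_i \sim \nn\p{X_i^\top \beta, \, \tau^2}$ with $\tau^2 := \sigma^2 + A$. Writing $\kappa := A/\tau^2$ so that $1-\kappa = \sigma^2/\tau^2$, the target rate is $\frac{\sigma^4}{\tau^4}\cdot\frac{\tau^2 d}{n} = \frac{\sigma^4 d}{\tau^2 n}$, and the whole argument reduces to showing $\mathfrak{M}_n^{\text{Reg}}\p{\text{Lin}\p{\RR^d};\tau^2} \asymp \tau^2 d/n$ and controlling the gap between $\mathfrak{M}_n^{\text{EB}}$ and $\frac{\sigma^4}{\tau^4}\mathfrak{M}_n^{\text{Reg}}$.

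For the upper bound I would plug the ordinary least squares fit $\hat\beta$ into~\eqref{eq:plugin_hat_t}. Since the $L^2(\mathbb P^X)$ error of $\hatm_n(x) = x^\top\hat\beta$ is $(\hat\beta-\beta)^\top\Sigma(\hat\beta-\beta)$, and $\hat\beta-\beta \mid X \sim \nn\p{0, \, \tau^2(X^\top X)^{-1}}$, the regression risk equals $\tau^2\,\EE{\operatorname{tr}\p{\Sigma(X^\top X)^{-1}}}$. Using that $X^\top X$ is Wishart with $\EE{(X^\top X)^{-1}} = (n-d-1)^{-1}\Sigma^{-1}$ gives $\mathfrak{M}_n^{\text{Reg}}\p{\text{Lin}\p{\RR^d};\tau^2} \leq \tau^2 d/(n-d-1)$, and Theorem~\ref{theo:upper_bd} then yields $\mathfrak{M}_n^{\text{EB}} \leq \frac{\sigma^4 d}{\tau^2(n-d-1)}$.

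The harder half is the lower bound. Note first that Lemma~\ref{lemm:minimax} does not apply directly: linear functions on $\RR^d$ are unbounded under the Gaussian design, so the condition $\sup_x\p{m_{n,v}(x)-m_{n,v'}(x)}^2 \to 0$ fails. Instead I would lower bound the minimax regret by a Bayes regret under a Gaussian prior $\beta \sim \nn\p{0, \, V}$, using that the minimax regret dominates the Bayes regret for any prior. Because $t^*_{m_\beta,A}(x,z) = \kappa z + (1-\kappa)x^\top\beta$ is linear in the unknown $\beta$ and the loss is quadratic, the Bayes-optimal denoiser is the posterior mean, and its Bayes regret equals $(1-\kappa)^2\,\EE{x^\top \operatorname{Cov}\p{\beta \mid \mathcal{D}, \, X_{n+1}=x, \, Z_{n+1}=z}\, x}$, where $\mathcal{D}=(X_i,Z_i)_{1\le i \le n}$. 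The posterior covariance is the inverse Gaussian precision $\p{V^{-1} + \tau^{-2}\p{\sum_{i=1}^n X_iX_i^\top + xx^\top}}^{-1}$. I would then lower bound the quadratic form by matrix convexity of $S \mapsto S^{-1}$ (Jensen, over $\mathcal{D}$), and absorb the single rank-one test term $xx^\top$ — which is $O(1)$ against the $O(n)$ training term — via Sherman--Morrison. As $n\to\infty$ the prior and the test-point correction wash out, leaving $(1-\kappa)^2\cdot\frac{\tau^2}{n}\EE{x^\top\Sigma^{-1}x} = \frac{\sigma^4 d}{\tau^2 n}\p{1-o(1)}$, with the $o(1)$ uniform in $\p{A,\sigma}$ (using $\lambda_{\min}(\Sigma)>0$ to bound $\tau^{-2}x^\top M^{-1}x \le \|x\|^2/(n\lambda_{\min}(\Sigma))$ for $M = V^{-1}+\tau^{-2}n\Sigma$).

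The main obstacle, and the reason a naive reduction to regression fails, is the denoiser's access to the fresh test observation $Z_{n+1}$: unlike in the regression problem, $Z_{n+1}$ is itself a noisy measurement of $X_{n+1}^\top\beta$ with $O(1)$ variance, so a denoiser can in principle overweight it to compensate for imperfect knowledge of $\beta$, which already makes $\mathfrak{M}_n^{\text{EB}}$ strictly smaller than $\frac{\sigma^4}{\tau^4}\mathfrak{M}_n^{\text{Reg}}$. The Bayes computation handles this honestly — the fresh point enters both the conditioning and as the $xx^\top$ term — and the crux is showing this $O(1)$ information is asymptotically negligible against the $O(n)$ information in $\mathcal{D}$, so that it perturbs the rate only at order $1/n$. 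Combining the two halves, $\frac{\sigma^4 d}{\tau^2 n}\p{1-o(1)} \le \mathfrak{M}_n^{\text{EB}} \le \frac{\sigma^4 d}{\tau^2 n}\cdot\frac{n}{n-d-1}$ with both corrections tending to $1$ uniformly in $\p{A,\sigma}$; hence the ratio in the statement tends to $1$ and $\abs{\log(\cdot)}\to 0 \le C_{\text{Lin}}$.
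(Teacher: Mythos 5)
Your proposal is correct, and the upper-bound half is essentially identical to the paper's: OLS plugged into the rule of Theorem~\ref{theo:upper_bd}, with the Wishart identity $\mathbb{E}[(X^\top X)^{-1}] = \Sigma^{-1}/(n-d-1)$ giving regression risk $d(\sigma^2+A)/(n-d-1)$. The lower bound, however, is where you genuinely depart from the paper. The paper does \emph{not} abandon Lemma~\ref{lemm:minimax}; it proves a more general version of it in the appendix in which the $L^2(\mathbb{P}^X)$ separation is only required after truncation by the indicator of $\{(m_v(x)-m_{v'}(x))^2 \le \Delta^2(\sigma^2+A)\}$, and this truncated form is exactly how it copes with the unboundedness of linear functions under Gaussian design that you correctly flagged. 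It then builds a $1/2$-packing $\{\theta_v\}$ of the unit ball of cardinality $\ge 2^d$, sets $\beta_v = \varepsilon\,\Sigma^{-1/2}\theta_v$ with $\varepsilon^2 \asymp d(\sigma^2+A)/n$, bounds the pairwise KL divergences by $2\varepsilon^2/(A+\sigma^2)$, and invokes the local Fano lemma (Lemma~\ref{lemm:local_fano}) to keep the testing error above $1/2$. Your route instead places a conjugate Gaussian prior on $\beta$, lower-bounds minimax regret by Bayes regret, and computes the Bayes regret exactly as $(1-\kappa)^2\,\mathbb{E}\bigl[x^\top \mathrm{Cov}(\beta \mid \mathcal{D}, x, z)\, x\bigr]$ with the rank-one test-point and prior contributions shown negligible via Jensen and Sherman--Morrison; the orthogonality step behind this formula (the oracle $t^*$ is the conditional mean of $\mu_{n+1}$ given $\beta$ and the test pair, so the cross term vanishes) is valid, and your handling of the fresh observation $Z_{n+1}$ is the honest treatment of a point the Fano argument handles only implicitly. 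The trade-off: your Bayes computation exploits linear--Gaussian conjugacy, so it yields the asymptotically sharp constant (the ratio tends to $1$, hence the limit of the absolute log is $0$) and avoids truncation bookkeeping entirely, but it would not transfer to the Lipschitz class of Theorem~\ref{theo:hoelder}; the paper's packing-plus-Fano machinery loses universal constants yet is the reusable tool that proves both results with one lemma.
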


\paragraph{The Lipschitz class} Next we let $\mathcal{X}=[0,1]^d$ and for $L>0$ we consider the Lipschitz class:
$$\mathcal{C} = \text{Lip}([0,1]^d, L):= \cb{m: [0,1]^d \to \RR \mid  \abs{m(x) -m(x')} \leq L\Norm{x-x'}_2 \;\forall\; x,x' \in [0,1]^d}. $$

\begin{theo}
\label{theo:hoelder}
Assume the $X_i$ are $\simiid F^X$, where $F^X$ is a measure on $[0,1]^d$ with Lebesgue density $f^X$ that satisfies $ \eta \leq f^X(u) \leq 1/\eta$ for all $u \in [0,1]^d$ for some $\eta > 0$. Then there exists a constant $C_{\text{Lip}}(d, \eta)$ which depends only on $d, \eta$ such that:
$$ \lim_{n \to \infty} \abs{\log\p{\mathfrak{M}_n^{\text{EB}}\p{\text{Lip}([0,1]^d, L); A,\sigma^2} \bigg / \frac{\sigma^4}{\p{\sigma^2 + A}^2} \cdot \p{\frac{ L^d \p{\sigma^2+A}}{n}}^{\frac{2}{2 + d}}}} \leq C_{\text{Lip}}(d, \eta)$$
\end{theo}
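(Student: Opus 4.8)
The plan is to sandwich $\mathfrak{M}_n^{\text{EB}}(\mathrm{Lip}([0,1]^d,L); A,\sigma^2)$ between matching multiples of $\frac{\sigma^4}{(\sigma^2+A)^2}\p{L^d(\sigma^2+A)/n}^{2/(2+d)}$, taking the upper side from Theorem~\ref{theo:upper_bd} and the lower side from Lemma~\ref{lemm:minimax}. In both directions the prefactor $\sigma^4/(\sigma^2+A)^2$ is supplied for free, so the real content is to show that the regression-type quantity $\delta_n^2 := \p{L^d v/n}^{2/(2+d)}$, with $v := \sigma^2+A$, is simultaneously (i) an upper bound on $\mathfrak{M}_n^{\text{Reg}}(\mathrm{Lip}([0,1]^d,L); v)$ and (ii) an achievable $L^2(\mathbb{P}^X)$ separation in a Fano construction, with all constants depending only on $d$ and $\eta$. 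The structural fact used throughout is that under~\eqref{eq:np-fay-herriot-model} the observable pair satisfies $Z_i \mid X_i \sim \mathcal{N}(m(X_i), \sigma^2+A)$, so the effective noise level for everything concerning $m$ is the \emph{inflated} variance $v$.

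\textbf{Upper bound.} By Theorem~\ref{theo:upper_bd} it suffices to bound $\mathfrak{M}_n^{\text{Reg}}(\mathrm{Lip}([0,1]^d,L); v)$, which I would do with a local-averaging (degree-$0$ local polynomial) estimator of bandwidth $h$. Since $m$ is $L$-Lipschitz, the pointwise bias over a window of radius $h$ is $O(Lh)$, so the squared bias is $\lesssim L^2h^2$; because $f^X \geq \eta$, such a window contains $\asymp nh^d$ design points, giving variance $\lesssim v/(nh^d)$. Integrating the pointwise MSE against $\mathbb{P}^X$ and balancing $L^2h^2 \asymp v/(nh^d)$ yields $h \asymp (v/(nL^2))^{1/(2+d)}$ and integrated risk of order $L^2h^2 \asymp (L^d v/n)^{2/(2+d)} = \delta_n^2$, with constants depending only on $d,\eta$. (Alternatively one may cite the standard Lipschitz-regression minimax upper bound and recover the $(L,v)$-dependence by a rescaling argument.)

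\textbf{Lower bound.} Here I would apply Lemma~\ref{lemm:minimax} with a bump construction. Fix a smooth $\phi$ supported in the unit cube with $\Norm{\nabla\phi}_\infty \leq 1$, partition $[0,1]^d$ into $m^d$ cubes of side $h=1/m$ with centers $x_k$, and set $\phi_k(x) = Lh\,\phi((x-x_k)/h)$; the height prefactor $Lh$ is chosen precisely so that each $\phi_k$, and hence every $m_\omega = \sum_k \omega_k \phi_k$ with $\omega \in \cb{0,1}^{m^d}$, is $L$-Lipschitz and lies in $\mathcal{C}$. A Varshamov--Gilbert subset $\Omega$ with $\log|\Omega| \gtrsim m^d$ and pairwise Hamming distance $\gtrsim m^d$ gives, via $\eta \leq f^X \leq 1/\eta$, a separation $\delta_n^2 \asymp L^2h^2$ and an $L^2(\mathbb{P}^X)$ diameter of the same order. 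The sup-norm condition of the lemma holds automatically since $\sup_x|m_\omega(x)-m_{\omega'}(x)| \lesssim Lh \to 0$. For the testing error, the $n$-fold KL divergence between the laws of $(X_i,Z_i)_{1\le i\le n}$ under $m_\omega$ and $m_{\omega'}$ equals $\frac{n}{2v}\int (m_\omega-m_{\omega'})^2\,d\mathbb{P}^X \lesssim nL^2h^2/v$ --- this is exactly where $v=\sigma^2+A$ enters --- so Fano's inequality forces $\inf_{\hat{V}_n}\mathbb{P}[\hat{V}_n \neq V_n] \geq \tfrac{1}{2}$ whenever $nL^2h^2/v \lesssim m^d = h^{-d}$, i.e. $h \lesssim (v/(nL^2))^{1/(2+d)}$. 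Taking $h$ at this threshold makes $\delta_n^2 \asymp (L^d v/n)^{2/(2+d)}$, and Lemma~\ref{lemm:minimax} delivers the matching lower bound.

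\textbf{Main obstacle.} Each bound is an individually routine nonparametric argument, so the real care is bookkeeping of constants, ensuring the final $C_{\text{Lip}}(d,\eta)$ depends on $d,\eta$ alone and not on $L,A,\sigma$. Two points drive this. First, the bump height $Lh$ is what ties the separation $\delta_n^2 \asymp L^2h^2$ to the Lipschitz constraint, and the identical $h$-balancing reappears in the upper-bound bias--variance tradeoff; one must check these two occurrences of $h$ produce the same exponent $2/(2+d)$ and the same power of $L$. Second, one must confirm that the variance $v$ entering the KL (and hence the admissible $h$) is the \emph{same} $v$ that appears in the achievable bias--variance balance, since it is precisely this coincidence that makes the upper and lower constants match while the $A,\sigma^2$-dependence factors cleanly into $v=\sigma^2+A$ and the separately supplied prefactor $\sigma^4/(\sigma^2+A)^2$.
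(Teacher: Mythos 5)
Your proposal is correct and follows essentially the same route as the paper's proof: the upper bound via Theorem~\ref{theo:upper_bd} plus a standard Lipschitz-regression rate with the inflated noise level $\sigma^2+A$ (the paper uses the $k$-NN estimator cited from \citet{gyorfi2006distribution}, you use a local-averaging estimator or the same citation-plus-rescaling — an immaterial difference), and the lower bound via Lemma~\ref{lemm:minimax} with the identical disjoint-bump construction of height $\propto Lh$, a Gilbert--Varshamov packing, a per-pair KL bound of order $L^2h^2/(\sigma^2+A)$, and local Fano, balanced at $h \asymp \p{(\sigma^2+A)/(nL^2)}^{1/(2+d)}$. Both directions yield constants depending only on $d,\eta$, exactly as needed for the $\lim_{n\to\infty}\abs{\log(\cdot)} \leq C_{\text{Lip}}(d,\eta)$ formulation.
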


\section{Feasible estimation via split-sample empirical Bayes}
\label{sec:sample_split_eb}

The minimax estimator in~\eqref{eq:plugin_hat_t} that implements~\eqref{eq:Bayes_rule} in a data-driven
way is not feasible, because $A$ is unknown in practice. In principle, $A+\sigma^2$ (with $\sigma^2$ known)
is just $\Var{ Z_i \mid X_i}$, hence deriving a plug-in estimator for $A$ just takes us to the realm
of variance estimation in regression problems. But variance estimation for the general setting we consider here is
a notoriously difficult problem, with only partial solutions available for very specific settings
\citep[e.g.,][]{janson2017eigenprism,reid2016study}. Furthermore, even for 1-dimensional smooth nonparametric
regression the minimax rates for variance estimation may be slower than parametric~\citep{brown2007variance, shen2019optimal}.

Fortunately, it turns out that we do not need to accurately estimate $A$ in~\eqref{eq:np-fay-herriot-model} in order for our
approach to perform well. Rather, as shown below, if we naively read off an estimate of $A$ derived via sample splitting
as in \eqref{eq:A_sure_homoskedastic}, we still obtain strong guarantees. Concretely, we study the following algorithm:
\begin{enumerate}
\item Form a partition of $\cb{1,\dotsc,n}$ into two folds $I_1$ and $I_2$.
\item Use observations in $I_1$, to estimate the regression $ m(x) = \EE{ Z_i \mid X_i=x}$ by $\hat{m}_{I_1}(\cdot)$.
\item Use observations in $I_2$, to estimate $A$, through the formula:
\begin{equation}
\label{eq:A_sure_homoskedastic}
\hat{A}_{I_2} = \p{\frac{1}{\abs{I_2}}\sum_{i \in I_2} \p{\hat{m}_{I_1}(X_{i})-Z_i}^2 - \sigma^2}_+
\end{equation}
\item The estimated denoiser is then $\hat{t}_n^{\text{EBCF}}(\cdot, \cdot) = t^*_{\hat{m}_{I_1}, \hat{A}_{I_2} }(\cdot, \cdot)$.
\end{enumerate}
We prove the following guarantee for this estimator. In particular, the following implies that if the minimax rate for regression \eqref{eq:nonparametric_minimax_rate} is slower than the parametric rate $1/n$
and if $\abs{I_1}/n$ converges to a non-trivial limit, then our algorithm attains the minimax rate even when $A$ is unknown.
\begin{theo}
\label{theo:optimal_A_datadriven}
Consider a split of the data into two folds $I_1, I_2$, where $n_1 = \abs{I_1}, n_2 = \abs{I_2}$. Furthermore assume that $\hat{m}_{I_1}$ satisfies
$\mathbb{E}_{m,A}[\hat{m}_{I_1}(X)^4 \mid \hat{m}_{I_1}] \leq M$
almost surely for some $M<\infty$, where $X$ is a fresh draw from $\mathbb P^X$. Then the estimator $\hat{t}_n^{\text{EBCF}}$ satisfies the following guarantee:
$$ \EE[m,A]{ L\p{\hat{t}_n^{\text{EBCF}};\; m, A}} \leq \EE[m,A]{L\p{t^*_{\hat{m}_{n_1},A};\; m, A}}  + \frac{1}{n_2} O\p{1}$$
\end{theo}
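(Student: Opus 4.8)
The plan is to decompose the excess risk of the feasible estimator $\hat t_n^{\text{EBCF}}$ into two pieces: the excess risk of the plug-in denoiser $t^*_{\hat m_{I_1}, A}$ that uses the \emph{true} $A$ together with the first-fold regression estimate, and an additional penalty coming from substituting the estimated $\hat A_{I_2}$ for $A$. The first piece is exactly the target $\EE[m,A]{L(t^*_{\hat m_{n_1}, A};\, m, A)}$ appearing on the right-hand side, so all the work goes into controlling the second piece and showing it is $O(1/n_2)$.

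First I would write down the excess risk $L(t; m, A)$ more explicitly. Since $t^*_{m,A}$ is the Bayes rule, a standard bias-variance / orthogonality computation should give a clean closed form for the regret of any plug-in denoiser $t^*_{\hat m, \hat A}$ in terms of the two plug-in errors $\hat m - m$ and $\hat A - A$. Concretely, conditionally on the first fold (so that $\hat m_{I_1}$ is fixed), I expect $L(t^*_{\hat m, \hat A}; m, A)$ to split into a term depending only on $\|\hat m - m\|_{L^2(\mathbb P^X)}^2$ (times the shrinkage factor $\sigma^4/(\sigma^2+A)^2$, matching the minimax scaling) plus a term that is quadratic in $(\hat A - A)$ and weighted by quantities like $\EE{(\hat m(X) - Z)^2}$ or the second moment of $\mu - m(X)$. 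The key structural fact to exploit is that $Z \mid X \sim \nn(m(X), \sigma^2 + A)$, so the shrinkage coefficient $A/(\sigma^2 + A)$ is the only place $\hat A$ enters, and the regret is smooth in this coefficient. This reduces everything to bounding $\EE[m,A]{(\hat A_{I_2} - A)^2}$ (or a closely related weighted version) conditionally on $\hat m_{I_1}$.

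Next I would analyze $\hat A_{I_2}$ directly from its definition~\eqref{eq:A_sure_homoskedastic}. Conditionally on $\hat m_{I_1}$, the summands $(\hat m_{I_1}(X_i) - Z_i)^2$ for $i \in I_2$ are i.i.d. with mean $\sigma^2 + A + \|\hat m_{I_1} - m\|_{L^2(\mathbb P^X)}^2$. Because $\hat A_{I_2}$ is (up to the positive-part truncation, which only helps by being a contraction toward the nonnegative target $A$) a sample average minus $\sigma^2$, its conditional variance is $O(1/n_2)$ provided the summands have bounded second moment. This is exactly where the assumption $\EE[m,A]{\hat m_{I_1}(X)^4 \mid \hat m_{I_1}} \leq M$ is used: combined with the Gaussian fourth moments of $Z \mid X$ and the Lipschitz/bounded structure of $m$, it guarantees $\EE{(\hat m_{I_1}(X_i) - Z_i)^4 \mid \hat m_{I_1}} < \infty$, so the variance of the fold-$I_2$ average is finite and scales as $1/n_2$. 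The positive-part operator is handled by the elementary inequality $(a_+ - b)^2 \leq (a-b)^2$ when $b \geq 0$. Taking the outer expectation over $\hat m_{I_1}$ and using the uniform fourth-moment bound $M$ then yields the $O(1/n_2)$ control uniformly, which is what lets the error be stated as an additive $O(1)/n_2$ term rather than something depending on the unknown parameters.

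The main obstacle I anticipate is controlling the interaction between the regression bias $\|\hat m_{I_1} - m\|^2$ and the variance-estimation error $\hat A_{I_2} - A$ in a way that keeps the leading term \emph{exactly} $\EE[m,A]{L(t^*_{\hat m_{n_1}, A}; m, A)}$ with no extra penalty of the same order. Because $\hat A_{I_2}$ systematically overshoots $A$ by roughly $\|\hat m_{I_1} - m\|^2$ (the bias of the regression contaminates the residual-variance estimate), I must verify that this bias, when passed through the quadratic-in-$\hat A$ regret term, contributes only at order $\|\hat m - m\|^4$ or lower — which is negligible relative to the $\|\hat m - m\|^2$ leading term — rather than at first order. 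Showing that the cross terms either vanish in expectation (by conditional independence of the two folds) or are absorbed into the $O(1/n_2)$ remainder, and carefully tracking the $\sigma^2, A$ dependence so that all constants hidden in $O(1)$ are genuinely free of the problem parameters, is the delicate part of the argument.
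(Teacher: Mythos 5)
Your proposal has the right outer skeleton --- condition on fold $I_1$ so that $\hat m_{I_1}$ is fixed, use the i.i.d.\ structure of fold $I_2$ together with the fourth-moment assumption to get $O(1/n_2)$ concentration of the sample average in \eqref{eq:A_sure_homoskedastic}, and handle the positive part by a contraction argument --- and this matches the reduction the paper performs (Proposition~\ref{prop:parametric_A} applied conditionally on $Z_{I_1},\mu_{I_1},X_{I_1}$). But your central structural claim is wrong, and wrong in a way that changes the conclusion. You expand the regret of $t^*_{\hat m_{I_1},\tilde A}$ as (regret at the true $A$) plus a \emph{nonnegative quadratic} in $\tilde A - A$, and accordingly reduce everything to bounding $\mathbb{E}_{m,A}\bigl[(\hat A_{I_2}-A)^2\bigr]$. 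This expansion is invalid: when $\hat m_{I_1}\neq m$, the true $A$ is \emph{not} the minimizer of $\tilde A \mapsto L(t^*_{\hat m_{I_1},\tilde A};m,A)$; the minimizer is $A_{\hat m_{I_1}} = A + \|\hat m_{I_1}-m\|^2_{L^2(\mathbb{P}^X)}$ (the paper's Proposition~\ref{prop:best_A}), so the regret relative to $t^*_{\hat m_{I_1},A}$ carries a nonzero \emph{linear} term in $\tilde A - A$. Moreover, as you yourself note, $\hat A_{I_2}$ is biased for $A$: its conditional mean is $A+\|\hat m_{I_1}-m\|^2$, hence $\mathbb{E}\bigl[(\hat A_{I_2}-A)^2\mid \hat m_{I_1}\bigr] \asymp \|\hat m_{I_1}-m\|^4 + O(1/n_2)$. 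Following your plan literally therefore produces a bound with an extra additive $\|\hat m_{I_1}-m\|^4$ term, which is \emph{not} $O(1/n_2)$ in general (for $\mathrm{Lip}([0,1]^d,L)$ with $d>2$ one has $\|\hat m - m\|^4 \asymp n^{-4/(2+d)} \gg 1/n$), i.e.\ a strictly weaker statement than the theorem. Your proposed rescue also fails: the cross term does not "vanish by conditional independence of the two folds" --- conditionally on fold one it is of order $-\|\hat m_{I_1}-m\|^4$, so it neither vanishes nor is absorbed into $O(1/n_2)$; it is precisely the negative contribution that, if tracked, cancels the quadratic overshoot.

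The fix --- and the paper's actual route --- is to recenter the analysis at $A_{\hat m_{I_1}}$ rather than at $A$. Two facts then close the argument. First, in the parametrization $\lambda = \sigma^2/(\sigma^2+\tilde A)$ the risk is an exact convex quadratic: $J(\lambda) = J(\lambda^*) + \mathbb{E}\bigl[(\hat m_{I_1}(X)-Z)^2\bigr](\lambda-\lambda^*)^2$ with minimizer $\lambda^* = \sigma^2/(\sigma^2+A_{\hat m_{I_1}})$; since \eqref{eq:A_sure_homoskedastic} is exactly a (truncated) unbiased estimate of $A_{\hat m_{I_1}}$, the i.i.d.\ concentration you describe applies to $\hat\lambda - \lambda^*$ (the map $\hat D \mapsto \sigma^2/(\sigma^2\vee\hat D)$ is Lipschitz because both arguments exceed $\sigma^2$, which also disposes of the truncation) and gives a conditional excess of $O(1/n_2)$ over the regret at $A_{\hat m_{I_1}}$. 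Second, because $A_{\hat m_{I_1}}$ is the minimizer, $L(t^*_{\hat m_{I_1},A_{\hat m_{I_1}}};m,A) \leq L(t^*_{\hat m_{I_1},A};m,A)$, which upgrades that bound to the stated comparison against $t^*_{\hat m_{n_1},A}$ at no cost. In short, the bias of $\hat A_{I_2}$ away from $A$ is not a nuisance to be shown negligible: it points exactly at the risk minimizer, and exploiting this fact is what yields the clean $O(1/n_2)$ remainder.
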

We emphasize that this result does not depend on \smash{$\hA$} from \eqref{eq:A_sure_homoskedastic}
being a particularly accurate estimate of $A$.
Rather, what's driving our result is the following fact:
If \eqref{eq:np-fay-herriot-model} holds, but we use~\eqref{eq:Bayes_rule} with \smash{$\tilde{m}(\cdot) \neq m(\cdot)$},
then the choice of \smash{$\tilde{A}$} that minimizes the Bayes risk among all estimators of the form
\smash{$\smash{t^*_{\tilde{m}, \tilde{A}}(\cdot,\cdot),\; \tilde{A} \geq 0}$} is not $A$, but rather (cf. derivation in Proposition~\ref{prop:best_A} of the Appendix)
$$\smash{A_{\tilde{m}} :=  \EE[m,A]{\p{\tilde{m}(X_{n+1}) - Z_{n+1}}^2} - \sigma^2 =  A + \EE[m,A]{\p{\tilde{m}(X_{n+1}) - m(X_{n+1})}^2}}.$$  
In other words, we're better off inflating the prior variance to account for the additional estimation error of \smash{$\tm(\cdot)$};
and this inflated prior variance is exactly what's captured in \eqref{eq:A_sure_homoskedastic}.

\section{Robustness to misspecification}
\label{sec:compound}
So far, our results and estimator apply to Robbins' model~\citep{robbins1964empirical} in
which~\eqref{eq:np-fay-herriot-model} holds and we are interested in a estimating a future $\mu_{n+1}$.
However, it is also of considerable interest to understand the behavior of empirical Bayes estimation
when the specification~\eqref{eq:np-fay-herriot-model} doesn't hold. In this section, we consider
properties of our estimator under the weaker assumption that we only have a generic data-generating
distribution for $(X_i,\, \mu_i, \, Z_i)$ of the form
\begin{equation}
\label{eq:np-fay-herriot-compound-model-variance}
(X_i, \mu_i) \sim \mathbb{P}^{(X_i,\mu_i)},\;\; 
\EE{Z_i \mid \mu_i, X_i} = \mu_i, \;\;
\Var{Z_i \mid \mu_i, X_i} = \sigma^2,
\end{equation}
and we seek to estimate the unknown $\mu_1, \, \dotsc, \, \mu_{n}$ underlying the
experiments we have data for. The distributions indexed by $i$ are assumed to be independent, but need not be identical. This setting is sometimes referred to as the compound
estimation problem \citep{brown2009nonparametric}.

We proceed with a cross-fold estimator, which we call EBCF (empirical Bayes with cross-fitting), as follows: We split the data as above, but now also consider flipping
the roles of $I_1$ and $I_2$ such that we can make predictions \smash{$\hmu_i$} for all $i = 1, \, ..., \, n$ as
$$ \hat{\mu}_i^{\text{EBCF}} = t^*_{\hat{m}_{I_1}, \hat{A}_{I_2} }(X_i, Z_i) \text{ for } i \in I_2 \;\;\;\& \;\;\; \hat{\mu}_i^{\text{EBCF}} = t^*_{\hat{m}_{I_2}, \hat{A}_{I_1} }(X_i, Z_i) \text{ for } i \in I_1. $$
This is a 2-fold cross-fitting scheme, which has been
fruitful in causal inference~\citep{chernozhukov2017double, nie2017learning, schick1986asymptotically} and
multiple testing~\citep{ignatiadis2016data, ignatiadis2018covariate}.
We also note that extensions to $k$-fold cross-fitting are immediate.

\paragraph{SURE for empirical Bayes}

The key property of our estimator that enables our approach to be robust outside of the strict model~\eqref{eq:np-fay-herriot-model}
is as follows. Let $\text{SURE}(\cdot)$ denote Stein's Unbiased Risk Estimate, a flexible risk estimator that is motivated by the
study of estimators for $\mu_i$ in the Gaussian model \smash{$Z_i \sim \nn(\mu_i, \, \sigma^2)$} \citep{stein1981estimation}.
Then, although our estimator was not originally motivated by SURE, one can algebraically verify that our estimator with
a plug-in choice of \smash{$\hA$} in fact minimizes SURE among all comparable shrinkage estimators
(the same holds true with $I_1, I_2$ flipped):
\begin{equation}
\label{eq:hat_A_sure}
\begin{aligned}
&\hat{A}_{I_2} = \p{\frac{1}{\abs{I_2}}\sum_{i \in I_2} \p{\hat{m}_{I_1}(X_{i})-Z_i}^2 - \sigma^2}_+ 
\Longleftrightarrow\;\; \hat{A}_{I_2} = \argminB_{A \geq 0}\cb{\SURE_{I_2}(A)},\\
&\text{where } \SURE_{I_2}(A) := \frac{1}{\abs{I_2}}\sum_{i \in I_2} \p{\sigma^2 + \frac{\sigma^4}{(A+\sigma^2)^2}(Z_i - \hat{m}_{I_1}(X_i))^2 - 2\frac{\sigma^4}{ A + \sigma^2}}.
\end{aligned}
\end{equation}
Furthermore, SURE has the following remarkable property in our setting:
For any data-generating process as in \eqref{eq:np-fay-herriot-compound-model-variance}
and any $A \geq 0$ \citep[see also][]{jiang2011best,kou2017optimal,xie2012sure},
\begin{equation}
\label{eq:sure_misspec}
\EE{\SURE_{I_2}(A) \mid X_{1:n}, \mu_{1:n}} = \frac{1}{\abs{I_2}}\sum_{i \in I_2} \EE{\p{\mu_i - t^*_{\hat{m}_{I_1}, A}(X_i, Z_i)}^2\mid X_{1:n},\mu_{1:n}},
\end{equation}
even when the distribution of $Z_i$ conditionally on $\mu_i$ and $X_i$ is not Gaussian.
Putting \eqref{eq:hat_A_sure} and \eqref{eq:sure_misspec} together, we find that we can argue using SURE
that our estimator minimizes an unbiased risk estimate for the generic specification
\eqref{eq:np-fay-herriot-compound-model-variance}, despite the fact that our procedure was
not directly motivated by SURE and SURE itself was only designed for Gaussian estimation.

\paragraph{Gaussian data with equal variance and James-Stein property} To derive a first consequence of the above, let us first focus on a special case of \eqref{eq:np-fay-herriot-compound-model-variance}, where $Z_i \mid (\mu_i, X_i) \sim \nn\p{\mu_i, \sigma^2}$. Then the EBCF estimate satisfies the James-Stein property of strictly dominating the direct estimator $Z_i$~\citep{james1961estimation}\footnote{\citet{li1984data} provide a similar result when $\hat{m}(\cdot)$ is a linear smoother.}. In other words, even if one has covariates $X_i$, which are uninformative, or one uses a really poor method for prediction, one still does no worse than just using \smash{$\hmu_i :=Z_i$}.
\begin{theo}[James-Stein property]
\label{theo:james_stein} Under the assumptions above and if $\abs{I_1}, \abs{I_2} \geq 5$, the proposed estimator $\hat{\mu}_i$ uniformly dominates the (conditional) maximum likelihood estimator $Z_i$, in other words for all $\mu_1,\dotsc,\mu_n$ and $X_1, \dotsc, X_n$, it holds that:
$$ \frac{1}{n}\sum_{i=1}^n \EE{ (\mu_i - \hat{\mu}_i^{\text{EBCF}})^2 \mid X_{1:n}, \mathbf{\mu}_{1:n}} <  \frac{1}{n}\sum_{i=1}^n \EE{ (\mu_i - Z_i)^2 \mid X_{1:n}, \mathbf{\mu}_{1:n}} = \sigma^2 $$
\end{theo}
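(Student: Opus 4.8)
The plan is to reduce the EBCF estimator, on each fold separately, to a \emph{positive-part James--Stein estimator} and then invoke the classical domination theorem, being careful that the hypothesis $|I_1|, |I_2| \geq 5$ is exactly what the argument requires. Throughout I condition on $X_{1:n}, \mu_{1:n}$, so the only randomness is in $Z_{1:n}$, with $Z_i \sim \mathcal{N}(\mu_i, \sigma^2)$ independent. By symmetry it suffices to prove the strict fold-wise bound
\[
\sum_{i \in I_2}\mathbb{E}\left[(\mu_i - \hat\mu_i^{\text{EBCF}})^2 \mid X_{1:n}, \mu_{1:n}\right] < |I_2|\,\sigma^2,
\]
together with its analogue $< |I_1|\,\sigma^2$ on $I_1$; summing the two and dividing by $n$ gives the claim, since the right-hand side of the theorem equals $\sigma^2 = (|I_1|+|I_2|)\sigma^2/n$.

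First I further condition on the first fold, i.e.\ on $\hat m_{I_1}$ (a function of $(X_j, Z_j)_{j \in I_1}$, hence independent of $Z_{I_2}$). Writing $W_i := Z_i - \hat m_{I_1}(X_i)$ and $\nu_i := \mu_i - \hat m_{I_1}(X_i)$ for $i \in I_2$, the residuals satisfy $W_i \mid (\hat m_{I_1}, X_{1:n}, \mu_{1:n}) \sim \mathcal{N}(\nu_i, \sigma^2)$, independently across $i \in I_2$, so the shrinkage targets $\hat m_{I_1}(X_i)$ are now fixed constants. A direct computation from $\hat A_{I_2} = (S - \sigma^2)_+$, with $S := \frac{1}{|I_2|}\sum_{i\in I_2}W_i^2$, gives $\frac{\sigma^2}{\sigma^2 + \hat A_{I_2}} = \min(\sigma^2/S,\,1)$, and therefore
\[
\mu_i - \hat\mu_i^{\text{EBCF}} = \nu_i - \left(1 - \frac{|I_2|\,\sigma^2}{\|W\|^2}\right)_{\!+} W_i, \qquad \|W\|^2 := \sum_{i\in I_2}W_i^2.
\]
In other words, on $I_2$ the estimator is \emph{exactly} the positive-part James--Stein estimator shrinking the $p := |I_2|$-dimensional Gaussian vector $W \sim \mathcal{N}_p(\nu, \sigma^2 I)$ toward the origin with shrinkage constant $c = |I_2|$.

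The crux is then the classical Stein's-lemma risk identity for estimators $(1 - c\sigma^2/\|W\|^2)W$: the (conditional) risk equals $p\sigma^2 - \sigma^4\left(2(p-2)c - c^2\right)\mathbb{E}[\|W\|^{-2}]$, which is strictly below $p\sigma^2$ precisely when $0 < c < 2(p-2)$, and the positive-part truncation only reduces risk further. Here $p = c = |I_2|$, so the condition $c < 2(p-2)$ reads $|I_2| < 2(|I_2|-2)$, i.e.\ $|I_2| > 4$ --- this is exactly where $|I_2| \geq 5$ enters. Since $\mathbb{E}[\|W\|^{-2}] > 0$ strictly, we obtain the strict conditional bound $\sum_{i\in I_2}\mathbb{E}[(\nu_i - \hat\nu_i)^2 \mid \hat m_{I_1}, X_{1:n}, \mu_{1:n}] < |I_2|\,\sigma^2$ for \emph{every} realization of $\hat m_{I_1}$; since $\mu_i - \hat\mu_i^{\text{EBCF}} = \nu_i - \hat\nu_i$, integrating out the first fold by the tower property preserves strictness and yields the displayed bound on $I_2$, and flipping the roles of the folds (using $|I_1| \geq 5$) gives the bound on $I_1$.

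I expect the main difficulty to be bookkeeping rather than conceptual: one must check that the data-driven, coordinate-varying target $\hat m_{I_1}(X_i)$ does not break the James--Stein reduction. This is handled by conditioning on the first fold, which freezes the targets into fixed constants and turns the problem into shrinkage of $W$ toward a known vector, and by verifying that the SURE-optimal $\hat A_{I_2}$ of~\eqref{eq:hat_A_sure} induces shrinkage constant $c = |I_2|$ rather than the oracle value $|I_2|-2$ --- a small but decisive suboptimality that is precisely why domination requires $|I_2| \geq 5$ instead of the usual $|I_2| \geq 3$.
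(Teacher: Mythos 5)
Your proposal is correct and follows essentially the same route as the paper's proof: condition on the first fold so that the shrinkage targets $\hat m_{I_1}(X_i)$ become fixed constants, observe that the SURE-derived $\hat A_{I_2}$ makes the EBCF estimator on $I_2$ exactly a positive-part James--Stein estimator shrinking $W \sim \mathcal{N}_{|I_2|}(\nu, \sigma^2 I)$ toward $\hat m_{I_1}(X_{I_2})$ with shrinkage constant $a = |I_2|$, invoke the classical risk bound $|I_2|\sigma^2 - a\sigma^4\left[2(|I_2|-2)-a\right]\EE{\|W\|^{-2}}$ (which is why $|I_2|\geq 5$ rather than $3$ is needed, since $a=|I_2|$ rather than $|I_2|-2$), and then integrate out the first fold and flip the roles of $I_1$ and $I_2$. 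This matches the paper's Lemma~\ref{lemm:js_loc_shift} and its application step for step, including the observation about where the hypothesis $|I_1|,|I_2|\geq 5$ enters.
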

\paragraph{Non-Gaussian data with equal variance}
Next we drop the Gaussianity assumption, and consider the model~\eqref{eq:np-fay-herriot-compound-model-variance}
in full generality. We use properties of SURE outlined above to establish the following:
\begin{theo}
\label{theo:sure_equal_variance}
Assume the pairs $(X_i, Z_i)_{1 \leq i \leq n}$ are independent and satisfy~\eqref{eq:np-fay-herriot-compound-model-variance}. Furthermore assume that there exist $\Gamma, M < \infty$ such that $\sup_{i} \EE{Z_i^4 \mid \mu_i, X_i} \leq \Gamma^4$ and that $\sup_i \abs{\mu_i} \leq M$, $\sup_{x} \abs{\hat{m}_{I_1}(x)} \leq M$ almost surely. Then (the analogous claim holds also with $I_1, I_2$ flipped):
{\small
$$
\begin{aligned}
\sup_{A \geq 0}\cb{ \frac{1}{\abs{I_2}}\sum_{i \in I_2} \EE{ \p{\mu_i - \hat{\mu}_i^{\text{EBCF}}}^2  -  \p{\mu_i - t^*_{\hat{m}_{I_{1}},A}(X_i, Z_i)}^2 \cond X_{1:n}, \mathbf{\mu}_{1:n}, Z_{I_{1}}}  } \leq  O\p{ \frac{1}{\sqrt{\abs{I_2}}}} 
\end{aligned}
$$}
\end{theo}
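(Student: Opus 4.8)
The plan is to recognize the claim as an oracle inequality and to reduce it to the uniform control of a few scalar sample averages. Write $\mathcal{F} := (X_{1:n}, \mu_{1:n}, Z_{I_1})$ for the conditioning event. Given $\mathcal{F}$ the function $\hat{m}_{I_1}$ is fixed, and by the assumed independence across $i$ the observations $(Z_i)_{i \in I_2}$ remain independent with the conditional law of \eqref{eq:np-fay-herriot-compound-model-variance}; in particular the unbiasedness identity \eqref{eq:sure_misspec} continues to hold after additionally conditioning on $Z_{I_1}$, by cross-fold independence. Let $r(A) := \frac{1}{\abs{I_2}}\sum_{i\in I_2}\EE{(\mu_i - t^*_{\hat{m}_{I_1},A}(X_i,Z_i))^2\mid \mathcal{F}}$ denote the deterministic conditional risk and $R_{I_2}(A) := \frac{1}{\abs{I_2}}\sum_{i\in I_2}(\mu_i - t^*_{\hat{m}_{I_1},A}(X_i,Z_i))^2$ the empirical loss. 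Since $\hat{\mu}_i^{\text{EBCF}} = t^*_{\hat{m}_{I_1},\hat{A}_{I_2}}(X_i,Z_i)$ and the first term is free of the supremum variable, the left-hand side of the theorem equals $\EE{R_{I_2}(\hat{A}_{I_2})\mid \mathcal{F}} - \inf_{A\ge 0} r(A)$. The structural observation that drives everything is the reparametrization $b := \sigma^2/(A+\sigma^2)\in(0,1]$: from \eqref{eq:Bayes_rule} the residual decomposes as $\mu_i - t^*_{\hat{m}_{I_1},A}(X_i,Z_i) = (\mu_i - Z_i) + b\,(Z_i - \hat{m}_{I_1}(X_i))$, so that $R_{I_2}$, its conditional mean $r$, and $\SURE_{I_2}$ from \eqref{eq:hat_A_sure} are all quadratic polynomials in $b$ on the compact interval $[0,1]$, with coefficients that are sample averages over $I_2$ (for $R_{I_2}$ and $\SURE_{I_2}$) or their conditional means (for $r$).

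Next I would run the standard SURE oracle-inequality chain. Fix any near-infimizer $b^*\in(0,1]$ of $r$, and let $\hat{b}$ be the SURE minimizer. Then $R_{I_2}(\hat{b}) - r(b^*) = [R_{I_2}(\hat{b}) - r(\hat{b})] + [r(\hat{b}) - \SURE_{I_2}(\hat{b})] + [\SURE_{I_2}(\hat{b}) - \SURE_{I_2}(b^*)] + [\SURE_{I_2}(b^*) - r(b^*)]$. The third bracket is $\le 0$ because $\hat{b}$ minimizes $\SURE_{I_2}$ over the admissible set, which is precisely \eqref{eq:hat_A_sure}; the last bracket has conditional mean zero by the unbiasedness identity \eqref{eq:sure_misspec}, which crucially requires no Gaussianity; and the first two brackets are dominated by the uniform deviations $\sup_{b\in[0,1]}\lvert R_{I_2}(b) - r(b)\rvert$ and $\sup_{b\in[0,1]}\lvert \SURE_{I_2}(b) - r(b)\rvert$. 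Taking the conditional expectation $\EE{\cdot \mid \mathcal{F}}$ thus reduces the entire theorem to bounding these two suprema, after which letting $b^*$ approach the infimum handles the gap to $\inf_{A\ge0} r(A)$.

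Finally, because each of $R_{I_2}$, $r$, and $\SURE_{I_2}$ is quadratic in $b\in[0,1]$, each supremum is at most the sum of the absolute deviations of its three coefficients from their conditional means, and since $\lvert b\rvert,\lvert b^2\rvert\le 1$ no factor inflates. Each such coefficient is an average of independent, mean-bounded terms built from $(\mu_i - Z_i)^2$, $(\mu_i - Z_i)(Z_i - \hat{m}_{I_1}(X_i))$, and $(Z_i - \hat{m}_{I_1}(X_i))^2$, so $\EE{\lvert \bar{W} - \EE{\bar{W}}\rvert \mid \mathcal{F}} \le \sqrt{\Var{\bar{W}\mid\mathcal{F}}}$ and it suffices to bound the per-term variances. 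This is exactly where the hypotheses enter: $\sup_i \EE{Z_i^4\mid \mu_i,X_i}\le \Gamma^4$ together with $\sup_x\lvert \hat{m}_{I_1}(x)\rvert\le M$ bounds $\EE{(Z_i - \hat{m}_{I_1}(X_i))^4}$, while $\sup_i\lvert \mu_i\rvert\le M$ controls the cross-terms via Cauchy--Schwarz. Hence each coefficient deviates from its mean by $O(1/\sqrt{\abs{I_2}})$, giving $\EE{\sup_b\lvert R_{I_2}-r\rvert\mid\mathcal{F}}$ and $\EE{\sup_b\lvert \SURE_{I_2}-r\rvert\mid\mathcal{F}}$ of order $O(1/\sqrt{\abs{I_2}})$ and closing the argument.

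The step requiring the most care is the uniform-over-$A$ control, since $\hat{A}_{I_2}$ is data-dependent and a naive plug-in would leave an uncontrolled bias term. The quadratic-in-$b$ reparametrization is what makes this tractable: it collapses a supremum over the unbounded range $A\in[0,\infty)$ into the deviations of three scalar averages on $[0,1]$, avoiding any covering or chaining machinery and yielding the sharp $1/\sqrt{\abs{I_2}}$ rate directly. The remaining delicate points — that \eqref{eq:sure_misspec} survives the extra conditioning on $Z_{I_1}$, and that the stated moment hypotheses genuinely bound the variances of the cross-terms — are routine consequences of cross-fold independence and Cauchy--Schwarz, respectively.
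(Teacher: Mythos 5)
Your proposal is correct, and its overall skeleton is the same as the paper's: exploit the fact that $\hat{A}_{I_2}$ minimizes $\SURE_{I_2}$ (equation \eqref{eq:hat_A_sure}), reduce the oracle inequality to a uniform-in-$A$ deviation bound, and control that deviation using conditional variances of averages of terms built from $(Z_i-\hat{m}_{I_1}(X_i))^2$ and $(\mu_i-Z_i)(Z_i-\hat{m}_{I_1}(X_i))$, bounded via the fourth-moment and boundedness hypotheses. The genuine difference is in how uniformity over $A$ is achieved. You reparametrize by $b=\sigma^2/(A+\sigma^2)\in(0,1]$, observe that the empirical loss, its conditional mean, and $\SURE_{I_2}$ are all quadratics in $b$, and collapse the supremum into deviations of three scalar coefficients; this is clean, elementary, and entirely sufficient for the theorem as stated. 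The paper instead bounds $\sup_{A\ge 0}\lvert\SURE_{I_2}(A)-\ell_{I_2}(A)\rvert$ directly, splitting off an $A$-free term and handling the $A$-dependent term by rewriting the supremum over the shrinkage factors $A/(A+\sigma_i^2)$ as a supremum over monotone coefficient sequences, which equals a maximum of martingale partial sums and is then controlled by the $L^2$ maximal inequality (following Li and Xie--Kou--Brown). That extra machinery looks superfluous in the equal-variance case (where the shrinkage factor is a single common scalar and factors out, just as in your argument), but it is exactly what makes the paper's proof carry over verbatim to the heteroskedastic setting $\Var{Z_i\mid\mu_i,X_i}=\sigma_i^2$ claimed at the end of Section~\ref{sec:compound}: with unequal $\sigma_i^2$ there is no single $b$ parametrizing all residuals, so your quadratic-in-$b$ trick does not apply directly, whereas the monotone-rearrangement/martingale argument does. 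In short: your route buys simplicity and avoids maximal inequalities; the paper's buys the generalization to unequal variances.
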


\begin{coro}
\label{coro:equal_sample_size_compound}
Assume that $\abs{I_1}=\abs{I_2}=n/2$ and $(X_i, \mu_i, Z_i)$ are i.i.d. and satisfy the assumptions of Theorem~\ref{theo:sure_equal_variance}.
Then, the following holds, with $(X, \, \mu)$ a fresh draw from \eqref{eq:np-fay-herriot-compound-model-variance}:
\begin{equation}
\label{eq:equal_sample_size_compound}
 \frac{1}{n}\sum_{i=1}^n \EE{ \p{\mu_i - \hat{\mu}_i^{\text{EBCF}}}^2 } \leq \frac{\sigma^2 \EE{\p{\hat{m}_{n/2}(X) - \mu}^2}}{ \sigma^2 + \EE{\p{\hat{m}_{n/2}(X) - \mu}^2}} + O\p{ \frac{1}{\sqrt{n}}}.
\end{equation}
\end{coro}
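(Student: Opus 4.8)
The plan is to reduce Corollary~\ref{coro:equal_sample_size_compound} to Theorem~\ref{theo:sure_equal_variance} applied separately on each of the two folds, then to evaluate the oracle comparison term explicitly under the compound model~\eqref{eq:np-fay-herriot-compound-model-variance} and optimize over the (fixed) shrinkage level.

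First I would invoke Theorem~\ref{theo:sure_equal_variance} on fold $I_2$ for an arbitrary \emph{fixed} $A \geq 0$. Since the supremum over $A \geq 0$ of the conditional excess-risk difference is bounded by $O(1/\sqrt{\abs{I_2}})$, in particular for each fixed $A$ the difference is at most $O(1/\sqrt{\abs{I_2}})$ almost surely; taking an outer expectation via the tower property, and using $\abs{I_2} = n/2$, gives
\[
\frac{1}{\abs{I_2}}\sum_{i\in I_2}\EE{\p{\mu_i - \hat{\mu}_i^{\text{EBCF}}}^2} \leq \frac{1}{\abs{I_2}}\sum_{i\in I_2}\EE{\p{\mu_i - t^*_{\hat{m}_{I_1}, A}(X_i,Z_i)}^2} + O\p{1/\sqrt{n}}.
\]
The same argument with the roles of $I_1, I_2$ flipped yields the analogous bound on fold $I_1$ with $\hat{m}_{I_2}$ in place of $\hat{m}_{I_1}$.

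Second I would compute the oracle term on the right-hand side. Writing the shrinker of~\eqref{eq:Bayes_rule} as $t^*_{\hat{m}_{I_1}, A}(x,z) = (1-b)z + b\,\hat{m}_{I_1}(x)$ with $b = \sigma^2/(\sigma^2+A) \in (0,1]$, and decomposing $Z_i = \mu_i + \epsilon_i$ where under~\eqref{eq:np-fay-herriot-compound-model-variance} we have $\EE{\epsilon_i \mid \mu_i, X_i} = 0$ and $\Var{\epsilon_i \mid \mu_i, X_i} = \sigma^2$, the residual becomes $b\p{\mu_i - \hat{m}_{I_1}(X_i)} - (1-b)\epsilon_i$. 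Since $\hat{m}_{I_1}$ is independent of the fresh i.i.d.\ pairs $(X_i,\mu_i,Z_i)_{i\in I_2}$, the cross term vanishes in conditional expectation, and after the outer expectation I obtain
\[
\frac{1}{\abs{I_2}}\sum_{i\in I_2}\EE{\p{\mu_i - t^*_{\hat{m}_{I_1}, A}(X_i,Z_i)}^2} = b^2 D + (1-b)^2\sigma^2, \qquad D := \EE{\p{\hat{m}_{n/2}(X) - \mu}^2},
\]
where I used $\abs{I_1} = n/2$ so that $\hat{m}_{I_1}$ has the law of $\hat{m}_{n/2}$. Because $A$ (hence $b$) is a free deterministic parameter, I would then minimize $b^2 D + (1-b)^2\sigma^2$ over $b \in (0,1]$; the minimizer is $b = \sigma^2/(\sigma^2+D)$, i.e.\ the fixed choice $A = D$, with minimal value $\sigma^2 D/(\sigma^2+D)$ — exactly the leading term on the right of~\eqref{eq:equal_sample_size_compound}. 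Substituting this $A$ into the fold-$I_2$ bound, and likewise into the fold-$I_1$ bound (whose oracle term is the same $\sigma^2 D/(\sigma^2+D)$ since $\hat{m}_{I_2}$ also has the law of $\hat{m}_{n/2}$), and averaging the folds with weights $\abs{I_1}/n = \abs{I_2}/n = 1/2$ yields the claim.

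The routine parts are the bias--variance decomposition and the one-variable optimization. The only point requiring genuine care is the passage from the conditional bound of Theorem~\ref{theo:sure_equal_variance} to an unconditional one: I must check that its $O(1/\sqrt{\abs{I_2}})$ remainder is governed by a constant (depending only on $\Gamma, M, \sigma$) that is uniform over the conditioning variables $X_{1:n}, \mu_{1:n}, Z_{I_1}$, so that it survives the outer expectation, and that $A = D$ is admissible as a \emph{fixed}, non-data-dependent value in the ``for all $A$'' statement (it is, being a deterministic expectation). The conceptual heart is that the best fixed shrinkage level in the misspecified compound model is $A = D = \EE{(\hat{m}-\mu)^2}$, the mean-squared prediction error of the fitted regression, which mirrors the $A_{\tilde{m}}$ identity highlighted after Theorem~\ref{theo:optimal_A_datadriven}.
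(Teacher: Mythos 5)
Your proof is correct and follows essentially the same route as the paper's: invoke Theorem~\ref{theo:sure_equal_variance} on each fold, integrate the conditional bound (which is the same as pulling the infimum over $A$ outside the expectation), evaluate the oracle risk at the fixed choice $A = \EE{(\hat{m}_{n/2}(X)-\mu)^2}$ using the i.i.d.\ assumption, and average the two folds. The paper merely leaves the bias--variance computation and the one-variable optimization implicit in its choice of $A$, so the difference is one of explicitness, not of approach.
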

Here \smash{$\hat{m}_{n/2}(\cdot)$} is the fitted function based on $n/2$ samples $(X_i, Z_i)$.
To interpret this result, we note that when \smash{$\hat{m}(\cdot)$} can accurately capture $\mu_i$,
i.e., \smash{$\hat{m}(\cdot)$} is a good estimate of $m(\cdot)$ and $\mu_i$ can be well explained using the
available covariates $X_i$, the error in \eqref{eq:equal_sample_size_compound} essentially matches the error of
the direct regression-based method \smash{$\hmu_i := \hat{m}_{n/2}(X_i)$}. Conversely, when the error of
\smash{$\hat{m}(\cdot)$} for estimating \smash{$\mu_i$} is large, we recover the error $\sigma^2$ of the
simple estimator \smash{$\hmu_i := Z_i$}. But in the interesting regime where the mean-squared error of
\smash{$\hat{m}(\cdot)$} for \smash{$\mu_i$} is comparable to $\sigma^2$, we can do a much better job by
taking a convex combination of the regression prediction \smash{$\hat{m}_{n/2}(X_i)$} and $Z_i$, 
and the EBCF estimator automatically and robustly navigates this trade-off.

\paragraph{Non-Gaussian data with unequal variance:} Finally, we note that we may even drop the assumption of equal variance and assume each unit has its own (conditional) variance $\sigma_i^2$ in ~\eqref{eq:np-fay-herriot-compound-model-variance} rather than the same $\sigma^2$ for everyone. We may think of the Bayes estimator~\eqref{eq:Bayes_rule} as also being a function of $\sigma_i$, i.e. write it as \smash{$t^*_{m,A}(x,z, \sigma)$}. Then, the EBCF estimator takes the following form: For $i \in I_2$ we estimate $\mu_i$ by \smash{$t^*_{\hat{m}_{I_1}, \hat{A}_{I_2}}(X_i, Z_i, \sigma_i)$}. We get \smash{$\hat{m}_{I_1}$} by regression, while for \smash{$\hat{A}_{I_2}$}, we generalize~\eqref{eq:hat_A_sure}:
{\footnotesize
\begin{equation*}
\label{eq:A_sure_heteroskedastic}
\begin{aligned}
& \hat{A}_{I_2} = \argminB_{A \geq 0}\cb{\SURE_{I_2}(A)},\; \SURE_{I_2}(A) = \frac{1}{\abs{I_2}}\sum_{i \in I_2} \p{\sigma_i^2 + \frac{\sigma_i^4}{(A+\sigma_i^2)^2}(Z_i - \hat{m}_{I_1}(X_i))^2 - 2\frac{\sigma_i^4}{ A + \sigma_i^2}}
\end{aligned}
\end{equation*}}
The result of Theorem~\ref{theo:sure_equal_variance} (see Appendix~\ref{sec:SURE_results}) also holds in this case and we demonstrate the claims in the empirical application on the MovieLens dataset below.

\section{Empirical results}
For our empirical results we compare the following 4 estimation methods for $\mu_i$:
\textbf{a)} The unbiased estimator \smash{$\hmu_i := Z_i$},
\textbf{b)} the out-of-fold \footnote{By out-of-fold we mean that the regression prediction is the one used by 5-fold EBCF described below.} regression prediction \smash{$\hmu_i := \hat{m}(X_i)$}, where $\hat{m}$ is the fit from boosted regression trees, as implemented in XGBoost~\citep{chen2016xgboost} with number of iterations chosen by $5$-fold cross-validation and $\eta =0.1$ (weight with which new trees are added to the ensemble),
\textbf{c)} the empirical Bayes estimator~\eqref{eq:Bayes_rule} without covariates that shrinks $Z_i$ towards the grand average $\sum_{i=1}^n Z_i/n$,
with tuning parameters selected via SURE following~\citep{xie2012sure},
and \textbf{d)} the proposed EBCF (empirical Bayes with cross-fitting) method, with 5 folds used for cross-fitting and XGBoost as the regression learner (with cross-validation nested within cross-fitting).

\begin{figure}
    \centering
    \includegraphics[width=0.9\linewidth]{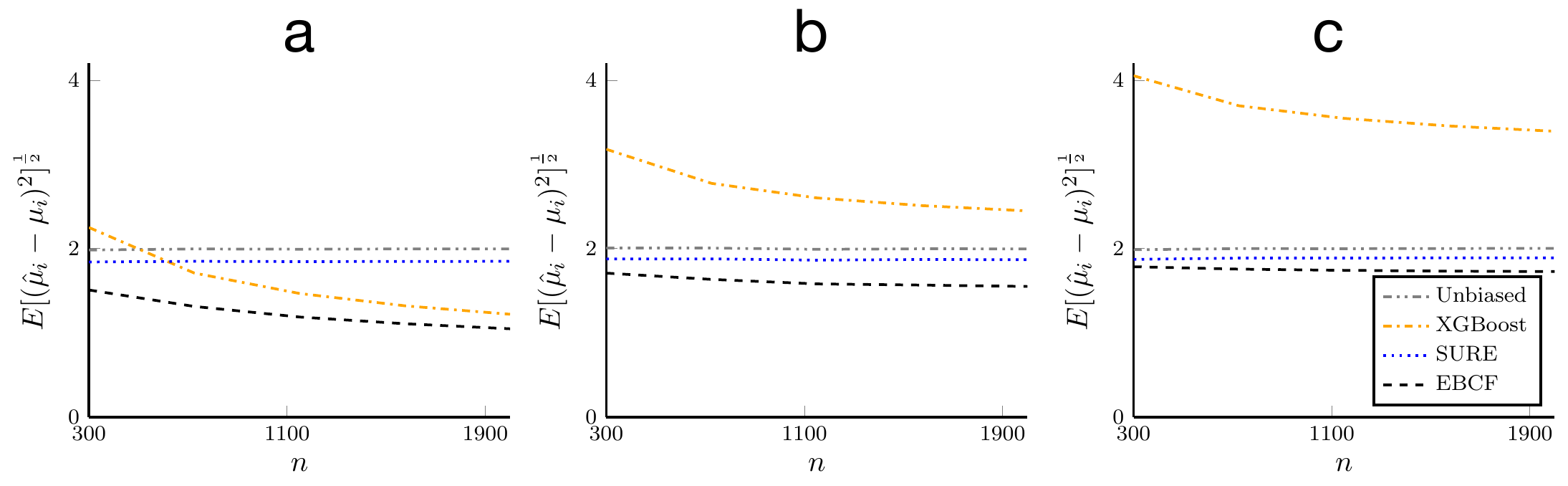}
    \caption{\textbf{Root mean squared error (RMSE) for estimating $\mu_i$ in model~\eqref{eq:np-fay-herriot-model}}. Results are shown as a function of $n$ for the four estimators described in the main text. \textbf{a)} Here we let $\sigma=2, A=0$ corresponding to the case of nonparametric regression.  In panel \textbf{b)}, we let $\sigma=\sqrt{A}=2.0$ corresponding to intermediate shrinkage and in panel \textbf{c)} we let $\sigma=2, \sqrt{A}=3$. The standard errors of all RMSEs are smaller or equal to 0.01.}
    \label{fig:simulations}
\end{figure}

\textbf{Synthetic data}: We generate data from model~\eqref{eq:np-fay-herriot-model} with $\mathbb P^X = U[0,1]^{15}$ and $m(\cdot)$ is the~\citet{friedman1991multivariate} function  $m(x)=10\sin(\pi x_1x_2) + 20(x_3 - 1/2)^2 + 10x_4 +5x_5$, and the last 10 coordinates are noise. Furthermore, we let $\sigma=2.0$ and vary $A \in \cb{0, 4, 9}$, mimicking the three cases in Figure~\ref{fig:np_vs_fh}, and we also vary $n$. Results are averaged over 100 simulations and shown in Figure~\ref{fig:simulations}. We make the following observation: The unbiased estimator $Z_i$ and the SURE estimator which shrinks towards the grand mean have constant mean squared error and results do not improve with increasing $n$. The XGBoost predictor improves with increasing $n$, since $m(\cdot)$ is estimated more accurately; indeed in panel a), if $\hat{m}(\cdot)$ would be exactly equal to $m(\cdot)$, then the error would be $0$. However, as seen in panels $b,c)$, when $A>0$, the mean squared error of XGBoost is lower bounded by $A$, even under perfect prediction of $m(\cdot)$. In contrast, EBCF always improves with $n$ by leveraging the improved predictions of XGBoost, and outperforms all other estimators, even in the case $A=0$ which corresponds to nonparametric regression.

\begin{figure}
    \centering
    \includegraphics[width=0.95\linewidth]{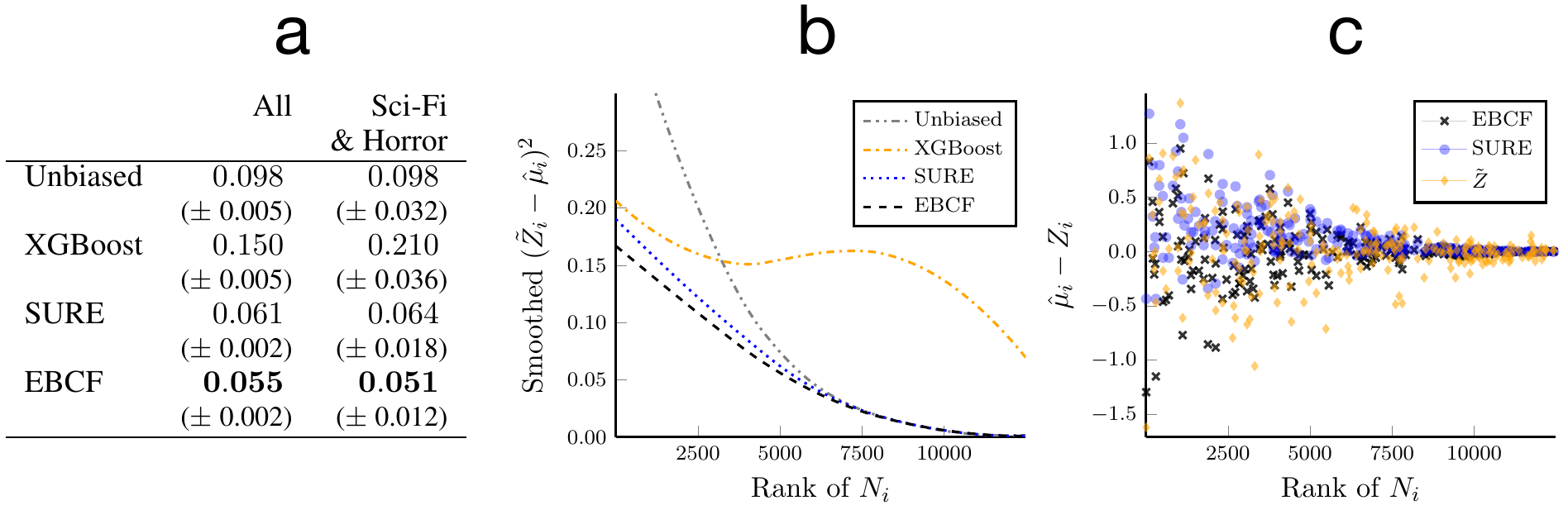}
    \caption{\textbf{EB analysis of the Movielens dataset for prediction of average movie rating. a)} Mean-squared error (MSE) $\smash{n^{-1} \sum_{i = 1}^n (\hat{\mu}_i-\tilde{Z}_i)^2}$ $(\pm\;2 \text{ standard errors of the MSE })$ of four estimators for the Movielens dataset (where $\tilde{Z}_i$ is the average rating computed from the heldout data with 90\% of users) for all movies, as well as the subset of movies that are classified as both Horror and Sci-Fi. \textbf{b)} LOESS smooth of mean squared error across all movies against the rank of $N_i$, where $N_i$ is the number of users that rated movie $i$ in the training set. \textbf{c)} Deviations of EBCF (empirical Bayes with cross-fitting) and SURE (Stein's unbiased risk estimate) predictions from the unbiased estimator $Z_i$ as a function of $N_i$ for all Horror \& Sci-Fi movies. We also show the ``true'' errors $\tilde{Z}_i - Z_i$.}
    \label{fig:movielens}
\end{figure}

\textbf{MovieLens data}~\citep{harper2016movielens}: Here we elaborate on the example from the introduction which aims to predict the average movie rating given ratings from a finite number of users. The MovieLens dataset consists of approximately 20 million ratings in $\cb{0, 0.5, \dotsc, 5}$ from 138,000 users applied to 27,000 movies. To demonstrate the applicability of our approach, when model~\eqref{eq:np-fay-herriot-model} does not necessarily hold, we randomly choose 10\% of all users and attempt to estimate the movie ratings from them. This corresponds to having a much smaller dataset. We then summarize the $i$-th movie, by $Z_i$, the average of the $N_i$ users (in the training dataset) that rated it. We further have covariates $X_i \in \RR^{20}$ that include $N_i$, the year the movie was released, as well as indicators of 18 genres to which the movie may belong (action, comedy, etc.). We posit that $Z_i \mid \mu_i, X_i \sim (\mu_i, \sigma^2/N_i)$ and want to estimate $\mu_i$.\footnote{We replace $\sigma^2$ by $\hat{\sigma}^2 \doteq 0.94$, the average of the sample standard deviations across all movies.}
As our pseudo ground truth for movie $i$ we use \smash{$\tilde{Z}_i$}, the average movie rating among the remaining $90\%$ of users and then report the error \smash{$\sum_{i=1}^n (\tilde{Z}_i - \hat{\mu}_i)^2/n$}, where $n$ is the total number of movies.\footnote{We filter movies and keep only movies with at least 3 ratings in the training set and 11 in the validation set.}

The average error across all movies is shown in Figure~\ref{fig:movielens}a; here the XGBoost predictor performs worst, followed by the unbiased estimator $Z_i$. Instead, the two EB approaches perform a lot better with EBCF scoring the lowest error. The same is true when comparing only the 253 movies with genre tags for both horror and Sci-Fi.
In panel b), we show the relationship between the error \smash{$(\tilde{Z}_i - \hat{\mu}_i)^2$}  and the rank of the per-movie
number of reviews $N_i$ using a LOESS smoother~\citep{cleveland1988locally}. We observe that the 3 estimators that
use $Z_i$, do a perfect job for large $N_i$ and a worse job for smaller $N_i$. In particular, the error of $Z_i$ blows up
at small $N_i$, and the error gains of EBCF occur precisely at low sample sizes. On the other hand, the XGBoost prediction
has an error that does not get reduced by larger $N$, but is competitive at small $N$. Panel c) shows \smash{$\hat{\mu}_i - Z_i$}
for the 253 predictions of EBCF and SURE for horror/Sci-Fi movies as a function of the rank of $N_i$. For large $N_i$, again
both EB estimators agree with the unbiased estimator. However, for small $N_i$, it appears that most Sci-Fi/Horror
movies are worse than the average movie, and EB without covariates tries to correct for this by assigning them a higher rating. Conversely,
EBCF automatically realizes that these movies tend to get low ratings, and pulls the unbiased estimator $Z_i$ further down.

 % \begin{tabular}{lrr}
 % & All & Sci-Fi \\
 % &  & \& Horror \\
%\hline
%Unbiased  & $0.098\;$  & $0.098\;$ \\
%& \footnotesize{($\pm$ 0.005)} &  \footnotesize{($\pm$ 0.032)} \\
%XGBoost & $0.150\;$  & $0.210\;$\\
%&  \footnotesize{($\pm$ 0.005)}  &  \footnotesize{($\pm$ 0.036)}\\
%SURE & $0.061\;$  & $0.064\;$ \\
%%& \footnotesize{($\pm$ 0.002)} & \footnotesize{($\pm$ 0.018)} \\
%EBCF   &   $\mathbf{0.055}\;$  &  $\mathbf{0.051\;}$ \\
%& \footnotesize{($\pm$ 0.002)} & \footnotesize{($\pm$ 0.012)}\\ \hline
%\end{tabular}

\textbf{Communities and Crimes data} from the UCI repository~\citep{Dua2019, redmond2002data}: The dataset provides information about the number of crimes in multiple US communities as compiled by the FBI Uniform Crime Reporting program in 1995. Our task is to predict the non-violent crime rate $p_i$ of community $i$, defined as $p_i :=  \text{Crimes in community }i/\text{Population } i$, for each of  $n=2118$ communities\footnote{We filter out communities with a missing number of non-violent crimes.}. We  observe a dataset in which the population of each community is down-sampled to $B=200$ as
$$ C_i \sim \text{Hypergeometric}(B, \;\text{Crimes in community }i,\; \text{Population } i)$$
We seek to predict $p_i$ based on $C_i$ and covariates $X_i \in \RR^{74}$ which include all unnormalized, numeric predictive covariates in the UCI data set description (after removing covariates with missing entries) and comprise features derived from Census and law enforcement data, such as percentage of people that are employed and percentage of police officers assigned to drug units. We note that the hypergeometric subsampling makes the estimation task harder and also provides us with pseudo ground truth $p_i$; cf.~\citet{wager2015efficiency} for further motivation of such down-sampling.

\begin{SCtable}
  \begin{tabular}{lrr}
  & $B=200$ &$B=500$ \\
  & MSE ($\times 10^6$)  & MSE ($\times 10^6$) \\
\hline
Unbiased  & $223.9$ \footnotesize{($\pm 16.8$)} & $92.2$ \footnotesize{($\pm 7.1$)}\\
XGBoost & $398.0$  \footnotesize{($\pm 81.8$)} & $370.2$ \footnotesize{($\pm 78.6$)}\\
SURE & $184.2$ \footnotesize{($\pm 18.9$)} & $85.6$ \footnotesize{($\pm 7.2$)}\\
EBCF   &  $\mathbf{152.0}$ \footnotesize{($\pm 22.2$)} & $\mathbf{78.5}$ \footnotesize{($\pm 10.3$)}\\ \hline
\end{tabular}
 \caption{\textbf{EB analysis of the Communities and Crimes dataset.} The table reports the mean-squared error $(\pm\;2 \text{ standard errors})$ of four different estimators for the non-violent crime rate. The columns correspond to down-sampling the dataset to a population of ${B=200}$ or $B=500$ for each community.}
\label{tab:communities_and_crimes}
\end{SCtable}

The problem may be cast into the setting of this paper by defining  \smash{$Z_i := \sqrt{C_i/B}$}. Then, by a variance stabilizing argument, it follows that \smash{$Z_i \mathrel{\dot\sim} \p{\sqrt{p_i}, 1/(4\cdot B)}$} and we may apply the same methods as in the preceding examples to estimate \smash{$\mu_i := \sqrt{p}_i$} by \smash{$\hat{\mu}_i$}. After transforming the estimates back to the original scale through $\hat{p}_i = \hat{\mu}_i^2$, we report the error \smash{$\sum_{i=1}^n (p_i - \hat{p}_i)^2/n$}, where $n$ is the number of communities analyzed.
Table~\ref{tab:communities_and_crimes} shows the results of this analysis, as well as the same analysis repeated for $B=500$. EBCF shows promising performance compared to the other baselines for both $B$. As we decrease the amount of downsampling from $B=200$ to $B=500$, we see that methods that depend on $Z_i$ (unbiased, SURE and EBCF) improve a lot, while XGBoost does not.

\section{Discussion}

Empirical Bayes is a powerful framework for pooling information across many experiments,
and improve the precision of our inference about each experiment on its own \citep{efron2010large,robbins1964empirical}.
Existing empirical Bayes methods, however, do not allow the analyst to leverage covariate information
unless they assume a rigid parametric model as in \citet{fay1979estimates}, or are willing to commit to a specific
end-to-end estimation strategy as in, e.g., \citet{opsomer2008non}. In contrast, the approach proposed here allows
the analyst to perform covariate-powered empirical Bayes estimation on the basis of any black-box predictive model,
and has strong formal properties whether or not the model \eqref{eq:np-fay-herriot-model} used to motivate our procedure
is well specified. Our approach may be extended in future work by considering generalizations of~\eqref{eq:np-fay-herriot-model}, such as covariate-based modulation of the prior variance, i.e., \smash{$\mu_i \cond X_i \sim \mathcal{N}(m(X_i),A(X_i))$}. The working assumption of a normal prior could also be replaced by heavy-tailed priors~\citep*{zhu2018} or priors with a point mass at zero.

The prevalence of settings where we need to analyze results from many loosely
related experiments seems only destined to grow, and we believe that empirical Bayes methods that allow for various
forms of structured side information hold promise for fruitful application across several different areas.

\vspace{0.2cm}
\subsubsection*{Code availability and reproducibility}
\vspace{-0.3cm}
The proposed EBCF (empirical Bayes with cross-fitting) method has been implemented in EBayes.jl (\url{https://github.com/nignatiadis/EBayes.jl}), a package written in the Julia language~\citep{bezanson2017julia}. Dependencies of EBayes.jl include MLJ.jl~\citep{MLJ}, Optim.jl~\citep{mogensen2018optim} and Distributions.jl~\citep{ besanccon2019distributions}. We also provide a Github repository (\url{https://github.com/nignatiadis/EBCrossFitPaper}) with code to reproduce all empirical results in this paper, including a specification for downloading the dependencies and datasets.

\newpage
\subsubsection*{Acknowledgments}
\vspace{-0.3cm}
The authors are grateful for enlightening conversations with Brad Efron,
Guido Imbens, Panagiotis Lolas and Paris Syminelakis. This research was funded by a gift from Google.

\bibliographystyle{plainnat}
\bibliography{covariate_ebayes}

\begin{thebibliography}{66}
\providecommand{\natexlab}[1]{#1}
\providecommand{\url}[1]{\texttt{#1}}
\expandafter\ifx\csname urlstyle\endcsname\relax
  \providecommand{\doi}[1]{doi: #1}\else
  \providecommand{\doi}{doi: \begingroup \urlstyle{rm}\Url}\fi

\bibitem[Abadie and Kasy(2018)]{abadie2018choosing}
Alberto Abadie and Maximilian Kasy.
\newblock Choosing among regularized estimators in empirical economics: The
  risk of machine learning.
\newblock \emph{Review of Economics and Statistics}, \penalty0 (0), 2018.

\bibitem[Agarwal et~al.(2009)Agarwal, Wainwright, Bartlett, and
  Ravikumar]{agarwal2009information}
Alekh Agarwal, Martin~J Wainwright, Peter~L Bartlett, and Pradeep~K Ravikumar.
\newblock Information-theoretic lower bounds on the oracle complexity of convex
  optimization.
\newblock In \emph{Advances in Neural Information Processing Systems}, pages
  1--9, 2009.

\bibitem[Banerjee et~al.(2018)Banerjee, Mukherjee, and
  Sun]{banerjee2018adaptive}
Trambak Banerjee, Gourab Mukherjee, and Wenguang Sun.
\newblock Adaptive sparse estimation with side information.
\newblock \emph{arXiv preprint arXiv:1811.11930}, 2018.

\bibitem[Baranchik(1964)]{baranchik1964multiple}
Alvin~J Baranchik.
\newblock Multiple regression and estimation of the mean of a multivariate
  normal distribution.
\newblock Technical report, Stanford University, 1964.

\bibitem[Benhaddou and Pensky(2013)]{benhaddou2013adaptive}
Rida Benhaddou and Marianna Pensky.
\newblock Adaptive nonparametric empirical {B}ayes estimation via wavelet
  series: The minimax study.
\newblock \emph{Journal of Statistical Planning and Inference}, 143\penalty0
  (10):\penalty0 1672--1688, 2013.

\bibitem[Besan{\c{c}}on et~al.(2019)Besan{\c{c}}on, Anthoff, Arslan, Byrne,
  Lin, Papamarkou, and Pearson]{besanccon2019distributions}
Mathieu Besan{\c{c}}on, David Anthoff, Alex Arslan, Simon Byrne, Dahua Lin,
  Theodore Papamarkou, and John Pearson.
\newblock Distributions. jl: Definition and modeling of probability
  distributions in the {J}ulia{S}tats ecosystem.
\newblock \emph{arXiv preprint arXiv:1907.08611}, 2019.

\bibitem[Bezanson et~al.(2017)Bezanson, Edelman, Karpinski, and
  Shah]{bezanson2017julia}
Jeff Bezanson, Alan Edelman, Stefan Karpinski, and Viral~B Shah.
\newblock Julia: A fresh approach to numerical computing.
\newblock \emph{SIAM review}, 59\penalty0 (1):\penalty0 65--98, 2017.

\bibitem[Blaom et~al.(2019)Blaom, Kiraly, Lienart, and Vollmer]{MLJ}
Anthony Blaom, Franz Kiraly, Thibaut Lienart, and Sebastian Vollmer.
\newblock alan-turing-institute/{MLJ}.jl: v0.5.3, November 2019.
\newblock URL \url{https://doi.org/10.5281/zenodo.3541506}.

\bibitem[Brown(1971)]{brown1971admissible}
Lawrence~D Brown.
\newblock Admissible estimators, recurrent diffusions, and insoluble boundary
  value problems.
\newblock \emph{The Annals of Mathematical Statistics}, 42\penalty0
  (3):\penalty0 855--903, 1971.

\bibitem[Brown and Greenshtein(2009)]{brown2009nonparametric}
Lawrence~D Brown and Eitan Greenshtein.
\newblock Nonparametric empirical {B}ayes and compound decision approaches to
  estimation of a high-dimensional vector of normal means.
\newblock \emph{The Annals of Statistics}, pages 1685--1704, 2009.

\bibitem[Brown and Levine(2007)]{brown2007variance}
Lawrence~D Brown and Michael Levine.
\newblock Variance estimation in nonparametric regression via the difference
  sequence method.
\newblock \emph{The {A}nnals of {S}tatistics}, 35\penalty0 (5):\penalty0
  2219--2232, 2007.

\bibitem[Chen and Guestrin(2016)]{chen2016xgboost}
Tianqi Chen and Carlos Guestrin.
\newblock {XGB}oost: A scalable tree boosting system.
\newblock In \emph{Proceedings of the 22nd {ACM} {SIGKDD} international
  conference on knowledge discovery and data mining}, pages 785--794. ACM,
  2016.

\bibitem[Chernozhukov et~al.(2017)Chernozhukov, Chetverikov, Demirer, Duflo,
  Hansen, Newey, and Robins]{chernozhukov2017double}
Victor Chernozhukov, Denis Chetverikov, Mert Demirer, Esther Duflo, Christian
  Hansen, Whitney Newey, and James Robins.
\newblock Double/debiased machine learning for treatment and structural
  parameters.
\newblock \emph{The Econometrics Journal}, 2017.

\bibitem[Cleveland and Devlin(1988)]{cleveland1988locally}
William~S Cleveland and Susan~J Devlin.
\newblock Locally weighted regression: an approach to regression analysis by
  local fitting.
\newblock \emph{Journal of the American statistical association}, 83\penalty0
  (403):\penalty0 596--610, 1988.

\bibitem[Coey and Cunningham(2019)]{coey2019improving}
Dominic Coey and Tom Cunningham.
\newblock Improving treatment effect estimators through experiment splitting.
\newblock In \emph{The World Wide Web Conference}, pages 285--295. ACM, 2019.

\bibitem[Cohen et~al.(2013)Cohen, Greenshtein, and Ritov]{cohen2013empirical}
Noam Cohen, Eitan Greenshtein, and Ya’acov Ritov.
\newblock Empirical {B}ayes in the presence of explanatory variables.
\newblock \emph{Statistica Sinica}, 23:\penalty0 333--357, 2013.

\bibitem[Dua and Graff(2017)]{Dua2019}
Dheeru Dua and Casey Graff.
\newblock {UCI} machine learning repository, 2017.
\newblock URL \url{http://archive.ics.uci.edu/ml}.

\bibitem[Duchi(2019)]{duchi2019lecture}
John Duchi.
\newblock \emph{Lecture notes for {S}tatistics 311/{E}lectrical {E}ngineering
  377}.
\newblock \url{https://stanford.edu/class/stats311/lecture-notes.pdf}. Last
  visited on March 13, 2019.

\bibitem[Efron et~al.(2001)Efron, Tibshirani, Storey, and
  Tusher]{efron2001empirical}
B.~Efron, R.~Tibshirani, J.D. Storey, and V.~Tusher.
\newblock Empirical {B}ayes analysis of a microarray experiment.
\newblock \emph{Journal of the American Statistical Association}, 96\penalty0
  (456):\penalty0 1151--1160, 2001.

\bibitem[Efron(2010)]{efron2010large}
Bradley Efron.
\newblock \emph{Large-Scale Inference: {E}mpirical {B}ayes Methods for
  Estimation, Testing, and Prediction}.
\newblock Cambridge University Press, 2010.

\bibitem[Efron(2011)]{efron2011tweedie}
Bradley Efron.
\newblock Tweedie's formula and selection bias.
\newblock \emph{Journal of the American Statistical Association}, 106\penalty0
  (496):\penalty0 1602--1614, 2011.

\bibitem[Efron and Morris(1973)]{efron1973stein}
Bradley Efron and Carl Morris.
\newblock Stein's estimation rule and its competitors—an empirical {B}ayes
  approach.
\newblock \emph{Journal of the American Statistical Association}, 68\penalty0
  (341):\penalty0 117--130, 1973.

\bibitem[Fay and Herriot(1979)]{fay1979estimates}
Robert~E Fay and Roger~A Herriot.
\newblock Estimates of income for small places: an application of
  {J}ames-{S}tein procedures to census data.
\newblock \emph{Journal of the American Statistical Association}, 74\penalty0
  (366a):\penalty0 269--277, 1979.

\bibitem[Friedman(1991)]{friedman1991multivariate}
Jerome~H Friedman.
\newblock Multivariate adaptive regression splines.
\newblock \emph{The {A}nnals of {S}tatistics}, 19\penalty0 (1):\penalty0 1--67,
  1991.

\bibitem[Green and Strawderman(1991)]{green1991james}
Edwin~J Green and William~E Strawderman.
\newblock A {J}ames-{S}tein type estimator for combining unbiased and possibly
  biased estimators.
\newblock \emph{Journal of the American Statistical Association}, 86\penalty0
  (416):\penalty0 1001--1006, 1991.

\bibitem[Gy{\"o}rfi et~al.(2006)Gy{\"o}rfi, Kohler, Krzyzak, and
  Walk]{gyorfi2006distribution}
L{\'a}szl{\'o} Gy{\"o}rfi, Michael Kohler, Adam Krzyzak, and Harro Walk.
\newblock \emph{A distribution-free theory of nonparametric regression}.
\newblock Springer Science \& Business Media, 2006.

\bibitem[Harper and Konstan(2016)]{harper2016movielens}
F~Maxwell Harper and Joseph~A Konstan.
\newblock The {M}ovie{L}ens datasets: History and context.
\newblock \emph{{ACM} Transactions on Interactive Intelligent Systems
  {(TIIS)})}, 5\penalty0 (4):\penalty0 19, 2016.

\bibitem[Ibragimov and Hasminskii(1981)]{ibragimov2013statistical}
Ildar~Abdulovic Ibragimov and Rafail~Zalmanovich Hasminskii.
\newblock \emph{Statistical estimation: asymptotic theory}.
\newblock Springer Verlag, 1981.

\bibitem[Ignatiadis and Huber(2018)]{ignatiadis2018covariate}
Nikolaos Ignatiadis and Wolfgang Huber.
\newblock Covariate powered cross-weighted multiple testing.
\newblock \emph{arXiv:1701.05179}, 2018.

\bibitem[Ignatiadis and Wager(2019)]{ignatiadis2019bias}
Nikolaos Ignatiadis and Stefan Wager.
\newblock Bias-aware confidence intervals for empirical {B}ayes analysis.
\newblock \emph{arXiv preprint arXiv:1902.02774}, 2019.

\bibitem[Ignatiadis et~al.(2016)Ignatiadis, Klaus, Zaugg, and
  Huber]{ignatiadis2016data}
Nikolaos Ignatiadis, Bernd Klaus, Judith~B Zaugg, and Wolfgang Huber.
\newblock Data-driven hypothesis weighting increases detection power in
  genome-scale multiple testing.
\newblock \emph{Nature methods}, 13\penalty0 (7):\penalty0 577, 2016.

\bibitem[Ignatiadis et~al.(2019)Ignatiadis, Saha, Sun, and
  Muralidharan]{ignatiadis2019empirical}
Nikolaos Ignatiadis, Sujayam Saha, Dennis~L Sun, and Omkar Muralidharan.
\newblock Empirical {B}ayes mean estimation with nonparametric errors via order
  statistic regression.
\newblock \emph{arXiv preprint arXiv:1911.05970}, 2019.

\bibitem[James and Stein(1961)]{james1961estimation}
Willard James and Charles Stein.
\newblock Estimation with quadratic loss.
\newblock In \emph{Proceedings of the fourth Berkeley symposium on mathematical
  statistics and probability}, volume~1, pages 361--379, 1961.

\bibitem[Janson et~al.(2017)Janson, Barber, and Candes]{janson2017eigenprism}
Lucas Janson, Rina~Foygel Barber, and Emmanuel Candes.
\newblock {E}igen{P}rism: inference for high dimensional signal-to-noise
  ratios.
\newblock \emph{Journal of the Royal Statistical Society: Series B (Statistical
  Methodology)}, 79\penalty0 (4):\penalty0 1037--1065, 2017.

\bibitem[Jiang et~al.(2011)Jiang, Nguyen, and Rao]{jiang2011best}
Jiming Jiang, Thuan Nguyen, and J~Sunil Rao.
\newblock Best predictive small area estimation.
\newblock \emph{Journal of the American Statistical Association}, 106\penalty0
  (494):\penalty0 732--745, 2011.

\bibitem[Jiang and Zhang(2009)]{jiang2009general}
Wenhua Jiang and Cun-Hui Zhang.
\newblock General maximum likelihood empirical {B}ayes estimation of normal
  means.
\newblock \emph{The Annals of Statistics}, 37\penalty0 (4):\penalty0
  1647--1684, 2009.

\bibitem[Johnstone and Silverman(2004)]{johnstone2004needles}
Iain~M Johnstone and Bernard~W Silverman.
\newblock Needles and straw in haystacks: {E}mpirical {B}ayes estimates of
  possibly sparse sequences.
\newblock \emph{The Annals of Statistics}, 32\penalty0 (4):\penalty0
  1594--1649, 2004.

\bibitem[Kou and Yang(2017)]{kou2017optimal}
SC~Kou and Justin~J Yang.
\newblock Optimal shrinkage estimation in heteroscedastic hierarchical linear
  models.
\newblock In \emph{Big and Complex Data Analysis}, pages 249--284. Springer,
  2017.

\bibitem[Li et~al.(2005)Li, Gupta, and Liese]{li2005convergence}
Jianjun Li, Shanti~S Gupta, and Friedrich Liese.
\newblock Convergence rates of empirical {B}ayes estimation in exponential
  family.
\newblock \emph{Journal of statistical planning and inference}, 131\penalty0
  (1):\penalty0 101--115, 2005.

\bibitem[Li(1986)]{li1986asymptotic}
Ker-Chau Li.
\newblock Asymptotic optimality of {$C_L$} and generalized cross-validation in
  ridge regression with application to spline smoothing.
\newblock \emph{The Annals of Statistics}, 14\penalty0 (3):\penalty0
  1101--1112, 1986.

\bibitem[Li and Hwang(1984)]{li1984data}
Ker-Chau Li and Jiunn~Tzon Hwang.
\newblock The data-smoothing aspect of {S}tein estimates.
\newblock \emph{The Annals of Statistics}, 12\penalty0 (3):\penalty0 887--897,
  1984.

\bibitem[L{\"o}nnstedt and Speed(2002)]{lonnstedt2002replicated}
Ingrid L{\"o}nnstedt and Terry Speed.
\newblock Replicated microarray data.
\newblock \emph{Statistica Sinica}, pages 31--46, 2002.

\bibitem[Love et~al.(2014)Love, Huber, and Anders]{love2014moderated}
Michael~I Love, Wolfgang Huber, and Simon Anders.
\newblock Moderated estimation of fold change and dispersion for {RNA}-seq data
  with {DES}eq2.
\newblock \emph{Genome biology}, 15\penalty0 (12):\penalty0 550, 2014.

\bibitem[McMahan et~al.(2013)McMahan, Holt, Sculley, et~al.]{mcmahan2013ad}
H~Brendan McMahan, Gary Holt, David Sculley, et~al.
\newblock Ad click prediction: a view from the trenches.
\newblock In \emph{Proceedings of the 19th ACM SIGKDD international conference
  on Knowledge discovery and data mining}, pages 1222--1230. ACM, 2013.

\bibitem[Mogensen and Riseth(2018)]{mogensen2018optim}
Patrick~Kofod Mogensen and Asbj{\o}rn~Nilsen Riseth.
\newblock Optim: A mathematical optimization package for julia.
\newblock \emph{Journal of Open Source Software}, 3\penalty0 (24), 2018.

\bibitem[Mukhopadhyay and Maiti(2004)]{mukhopadhyay2004two}
Pushpal Mukhopadhyay and Tapabrata Maiti.
\newblock Two stage non-parametric approach for small area estimation.
\newblock \emph{Proceedings of ASA Section on Survey Research Methods},
  4058:\penalty0 4065, 2004.

\bibitem[Mukhopadhyay and Vidakovic(1995)]{mukhopadhyay1995efficiency}
Saurabh Mukhopadhyay and Brani Vidakovic.
\newblock Efficiency of linear {B}ayes rules for a normal mean: skewed priors
  class.
\newblock \emph{Journal of the Royal Statistical Society: Series D (The
  Statistician)}, 44\penalty0 (3):\penalty0 389--397, 1995.

\bibitem[Muralidharan(2010)]{muralidharan2010empirical}
Omkar Muralidharan.
\newblock An empirical {B}ayes mixture method for effect size and false
  discovery rate estimation.
\newblock \emph{The Annals of Applied Statistics}, 4\penalty0 (1):\penalty0
  422--438, 2010.

\bibitem[Nie and Wager(2018)]{nie2017learning}
Xinkun Nie and Stefan Wager.
\newblock Quasi-oracle estimation of heterogeneous treatment effects.
\newblock \emph{arXiv preprint arXiv:1712.04912}, 2018.

\bibitem[Opsomer et~al.(2008)Opsomer, Claeskens, Ranalli, Kauermann, and
  Breidt]{opsomer2008non}
Jean~D Opsomer, Gerda Claeskens, Maria~Giovanna Ranalli, Goeran Kauermann, and
  FJ~Breidt.
\newblock Non-parametric small area estimation using penalized spline
  regression.
\newblock \emph{Journal of the Royal Statistical Society: Series B (Statistical
  Methodology)}, 70\penalty0 (1):\penalty0 265--286, 2008.

\bibitem[Penskaya(1995)]{penskaya1995lower}
M~Ya Penskaya.
\newblock On the lower bounds for mean square error of empirical {B}ayes
  estimators.
\newblock \emph{Journal of Mathematical Sciences}, 75\penalty0 (2):\penalty0
  1524--1535, 1995.

\bibitem[Redmond and Baveja(2002)]{redmond2002data}
Michael Redmond and Alok Baveja.
\newblock A data-driven software tool for enabling cooperative information
  sharing among police departments.
\newblock \emph{European Journal of Operational Research}, 141\penalty0
  (3):\penalty0 660--678, 2002.

\bibitem[Reid et~al.(2016)Reid, Tibshirani, and Friedman]{reid2016study}
Stephen Reid, Robert Tibshirani, and Jerome Friedman.
\newblock A study of error variance estimation in lasso regression.
\newblock \emph{{S}tatistica {S}inica}, pages 35--67, 2016.

\bibitem[Robbins(1964)]{robbins1964empirical}
Herbert Robbins.
\newblock The empirical {B}ayes approach to statistical decision problems.
\newblock \emph{Annals of Mathematical Statistics}, 35:\penalty0 1--20, 1964.

\bibitem[Rosset and Tibshirani(2018)]{rosset2018fixed}
Saharon Rosset and Ryan~J Tibshirani.
\newblock From fixed-{X} to random-{X} regression: Bias-variance
  decompositions, covariance penalties, and prediction error estimation.
\newblock \emph{Journal of the American Statistical Association}, pages 1--14,
  2018.

\bibitem[Schick(1986)]{schick1986asymptotically}
Anton Schick.
\newblock On asymptotically efficient estimation in semiparametric models.
\newblock \emph{The Annals of Statistics}, pages 1139--1151, 1986.

\bibitem[Shen et~al.(2019)Shen, Gao, Witten, and Han]{shen2019optimal}
Yandi Shen, Chao Gao, Daniela Witten, and Fang Han.
\newblock Optimal estimation of variance in nonparametric regression with
  random design.
\newblock \emph{arXiv preprint arXiv:1902.10822}, 2019.

\bibitem[Stein(1981)]{stein1981estimation}
Charles~M Stein.
\newblock Estimation of the mean of a multivariate normal distribution.
\newblock \emph{The {A}nnals of {S}tatistics}, pages 1135--1151, 1981.

\bibitem[Stephan et~al.(2015)Stephan, Stegle, and Beyer]{stephan2015random}
Johannes Stephan, Oliver Stegle, and Andreas Beyer.
\newblock A random forest approach to capture genetic effects in the presence
  of population structure.
\newblock \emph{Nature communications}, 6:\penalty0 7432, 2015.

\bibitem[Stephens(2016)]{stephens2016false}
Matthew Stephens.
\newblock False discovery rates: a new deal.
\newblock \emph{Biostatistics}, 18\penalty0 (2):\penalty0 275--294, 2016.

\bibitem[Tan(2016)]{tan2016steinized}
Zhiqiang Tan.
\newblock Steinized empirical {B}ayes estimation for heteroscedastic data.
\newblock \emph{Statistica Sinica}, pages 1219--1248, 2016.

\bibitem[Tsybakov(2008)]{tsybakov2008introduction}
A.B. Tsybakov.
\newblock \emph{Introduction to Nonparametric Estimation}.
\newblock Springer Series in Statistics. Springer New York, 2008.
\newblock ISBN 9780387790527.

\bibitem[Wager(2015)]{wager2015efficiency}
Stefan Wager.
\newblock The efficiency of density deconvolution.
\newblock \emph{arXiv preprint arXiv:1507.00832}, 2015.

\bibitem[Weinstein et~al.(2018)Weinstein, Ma, Brown, and
  Zhang]{weinstein2018group}
Asaf Weinstein, Zhuang Ma, Lawrence~D Brown, and Cun-Hui Zhang.
\newblock Group-linear empirical {B}ayes estimates for a heteroscedastic normal
  mean.
\newblock \emph{Journal of the American Statistical Association}, 113\penalty0
  (522):\penalty0 698--710, 2018.

\bibitem[Xie et~al.(2012)Xie, Kou, and Brown]{xie2012sure}
Xianchao Xie, SC~Kou, and Lawrence~D Brown.
\newblock {SURE} estimates for a heteroscedastic hierarchical model.
\newblock \emph{Journal of the American Statistical Association}, 107\penalty0
  (500):\penalty0 1465--1479, 2012.

\bibitem[Zhu et~al.(2018)Zhu, Ibrahim, and Love]{zhu2018}
Anqi Zhu, Joseph~G Ibrahim, and Michael~I Love.
\newblock {Heavy-tailed prior distributions for sequence count data: removing
  the noise and preserving large differences}.
\newblock \emph{Bioinformatics}, 2018.

\end{thebibliography}

\newpage
\begin{appendix}
\section{Proofs for Section~\ref{sec:minimax}}

\subsection{Proof of Theorem~\ref{theo:upper_bd}}

\begin{proof}

We will first show, that under model~\eqref{eq:np-fay-herriot-model}, the plug-in estimator~\eqref{eq:plugin_hat_t} satisfies:
\begin{equation}
\label{eq:plugin_excess_risk}
\EE[m,A]{L(\hat{t}_n; m, A)} = \frac{\sigma^4}{\p{\sigma^2+A}^2}\EE[m,A]{\p{\hat{m}_n(X_{n+1})-m(X_{n+1})}^2} 
\end{equation}
This also establishes the upper bound on the minimax excess risk if $\hat{m}_n$ is chosen in a minimax rate-optimal way for the regression problem. 

To prove~\eqref{eq:plugin_excess_risk}, we study the excess risk of this estimator conditionally on the covariate $X_{n+1}$ of the $n+1$-th observation:

{\small
\begin{equation*}
\begin{aligned}
 &\EE[m,A]{\p{\hat{t}(X_{n+1},Z_{n+1}) - \mu_{n+1}}^2 \cond X_{n+1}=x}  \\
 =\;\; &\EE[m,A]{\p{\frac{A}{\sigma^2 + A}Z_{n+1} +  \frac{\sigma^2}{\sigma^2 + A}\hat{m}(X_{n+1}) - \mu_{n+1} }^2 \cond X_{n+1}=x}\\
 =\;\;  &\EE[m,A]{\p{\frac{A}{\sigma^2 + A}Z_{n+1} +  \frac{\sigma^2}{\sigma^2 + A}m(X_{n+1}) - \mu_{n+1} +  \frac{\sigma^2}{\sigma^2 + A}\p{\hat{m}(X_{n+1})-m(X_{n+1})}}^2 \cond X_{n+1}=x}\\
 =\;\; &\EE[m,A]{\p{t^*_{m,A}(X_{n+1},Z_{n+1}) - \mu_{n+1}}^2 \cond X_{n+1}=x} + \frac{\sigma^4}{\p{\sigma^2+A}^2}\EE[m,A]{\p{\hat{m}(X_{n+1})-m(X_{n+1})}^2 \cond X_{n+1}=x}
\end{aligned}
\end{equation*}}

The result follows by integrating over $X_{n+1}$ and rearranging. 
\end{proof}

\subsection{Proof of Lemma~\ref{lemm:minimax}}

The idea of the proof follows the general paradigm in derivation of minimax optimal rates ~\citep{tsybakov2008introduction, duchi2019lecture} in which we reduce the original problem to a multiple hypothesis testing problem. More concretely, let us fix two functions $m_1, m_2 \in \mathcal{C}$ and call the induced distributions $P_1$,$P_2$. Say we have a denoiser $t(x,z)$ that performs extremely well under $m_1$ with respect to the loss~\eqref{eq:excess_risk}. Then we will argue that it cannot do too well under $m_2$. But then, given data $(X_1, Z_1), \dotsc, (X_n, Z_n)$ we may use the data-driven $\hat{t}(x,z)$ as a proxy for a hypothesis test: If its risk is small under $m_1$, but large under $m_2$, we would guess that $m_1$ is true and vice versa. Thus our task reduces to lower bounding the performance of a hypothesis test. These ideas will be made concrete in the arguments that follow.

Our proof strategy begins by studying the pointwise excess risk:

\begin{equation}
    \label{eq:excess_loss_pointwise}
    L(t; m, A \mid x) := \EE[m,A]{\p{t(x,Z_{n+1}) - \mu_{n+1}}^2 -  \p{t_{m,A}^*(x,Z_{n+1}) - \mu_{n+1}}^2 \mid X_{n+1}=x}
\end{equation}

\begin{lemm}
\label{lemm:bayes_risk_mixture}
There exist universal constants $c>0, \Delta>0$ such that when $\abs{m_1(x)-m_2(x)}/\sqrt{\sigma^2 +A} \leq \Delta$ (where $x$ is fixed, yet arbitrary) it holds for all $t$ that:
$$\frac{1}{2}\sqb{ L(t; m_1, A \mid x) + L(t; m_2, A \mid x)} \geq  c \frac{\sigma^4}{\p{\sigma^2 + A}^2}\p{m_1(x)-m_2(x)}^2$$
\end{lemm}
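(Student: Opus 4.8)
The plan is to recognize the claim as a two-point Bayes-risk statement: the quantity $\tfrac12[L(t;m_1,A\mid x)+L(t;m_2,A\mid x)]$ is (twice) the Bayes risk of estimating the Bayes rule under a uniform prior on $\{m_1,m_2\}$, so minimizing it over $t$ reduces everything to a one-dimensional overlap integral between two equal-variance Gaussians.

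First I would rewrite the pointwise excess risk in a more usable form. Conditioning further on $Z_{n+1}=z$ and using that $t^*_{m,A}(x,z)$ is exactly the posterior mean $\EE[m,A]{\mu_{n+1}\mid X_{n+1}=x,Z_{n+1}=z}$, the bias--variance decomposition of the posterior cancels the common conditional variance and gives
\[
L(t;m,A\mid x) = \EE[m,A]{\p{t(x,Z_{n+1}) - t^*_{m,A}(x,Z_{n+1})}^2\mid X_{n+1}=x}.
\]
Thus $L(t;m,A\mid x)$ is the squared $L^2$ distance between $t(x,\cdot)$ and the Bayes rule $t^*_{m,A}(x,\cdot)$, measured against the marginal law of $Z_{n+1}$ given $X_{n+1}=x$, which under~\eqref{eq:np-fay-herriot-model} is $\nn\p{m_j(x),\,\sigma^2+A}$; write $\phi_j$ for this density. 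The two Bayes rules differ only by the $z$-independent constant
\[
t^*_{m_1,A}(x,z) - t^*_{m_2,A}(x,z) = \frac{\sigma^2}{\sigma^2+A}\p{m_1(x)-m_2(x)} =: \Delta^*,
\]
so the left-hand side of the claim equals $\tfrac12\int\p{t(x,z)-t^*_{m_1,A}(x,z)}^2\phi_1(z)\,dz + \tfrac12\int\p{t(x,z)-t^*_{m_2,A}(x,z)}^2\phi_2(z)\,dz$.

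Next I would eliminate the dependence on $t$ by minimizing the integrand pointwise in $z$. For fixed $z$ the quadratic $s\mapsto \tfrac12\phi_1(z)\p{s-t^*_{m_1,A}(x,z)}^2 + \tfrac12\phi_2(z)\p{s-t^*_{m_2,A}(x,z)}^2$ has minimum value $\tfrac12\,\frac{\phi_1(z)\phi_2(z)}{\phi_1(z)+\phi_2(z)}\,(\Delta^*)^2$, and since this holds at every $z$ simultaneously it produces a bound valid for \emph{all} $t$:
\[
\tfrac12\sqb{L(t;m_1,A\mid x) + L(t;m_2,A\mid x)} \geq \frac{\sigma^4}{\p{\sigma^2+A}^2}\p{m_1(x)-m_2(x)}^2\cdot \frac12\int \frac{\phi_1(z)\phi_2(z)}{\phi_1(z)+\phi_2(z)}\,dz.
\]
It then remains only to bound the overlap integral below by a universal constant.

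The hard part is guaranteeing that this overlap stays bounded below as $A$ and $\sigma^2$ vary, and this is exactly where the separation hypothesis enters. I would use $\phi_1\phi_2/(\phi_1+\phi_2)\geq \tfrac12\min(\phi_1,\phi_2)$ (from $\phi_1+\phi_2\leq 2\max(\phi_1,\phi_2)$) together with the exact overlap of two Gaussians of common variance $\sigma^2+A$, namely $\int\min(\phi_1,\phi_2)\,dz = 2\Phi\p{-\abs{m_1(x)-m_2(x)}/(2\sqrt{\sigma^2+A})}$. This depends on $A,\sigma^2$ only through the standardized gap $\abs{m_1(x)-m_2(x)}/\sqrt{\sigma^2+A}$, so the hypothesis $\abs{m_1(x)-m_2(x)}/\sqrt{\sigma^2+A}\leq\Delta$ yields $\int \phi_1\phi_2/(\phi_1+\phi_2)\,dz \geq \Phi(-\Delta/2)$, and the lemma follows with $c=\tfrac12\Phi(-\Delta/2)>0$ for any fixed $\Delta$ (e.g.\ $\Delta=1$). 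The only subtlety worth checking carefully is that dividing the mean gap by $\sqrt{\sigma^2+A}$ is precisely the normalization that renders $c$ free of the scale parameters, which is ultimately what keeps the minimax constants in~\eqref{eq:risk_scaling} independent of $A$ and $\sigma$.
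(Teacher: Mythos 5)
Your proof is correct, and it takes a genuinely different route from the paper's. Both arguments are two-point mixture bounds at heart---your pointwise minimizer $\p{\phi_1(z)\,t^*_{m_1,A}(x,z)+\phi_2(z)\,t^*_{m_2,A}(x,z)}/\p{\phi_1(z)+\phi_2(z)}$ is exactly the Bayes rule under the paper's mixture prior $G_x=\frac12\sqb{\nn\p{m_1(x),A}+\nn\p{m_2(x),A}}$---but the executions differ substantially. The paper lower-bounds the average risk by the Bayes risk under $G_x$, evaluates that Bayes risk \emph{exactly} via Brown's identity $\sigma^2\sqb{1-\sigma^2 I(f_g)}$ (Lemma~\ref{lem:fisher_info}), and then Taylor-expands the Fisher information of the two-component Gaussian mixture in the standardized gap $\teta$, obtaining $C(\teta)=1-\teta^2/4+o(\teta^2)$; this asymptotic expansion is why the paper needs $\Delta$ sufficiently small, and it shows the constant $c$ can be taken arbitrarily close to $1/4$. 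You instead identify the conditional excess risk with the squared $L^2(\phi_j)$ distance from $t(x,\cdot)$ to the corresponding Bayes rule, exploit that the two Bayes rules differ by the $z$-independent constant $\frac{\sigma^2}{\sigma^2+A}\p{m_1(x)-m_2(x)}$, minimize pointwise in $z$, and bound the resulting overlap integral using $\phi_1\phi_2/(\phi_1+\phi_2)\geq\frac12\min(\phi_1,\phi_2)$ together with the exact Gaussian overlap $\int\min(\phi_1,\phi_2)\,dz=2\Phi\p{-\delta/2}$, where $\delta=\abs{m_1(x)-m_2(x)}/\sqrt{\sigma^2+A}$; each of these steps checks out. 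Your route is more elementary (no Fisher information, no dominated-convergence or Taylor bookkeeping), fully non-asymptotic, and valid for \emph{every} $\Delta>0$ with the explicit constant $c=\frac12\Phi(-\Delta/2)$; it even recovers the sharp small-gap coefficient, since $\frac12\Phi(-\Delta/2)\to 1/4$ as $\Delta\to 0$. What the paper's computation buys in exchange is an exact expression for the mixture Bayes risk, identifying $1/4$ as the precise second-order constant rather than only a lower bound.
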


\begin{proof}
As a thought experiment, we consider the following generative model:

$$
\begin{aligned}
&\mu_{n+1} \; \sim \; G_x = \frac{1}{2}\sqb{\nn\p{m_1(x), A} + \nn\p{m_2(x), A}} \\
&Z_{n+1} \mid \mu_{n+1} \sim \nn\p{\mu_{n+1}, \sigma^2}
\end{aligned}
$$
Next consider the Bayes estimator for $\mu_{n+1}$ under this prior, namely:
\begin{equation}
t^*_{G_x}(z) := \EE[G_x]{ \mu_{n+1} \mid Z_{n+1}=z}
\end{equation}
Then, by definition of the Bayes estimator, it must hold that for any $t: \mathcal{X} \times \RR \to \RR$:
$$\EE[G_x]{\p{t(x, Z_{n+1}) - \mu_{n+1}}^2} \geq \EE[G_x]{\p{t^*_{G_x}(Z_{n+1}) - \mu_{n+1}}^2}$$
In the preceding result we are really thinking of $t$ as the curried function $t(x, \cdot)$. Next, by definition of $G_x$, the LHS of the above expression is the same as:
$$\frac{1}{2}\cb{\EE[m_1, A]{\p{t(x,Z_{n+1}) - \mu_{n+1}}^2 \mid X_{n+1}=x} + \EE[m_2, A]{\p{t(x,Z_{n+1}) - \mu_{n+1}}^2 \mid X_{n+1}=x} }$$
Also observe that $\inf_{t}\cb{\EE[m_1, A]{\p{t(x,Z_{n+1}) - \mu_{n+1}}^2 \mid X_{n+1}=x}} = A\sigma^2/(A+\sigma^2)$ and similarly for $m_2$, hence upon subtracting $A\sigma^2/(A+\sigma^2)$ from the above expression and its preceding inequality, we get:
$$ \frac{1}{2}\cb{ L(t; m_1, A \mid x) + L(t; m_2, A \mid x)} \geq  \EE[G_x]{\p{t^*_{G_x}(Z_{n+1}) - \mu_{n+1}}^2} - \frac{A\sigma^2}{A+\sigma^2}$$
Hence to conclude we will need to show that there exist universal constants $c, \Delta >0$ so that if $\abs{m_1(x)-m_2(x)}/\sqrt{\sigma^2 +A} \leq \Delta$:
\begin{equation}
\label{eq:univariate_bayes_regret}
\EE[G_x]{\p{t^*_{G_x}(Z_{n+1}) - \mu_{n+1}}^2} - \frac{A\sigma^2}{A+\sigma^2} \; \geq \; c \frac{\sigma^4}{\p{\sigma^2 + A}^2}\p{m_1(x)-m_2(x)}^2
\end{equation}
Note that the LHS depends on $m_1(x), m_2(x)$ through the definition of $G_x$. We provide the calculations and complete the proof in Appendix~\ref{subsec:to_testing_proof}.
\end{proof}

\begin{lemm}
\label{lemm:two_point_bound}
Let $c>0$, $\Delta >0$ the constants from Lemma~\ref{lemm:bayes_risk_mixture}. Then, for all $m_1, m_2:\mathcal{X} \to \RR$, the following implication holds for any $t: \mathcal{X} \times \RR \to \RR$ 

\begin{equation}
\begin{aligned}
&L(t; m_1, A) < c \frac{\sigma^4}{\p{\sigma^2 + A}^2}\int \p{m_1(x)-m_2(x)}^2  \ind\cb{\frac{\p{m_1(x)-m_2(x)}^2}{\sigma^2 + A}\leq \Delta^2} d\mathbb P^X(x) \\ \Longrightarrow \;\;\; &L(t; m_2, A) \geq c \frac{\sigma^4}{\p{\sigma^2 + A}^2}\int \p{m_1(x)-m_2(x)}^2  \ind\cb{\frac{\p{m_1(x)-m_2(x)}^2}{\sigma^2 + A}\leq \Delta^2}  d\mathbb P^X(x)
\end{aligned}
\end{equation}
\end{lemm}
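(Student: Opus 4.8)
The plan is to reduce the integrated excess risk to the pointwise excess risk~\eqref{eq:excess_loss_pointwise} and then apply Lemma~\ref{lemm:bayes_risk_mixture} locally. First I would record two elementary facts. By conditioning on $X_{n+1}=x$ and using the tower property, the integrated regret factors as $L(t; m, A) = \int L(t; m, A \mid x)\, d\mathbb P^X(x)$ for either regression function. Second, and conceptually crucial, the pointwise regret is nonnegative: since $t^*_{m,A}(x,\cdot)$ is by construction the Bayes estimator of $\mu_{n+1}$ given $X_{n+1}=x$ under~\eqref{eq:np-fay-herriot-model}, we have $L(t; m, A\mid x) \geq 0$ for every $x$ and every $t$.

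With these in hand I would introduce the ``small-separation'' set $\mathcal{A} := \cb{x : \p{m_1(x)-m_2(x)}^2/(\sigma^2+A) \leq \Delta^2}$, which is exactly the support of the indicator appearing in the statement, and on which the hypothesis $\abs{m_1(x)-m_2(x)}/\sqrt{\sigma^2+A} \leq \Delta$ of Lemma~\ref{lemm:bayes_risk_mixture} holds for every $x$. Nonnegativity of the pointwise regret gives, for $j \in \cb{1,2}$, that $L(t; m_j, A) \geq \int_{\mathcal{A}} L(t; m_j, A\mid x)\, d\mathbb P^X(x)$, since discarding the nonnegative contribution from $\mathcal{A}^c$ only lowers the value. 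Averaging the conclusion of Lemma~\ref{lemm:bayes_risk_mixture} over $\mathcal{A}$ with respect to $\mathbb P^X$ then yields
\[
\frac{1}{2}\sqb{L(t; m_1, A) + L(t; m_2, A)} \;\geq\; c\frac{\sigma^4}{\p{\sigma^2+A}^2}\int_{\mathcal{A}} \p{m_1(x)-m_2(x)}^2 \, d\mathbb P^X(x) \;=:\; R,
\]
where $R$ is precisely the right-hand side appearing in both lines of the claimed implication.

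Finally, the implication follows by elementary arithmetic: the displayed inequality rearranges to $L(t; m_2, A) \geq 2R - L(t; m_1, A)$, so whenever the antecedent $L(t; m_1, A) < R$ holds we obtain $L(t; m_2, A) > R$, which is even stronger than the asserted $\geq R$. I do not expect a genuine obstacle here; the only non-mechanical step is recognizing that nonnegativity of the pointwise Bayes regret is what licenses passing from the full integral to the restricted integral over $\mathcal{A}$, thereby matching the indicator-truncated quantity $R$ in the statement. The real work is done by the pointwise lower bound of Lemma~\ref{lemm:bayes_risk_mixture}; the present lemma is the bookkeeping that assembles it into a genuine two-point separation usable with Le Cam's or Fano's method.
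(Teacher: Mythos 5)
Your proof is correct and follows essentially the same route as the paper's: decompose the regret into the pointwise regret, restrict the integral to the set where the hypothesis of Lemma~\ref{lemm:bayes_risk_mixture} holds (licensed by nonnegativity of the pointwise Bayes regret, which the paper uses implicitly), apply that lemma under the integral, and conclude that both regrets cannot simultaneously fall below the truncated integral. Your write-up merely makes explicit two points the paper leaves tacit — the nonnegativity step and the final arithmetic — but the argument is the same.
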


\begin{proof}
We use the result from Lemma~\eqref{lemm:bayes_risk_mixture}, noting that $L(t; m, A) = \int L(t; m, A \mid x)d\mathbb P^X(x)$.

$$
\begin{aligned}
\frac{1}{2}\sqb{L(t; m_1, A) + L(t;m_2,A)} &= \int \frac{1}{2}\sqb{L(t; m_1, A \mid x) + L(t; m_2, A \mid x)}d\mathbb P^X(x) \\
&\geq \int \frac{1}{2}\sqb{L(t; m_1, A \mid x) + L(t; m_2, A \mid x)}  \ind\cb{\frac{\p{m_1(x)-m_2(x)}^2}{\sigma^2 + A} \leq \Delta^2} d\mathbb P^X(x) \\
& \geq c \frac{\sigma^4}{\p{\sigma^2 + A}^2}\int \p{m_1(x)-m_2(x)}^2 \ind\cb{\frac{\p{m_1(x)-m_2(x)}^2}{\sigma^2 + A}\leq \Delta^2}  d\mathbb P^X(x)
\end{aligned}
$$

Thus not both $L(t; m_1, A), L(t;m_2,A)$ may be $<$ than the RHS at the same time.

\end{proof}

The above lemma allows us to prove lower bounds by reduction to hypothesis testing. In particular, let us recall the statement from Lemma~\ref{lemm:minimax}, now stated in slightly more generality and dropping explicit notation for $n$ in the constructed collection of functions $\cb{m_v \mid v \in \mathcal{V}_n}$:

  \newtheoremstyle{TheoremNum}
        {\topsep}{\topsep}              %%% space between body and thm
        {\itshape}                      %%% Thm body font
        {}                              %%% Indent amount (empty = no indent)
        {\bfseries}                     %%% Thm head font
        {.}                             %%% Punctuation after thm head
        { }                             %%% Space after thm head
        {\thmname{#1}\thmnote{ \bfseries #3}}%%% Thm head spec
    \theoremstyle{TheoremNum}
    \newtheorem{thmn}{Lemma}

\begin{thmn}[~\ref{lemm:minimax} (More general version)]
For each $n$, let $\mathcal{V}_n$ be a finite set and $\cb{m_v \mid v \in \mathcal{V}_n} \subset \mathcal{C}$ be a collection of functions indexed by $\mathcal{V}_n$ such that for a sequence $\delta_n > 0$:

$$ \delta_n^2 \leq \int\p{m_v(x) - m_{v'}(x)}^2\ind\cb{\frac{\p{m_v(x)-m_{v'}(x)}^2}{\sigma^2 + A}\leq \Delta^2} d\mathbb P^X(x)  \;\text{ for all } v\neq v' \in \mathcal{V}_n, \forall n$$
Then:
$$ \mathfrak{M}_n^{\text{EB}}\p{\mathcal{C}; A,\sigma^2} \gtrsim  \frac{\sigma^4}{\p{\sigma^2 + A}^2} \cdot \delta^2_n \cdot \inf_{\hat{V}_n} \PP{\hat{V}_n \neq V_n}$$
Here, \smash{$\inf_{\hat{V}_n}\mathbb P[\hat{V}_n \neq V_n]$} is to be interpreted as follows: $V_n$ is drawn uniformly from $\mathcal{V}_n$ and conditionally on $V_n=v$, we draw the pairs $(X_i,Z_i)_{1 \leq i \leq n}$ from model~\eqref{eq:np-fay-herriot-model} with regression function $m_{n,v}(\cdot)$. The infimum is taken over all estimators $\hat{V}_n$ that are measurable with respect to  $(X_i,Z_i)_{1 \leq i \leq n}$.
\end{thmn}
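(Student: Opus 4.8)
The plan is to carry out the standard estimation-to-testing reduction, using Lemma~\ref{lemm:two_point_bound} as the separation device that converts a good estimator into a correct test. Throughout, I fix an arbitrary denoiser $\hat{t}_n$ and abbreviate $R_n := c\,\sigma^4/\p{\sigma^2+A}^2 \cdot \delta_n^2$, where $c$ is the universal constant furnished by Lemma~\ref{lemm:bayes_risk_mixture}. The conceptual starting point is that, although $\mu_{n+1}$ is never observed, the population excess risk $L(\hat{t}_n; m_v, A)$ is a \emph{computable} functional of the sample: it depends on the data only through $\hat{t}_n$, while $m_v$, $A$, $\sigma^2$ and $\mathbb P^X$ are all known. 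This is precisely what lets us read off a hypothesis test from any estimator.

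First I would define the plug-in test $\hat{V}_n := \argminB_{v \in \mathcal{V}_n} L(\hat{t}_n; m_v, A)$, which is measurable with respect to $\p{X_i, Z_i}_{1 \leq i \leq n}$ since $\hat{t}_n$ is and $L(\cdot\,; m_v, A)$ is a fixed functional. The crux is to show this test is correct whenever the estimator performs well. Suppose the data are generated with regression function $m_v$ and that $L(\hat{t}_n; m_v, A) < R_n$. For any $v' \neq v$, the separation hypothesis guarantees $R_{v,v'} := c\,\sigma^4/\p{\sigma^2+A}^2 \int \p{m_v(x) - m_{v'}(x)}^2 \ind\cb{\p{m_v(x)-m_{v'}(x)}^2/\p{\sigma^2+A} \leq \Delta^2}\,d\mathbb P^X(x) \geq R_n$, so in particular $L(\hat{t}_n; m_v, A) < R_{v,v'}$; Lemma~\ref{lemm:two_point_bound} then forces $L(\hat{t}_n; m_{v'}, A) \geq R_{v,v'} > L(\hat{t}_n; m_v, A)$. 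Hence $v$ is the \emph{strict} minimizer and $\hat{V}_n = v$, which yields the key containment $\cb{\hat{V}_n \neq v} \subseteq \cb{L(\hat{t}_n; m_v, A) \geq R_n}$ under $m_v$.

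The remaining steps are routine. Since the excess risk is nonnegative, Markov's inequality under $m_v$ gives $\PP{\hat{V}_n \neq v \mid V_n = v} \leq \EE[m_v,A]{L(\hat{t}_n; m_v, A)}/R_n$. Averaging over $V_n$ drawn uniformly from $\mathcal{V}_n$ and bounding the resulting average risk by the supremum over $\mathcal{C}$, I obtain $\PP{\hat{V}_n \neq V_n} \leq R_n^{-1} \sup_{m \in \mathcal{C}} \EE[m,A]{L(\hat{t}_n; m, A)}$. As $\hat{V}_n$ is merely one admissible test, its error dominates $\inf_{\hat{V}_n}\PP{\hat{V}_n \neq V_n}$; rearranging and taking the infimum over all $\hat{t}_n$ turns the left-hand side into $\mathfrak{M}_n^{\text{EB}}\p{\mathcal{C}; A,\sigma^2}$, producing $\mathfrak{M}_n^{\text{EB}}\p{\mathcal{C}; A,\sigma^2} \geq R_n \cdot \inf_{\hat{V}_n}\PP{\hat{V}_n \neq V_n}$. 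Absorbing the universal constant $c$ into $\gtrsim$ gives exactly the stated bound.

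I expect no serious obstacle, since the analytic content has already been discharged in Lemmas~\ref{lemm:bayes_risk_mixture} and~\ref{lemm:two_point_bound}. The one point requiring care is the reduction itself: recognizing the excess risk as a legitimate, data-measurable test statistic, and exploiting that Lemma~\ref{lemm:two_point_bound} delivers a \emph{strict} separation rather than a mere ``not both small'' statement, so that the $\argminB$ test is unambiguously correct on the good event $\cb{L(\hat{t}_n; m_v, A) < R_n}$. Everything downstream is Markov's inequality together with an averaging argument.
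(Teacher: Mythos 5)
Your proposal is correct and takes essentially the same approach as the paper: the paper likewise defines the plug-in test $\tilde{V}_n = \argmin_{v \in \mathcal{V}_n} L(\hat{t}_n; m_v, A)$, invokes Lemma~\ref{lemm:two_point_bound} together with the separation hypothesis to show that $L(\hat{t}_n; m_v, A) < \delta_{n,A,\sigma}^2$ forces $\tilde{V}_n = v$, and concludes via Markov's inequality, averaging over $v$, and taking the infimum over tests and over $\hat{t}_n$. The only differences are presentational: you rearrange the chain of inequalities (bounding the testing error by the risk rather than vice versa) and make explicit the strict-minimizer and data-measurability points that the paper leaves implicit.
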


Note that the original statement of Lemma~\ref{lemm:minimax} is subsumed by the above statement. We are ready to prove Lemma~\ref{lemm:minimax}.

\begin{proof}
Our construction closely follows~\citet{duchi2019lecture} and recent advances in proving minimax results for general losses; see for example~\citep{agarwal2009information}. To start, we fix an estimated denoiser $\hat{t}_n(x,z)=\hat{t}_n\p{x,z; (X_i,Z_i)_{1\leq i \leq n}}$ and define $\delta_{n, A,\sigma} = c^{1/2} \frac{\sigma^2}{\sigma^2 + A} \delta_n$, where $c$ is defined in Lemma~\ref{lemm:bayes_risk_mixture}. Next, focusing on one $v \in \mathcal{V}$, we get by Markov's inequality:

$$ \EE[m_v, A]{L(\hat{t}_n; m_v, A)}  \geq \delta_{n,A,\sigma}^2 \PP[m_v, A]{L(\hat{t}_n; m_v, A) \geq \delta_{n, A,\sigma}^2} $$
We next construct an estimator $\tilde{V}_n$ of $V_n$, namely we let:

$$ \tilde{V}_n = \argmin_{v \in \mathcal{V}}L(\hat{t}_n; m_v, A)$$
Notice that by Lemma~\ref{lemm:two_point_bound} and the assumption of the current Lemma, if the truth is $m_v$ and $L(\hat{t}_n; m_v, A) < \delta_{n, A,\sigma}^2$, then we definitely guessed correctly, in other words:

$$ L(\hat{t}_n; m_v, A) < \delta_{n, A,\sigma}^2 \; \Longrightarrow \; \tilde{V}_n = v $$
But taking the complements:

$$ \tilde{V}_n \neq v \; \Longrightarrow \;  L(\hat{t}_n; m_v, A) \geq \delta_{n, A,\sigma}^2$$
In terms of probabilities this implies that 
$$  \PP[m_v, A]{L(\hat{t}_n; m_v, A) \geq \delta_{n, A,\sigma}^2} \geq \PP[m_v, A]{ \tilde{V}_n \neq v}$$
Combining with our original result, and averaging over all $v$, we see that:
$$
\begin{aligned}
\sup_{m \in \mathcal{C}}\cb{\EE[m]{L(\hat{t}_n; m, A)}} &\geq \sup_{v \in \mathcal{V}_n}\cb{\EE[m_v,A]{L(\hat{t}_n; m_v, A)}} \\
 &\geq \frac{1}{\abs{\mathcal{V}_n}}\sum_{v \in \mathcal{V}_n} \EE[m_v, A]{L(\hat{t}_n; m_v, A)}\\
 &\geq \delta_{n,A,\sigma}^2 \cb{\frac{1}{\abs{\mathcal{V}_n}}\sum_{v \in \mathcal{V}_n} P_{m_v, A}\sqb{ \tilde{V}_n \neq v}} \\
&= \delta_{n,A,\sigma}^2 \PP{\tilde{V}_n \neq V_n} \\
&\geq \delta_{n,A,\sigma}^2 \inf_{\hat{V}_n}\PP{\hat{V}_n \neq V_n}
\end{aligned}
$$
Recall the definition of $\delta_{n, A,\sigma}^2$ and that $\hat{t}_n$ was arbitrary to conclude.
\end{proof}

\subsection{Proof of Lemma~\ref{lemm:bayes_risk_mixture}}
\label{subsec:to_testing_proof}

\begin{proof}
It only remains to prove~\eqref{eq:univariate_bayes_regret}. To this end, let us note that the result is essentially univariate; i.e. we may consider the following model: 

\begin{equation}
\label{eq:mixture_bayes_normal}
\begin{aligned}
&\mu \; \sim \; G = \frac{1}{2}\sqb{\nn\p{\eta_1, A} + \nn\p{\eta_2, A}} \\
&Z \mid \mu \sim \nn\p{\mu, \sigma^2}
\end{aligned}
\end{equation}

In this model, we want to prove that the Bayes risk of the Bayes estimator $t^*_G(Z) = \EE[G]{\mu \mid Z}$ satisfies the following inequality ($c>0, \Delta >0$): When $\abs{\eta_1 - \eta_2} \leq \Delta \sqrt{\sigma^2 + A}$ it holds that

\begin{equation}
\label{eq:univariate_bayes_regret_dropx}
\EE[G]{\p{t^*_{G}(Z) - \mu}^2} - \frac{A\sigma^2}{A+\sigma^2} \; \geq \; c \frac{\sigma^4}{\p{\sigma^2 + A}^2}\p{\eta_1 - \eta_2}^2
\end{equation}
The calculation is facilitated by Lemma~\ref{lem:fisher_info}, which states that $\EE[G]{\p{t^*_{G}(Z) - \mu}^2}=\sigma^2\sqb{1- \sigma^2 I(f_g)}$, where $f_g$ is the marginal density of $Z$ in~\eqref{eq:mixture_bayes_normal} and $I(f_g)$ is the Fisher information $\int \frac{f_g'(x)^2}{f_g(x)} dx$.

For the problem at hand, without loss of generality, we may take $\eta_1=0$, $\eta_2=\eta > 0$. Then the marginal distribution induced by $g$ is the mixture $\frac{1}{2}\sqb{ \nn\p{0, \sigma^2+A} + \nn\p{\eta, \sigma^2+A}}$, i.e. the pdf $f_g(\cdot)$ is:

$$ f_g(x) = \frac{1}{2 \sqrt{2\pi\p{\sigma^2+A}}}\sqb{ \exp\p{-\frac{x^2}{2(\sigma^2+A)}} + \exp\p{-\frac{(x-\eta)^2}{2(\sigma^2+A)}}}$$

$$ f_g'(x) =\frac{1}{2 \sqrt{2\pi\p{\sigma^2+A}}}\frac{1}{\sigma^2+A}\sqb{ -x\exp\p{-\frac{x^2}{2(\sigma^2+A)}} - (x-\eta)\exp\p{-\frac{(x-\eta)^2}{2(\sigma^2+A)}}}$$
Therefore, letting $\ell(u) = \exp(u)/(1+\exp(u))$ the logistic function, we see that:, 

$$
\begin{aligned}
\frac{f_g'(x)}{f_g(x)} &= \frac{1}{\sigma^2 + A} \frac{-x - (x-\eta)\exp\p{\frac{-\eta^2 + 2\eta x}{2(\sigma^2+A)}}}{1+ \exp\p{\frac{-\eta^2 + 2\eta x}{2(\sigma^2+A)}}} &=  \frac{1}{\sigma^2 + A}\sqb{-x + \eta \cdot \ell\p{\frac{-\eta^2 + 2\eta x}{2(\sigma^2+A)}}}
\end{aligned}
$$
Thus:

$$
\frac{f_g'(x)^2}{f_g(x)^2} = \frac{1}{\p{\sigma^2 + A}^2}\sqb{ x^2 + \eta^2  \cdot \ell^2\p{\frac{-\eta^2 + 2\eta x}{2(\sigma^2+A)}} - 2x\eta \cdot\ell\p{\frac{-\eta^2 + 2\eta x}{2(\sigma^2+A)}}}
$$
Then, letting $\tx = x/\sqrt{A+\sigma^2}$, $\teta = \eta/\sqrt{A+\sigma^2}$:

$$
\begin{aligned}
I(f_g)  &= \frac{1}{\p{\sigma^2 + A}^2}\frac{1}{2\sqrt{2\pi \p{\sigma^2+A}}} \int \cb{ x^2 + \eta^2  \cdot \ell^2\p{\frac{-\eta^2 + 2\eta x}{2(\sigma^2+A)}} - 2x\eta \cdot\ell\p{\frac{-\eta^2 + 2\eta x}{2(\sigma^2+A)}}} \\
   &\qquad\qquad\qquad\qquad\qquad\qquad\qquad\qquad\qquad          \cdot \p{ \exp\p{-\frac{x^2}{2(\sigma^2+A)}} + \exp\p{-\frac{(x-\eta)^2}{2(\sigma^2+A)}}} dx \\
 &= \frac{1}{2\sqrt{2\pi}\p{\sigma^2 + A}}\int \cb{ \tx^2 + \teta^2  \cdot \ell^2\p{\frac{-\teta^2 + 2\teta \tx}{2}} - 2\tx\teta \cdot\ell\p{\frac{-\teta^2 + 2\teta \tx}{2}}} \\ 
 & \qquad\qquad\qquad\qquad\qquad\qquad\qquad\qquad\qquad   \cdot \p{ \exp\p{-\frac{\tx^2}{2}} + \exp\p{-\frac{(\tx-\teta)^2}{2}}}d\tx \\
\end{aligned}
$$
Thus we may write $I(f_g)=\frac{1}{\sigma^2 + A}C(\teta)$, for some $C(\teta)$, which we now turn to study. Our first observation is that $C(0) = \EE{\tX^2} =1$ where $\tX \sim \nn\p{0,1}$. We claim that:

$$ C(\teta) = 1 - \frac{\teta^2}{4} + o(\teta^2) $$
To this end, we break up $C(\eta)$ into 6 components upon distributing terms, calling them $\IR_{0},\IIR_{0},\IIIR_{0}, \IR_{\teta}, \IIR_{\teta}, \IIIR_{\teta}$, where the subscript corresponds to integrating over $\tX \sim \nn\p{0,1}$ or $\tX \sim \nn\p{\teta,1}$.
$$
\begin{aligned}
&\IR_0 := \EE[0]{\tX^2} = 1, \;\; \IR_{\teta} := \EE[\teta]{\tX^2} = 1+\teta^2 \\
&\IIR_0 := \EE[0]{\teta^2\cdot \ell^2\p{\frac{-\teta^2 + 2\teta \tX}{2}}} = \frac{\teta^2}{4} + o(\teta^2) \;\;\;\;\;\;\;\text{ (dominated convergence theorem)}\\
&\IIR_{\teta} := \EE[\teta]{\teta^2\cdot \ell^2\p{\frac{-\teta^2 + 2\teta \tX}{2}}} = \frac{\teta^2}{4} + o(\teta^2)
\end{aligned}
$$
We may see the last result for example as follows, again using dominated convergence ($\teta \to 0$):

$$ \frac{\IIR_{\teta}}{\teta^2} =  \EE[\teta]{\ell^2\p{\frac{-\teta^2 + 2\teta \tX}{2}}} = \EE[0]{\ell^2\p{\frac{-\teta^2 + 2\teta (\tX + \teta)}{2}}} = \frac{1}{4} + o(1)$$

To bound $\IIIR$, it will be convenient to note that by Taylor's theorem it holds that $\ell(u) = \frac{1}{2} + \frac{u}{4} + O(u^3)$; in fact $\abs{\ell(u) - \frac{1}{2} - \frac{u}{4}} \leq \abs{u}^3$. Thus:

$$ \ell\p{\frac{-\teta^2 + 2\teta \tX}{2}} = \frac{1}{2}-\frac{\teta^2}{8} + \frac{\teta \tX}{4} + O\p{(-\teta^2 + 2\teta \tX)^3}$$, and so (one may check that again dominated convergence applies):

$$
\begin{aligned}
\IIIR_0 :&= -2\teta \EE[0]{\tX \ell\p{\frac{-\teta^2 + 2\teta \tX}{2}}} = -2\teta\EE[0]{\tX\p{\frac{1}{2}-\frac{\teta^2}{8} + \frac{\teta \tX}{4}} +  O\p{\tX(-\teta^2 + 2\teta \tX)^3}}\\
& = -2\teta\EE[0]{\tX\p{\frac{1}{2}-\frac{\teta^2}{8} + \frac{\teta \tX}{4}}} + o(\teta^2) = -2\teta\p{ 0 + \frac{\teta}{4}} + o(\teta^2) = -\frac{\teta^2}{2} + o(\teta^2)
\end{aligned}
$$

$$
\begin{aligned}
\IIIR_{\teta} :&= -2\teta \EE[\teta]{\tX \ell\p{\frac{-\teta^2 + 2\teta \tX}{2}}} = -2\teta\EE[\teta]{\tX\p{\frac{1}{2}-\frac{\teta^2}{8} + \frac{\teta \tX}{4}} +  O\p{\tX(-\teta^2 + 2\teta \tX)^3}}\\
&=  -2\teta\EE[0]{(\tX+\teta)\p{\frac{1}{2}-\frac{\teta^2}{8} + \frac{\teta (\tX+\teta)}{4}} +  O\p{(\tX+\teta)(-\teta^2 + 2\teta (\tX+\teta))^3}}\\
&= -2\teta\EE[0]{(\tX+\teta)\p{\frac{1}{2}-\frac{\teta^2}{8} + \frac{\teta (\tX+\teta)}{4}}} + o(\teta^2)\\
& = -2\teta\p{ \frac{\teta}{2} - \frac{\teta^3}{8} + \frac{\teta(\teta^2+1)}{4}} + o(\teta^2)\\
& = -\frac{3\teta^2}{2} + o(\teta^2)
\end{aligned}
$$
Add up to get :

$$ C(\teta) = \frac{1}{2}\sqb{\IR_0 + \IIR_0 + \IIIR_0 + \IR_{\teta} + \IIR_{\teta} + \IIIR_{\teta}} = 1 - \frac{\teta^2}{4} + o(\teta^2)$$
Then the regret is:

$$
\begin{aligned}
\sigma^2\sqb{1-\sigma^2 I(f_g)} - \frac{A\sigma^2}{\sigma^2+A} &= \sigma^2\p{1-\frac{\sigma^2}{\sigma^2+A} C(\teta)} - \frac{A\sigma^2}{\sigma^2+A} \\
&= \sigma^2\sqb{1-\frac{\sigma^2}{\sigma^2+A} \p{1 - \frac{\teta^2}{4} + o(\teta^2)}} - \frac{A\sigma^2}{\sigma^2+A}\\
& = \frac{\sigma^4}{\sigma^2 +A} \frac{1}{4} \teta^2 + \frac{\sigma^4}{\sigma^2 + A}o(\teta^2)\\
& = \frac{1}{4} \frac{\sigma^4}{\p{\sigma^2 +A}^2}\eta^2  + o(\teta^2)
\end{aligned}
$$

In particular, there exist $c >0, \Delta >0$ such that if $\teta \leq \Delta$:

$$ \sigma^2\sqb{1-\sigma^2 I(f_g)} - \frac{A\sigma^2}{\sigma^2+A} \geq c \frac{\sigma^4}{\p{\sigma^2 +A}^2}\eta^2$$
Recalling that $\teta =  \eta/\sqrt{A+\sigma^2}$, we conclude. We also note that we may let $c$ be arbitrarily close to $1/4$.

\end{proof}

\subsection{Proof and statement of Lemma~\ref{lem:fisher_info}}

\begin{lemm}
\label{lem:fisher_info}
Assume $\mu \sim g$ and $Z \mid \mu \;\sim\; \nn\p{\mu, \sigma^2}$. Also call $f_g$ the marginal density of $Z$ and define the Fisher information: 
$$I(f_g) := \int \frac{f_g'(x)^2}{f_g(x)} dx =  \EE[f_g]{\frac{f_g'(Z)^2}{f_g(Z)^2}}$$
Then it holds that:
$$\inf_{\hmu}\cb{\EE[g]{\p{\hmu-\mu}^2}}  = \sigma^2\sqb{1- \sigma^2 I(f_g)}$$
\end{lemm}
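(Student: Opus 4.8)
The plan is to reduce the infimum to the Bayes estimator, rewrite that estimator in terms of the marginal score $f_g'/f_g$ via Tweedie's formula, and then evaluate the resulting Bayes risk by Gaussian integration by parts. First I would note that among all estimators $\hmu = \hmu(Z)$, the minimizer of $\EE[g]{(\hmu - \mu)^2}$ is the posterior mean $t^*_g(z) := \EE[g]{\mu \mid Z = z}$, since conditional expectation minimizes mean squared error among measurable functions of $Z$; hence the left-hand side equals the Bayes risk $\EE[g]{(t^*_g(Z) - \mu)^2}$, and it remains only to evaluate this quantity.

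Next I would establish Tweedie's formula $t^*_g(z) = z + \sigma^2 f_g'(z)/f_g(z)$. Writing the marginal as $f_g(z) = \int \phi_\sigma(z - \mu) g(\mu)\, d\mu$, with $\phi_\sigma$ the $\nn(0, \sigma^2)$ density, and differentiating under the integral sign (using $\phi_\sigma'(u) = -(u/\sigma^2)\phi_\sigma(u)$) gives $\sigma^2 f_g'(z) = -z f_g(z) + \int \mu\, \phi_\sigma(z-\mu) g(\mu)\, d\mu$. Recognizing that $\phi_\sigma(z-\mu)g(\mu)/f_g(z)$ is exactly the posterior density of $\mu$ given $Z = z$ identifies the last integral as $f_g(z)\, t^*_g(z)$, and rearranging yields the formula.

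Finally I would compute the Bayes risk. Setting $\varepsilon := Z - \mu$ and $S(z) := f_g'(z)/f_g(z)$, Tweedie gives $t^*_g(Z) - \mu = \varepsilon + \sigma^2 S(Z)$, so expanding the square produces $\EE[g]{\varepsilon^2} + 2\sigma^2 \EE[g]{\varepsilon\, S(Z)} + \sigma^4 \EE[g]{S(Z)^2}$. The first term is $\sigma^2$ and the third is $\sigma^4 I(f_g)$ by definition. For the cross term I would condition on $\mu$ and invoke Stein's lemma for $\varepsilon \mid \mu \sim \nn(0, \sigma^2)$, which gives $\EE[g]{\varepsilon\, S(Z)} = \sigma^2 \EE[g]{S'(Z)}$; since $S' = f_g''/f_g - S^2$, integrating against $f_g$ and using $\int f_g''(z)\, dz = 0$ shows $\EE[g]{S'(Z)} = -I(f_g)$. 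Summing the three terms gives $\sigma^2 - 2\sigma^4 I(f_g) + \sigma^4 I(f_g) = \sigma^2[1 - \sigma^2 I(f_g)]$, as claimed. The hard part is purely regularity: justifying differentiation under the integral in Tweedie's formula, the applicability of Gaussian integration by parts, and the vanishing boundary term $\int f_g'' = 0$ (equivalently $f_g'(z) \to 0$ as $|z| \to \infty$). All of these hold because $f_g$ is a Gaussian convolution and is therefore smooth with rapidly decaying derivatives of every order, but I would flag them explicitly rather than leave them implicit.
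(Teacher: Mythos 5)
Your proposal is correct and follows essentially the same route as the paper's proof: Tweedie's formula for the posterior mean, expansion of the Bayes risk with $\varepsilon = Z - \mu$, Stein's lemma (conditionally on $\mu$) for the cross term, and the identity $\int f_g''(z)\,dz = 0$ to conclude $\EE[g]{S'(Z)} = -I(f_g)$. The only differences are cosmetic elaborations — you derive Tweedie's formula rather than cite it, and you flag the regularity conditions the paper explicitly declines to verify — so there is nothing further to add.
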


\begin{rema}
This formula is quite well know, see for example~\citep*{cohen2013empirical}. \citet{mukhopadhyay1995efficiency} call it Brown's formula in light of~\citep{brown1971admissible}. We give a proof for completeness; in which we do not justify switching integration and differentiation. For our purposes we only need the result for $g$ a mixture of two normals, in which case this is valid.
\end{rema}
\begin{rema}
As a simple application, consider $g= \nn\p{0, A}$, then $f_g = \nn\p{0, A+\sigma^2}$, so that $f_g(x) = \frac{1}{\sqrt{2\pi\p{\sigma^2+A}}} \exp\p{ -\frac{x^2}{2(\sigma^2+A)}}$ and $f_g'(x) = - \frac{1}{\sqrt{2\pi\p{\sigma^2+A}}}\frac{x}{\sigma^2+A}\exp\p{ -\frac{x^2}{2(\sigma^2+A)}}$. Thus $f_g'(x)^2/f_g(x)^2 = \frac{x^2}{\p{\sigma^2+A}^2}$ and $I(f_g) = \frac{1}{\sigma^2+A}$. The above result then states:
$$ \inf_{\hmu}\cb{\EE[g]{\p{\hmu-\mu}^2}} = \sigma^2\sqb{ 1- \frac{\sigma^2}{\sigma^2+A}} = \frac{\sigma^2 A}{\sigma^2 +A}$$
\end{rema}

\begin{proof}
We start with noting that the Bayes estimator is given by Tweedie's~\citep{efron2011tweedie} celebrated formula:
$$ \EE[g]{ \mu \mid Z = z} = z + \sigma^2 \frac{ f_g'(z)}{f_g(z)}$$
Then, the Bayes risk is given by (letting $\varepsilon := Z-\mu \sim \nn\p{0,\sigma^2}$):
$$
\begin{aligned}
\inf_{\hmu}\cb{\EE[g]{\p{\hmu-\mu}^2}}  &= \EE[g]{\p{\mu - Z - \sigma^2  \frac{ f_g'(Z)}{f_g(Z)}}^2} \\
&= \EE[g]{\p{-\varepsilon - \sigma^2  \frac{ f_g'(Z)}{f_g(Z)}}^2} \\
& = \sigma^2 + \sigma^4 I(f_g) + 2\sigma^2 \EE[g]{ \varepsilon \frac{ f_g'(Z)}{f_g(Z)}}\\
& = \sigma^2 - \sigma^4 I(f_g)
\end{aligned}
$$
It remains to justify that: $ \EE[g]{\varepsilon \frac{ f_g'(Z)}{f_g(Z)}} = -\sigma^2 I(f_g)$. To this end, first note that upon conditioning on $\mu$, we may use Stein's lemma, as follows:
$$
\begin{aligned}
\EE[g]{\varepsilon \frac{ f_g'(Z)}{f_g(Z)}} &= \EE[g]{\EE{\varepsilon \frac{ f_g'(\varepsilon + \mu)}{f_g(\varepsilon + \mu)} \mid \mu}}\\
&= \EE[g]{ \sigma^2 \EE{\frac{d}{d\varepsilon}\frac{ f_g'(\varepsilon + \mu)}{f_g(\varepsilon + \mu)} \mid \mu}}\\
&= \EE[g]{\sigma^2 \p{\frac{f_g''(Z)}{f_g(Z)} - \frac{f_g'(Z)^2}{f_g(Z)^2}}} \\
&= -\sigma^2 I(f_g)
\end{aligned}
$$
The last step that remains to be shown is that $\EE[g]{\frac{f_g''(Z)}{f_g(Z)}}=0$. But this is very similar to a standard Fisher information calculation, in which we interchange integration and differentiation to get that (here $\mu \sim g$): 
$$
\begin{aligned}
\EE[g]{\frac{f_g''(Z)}{f_g(Z)}} &= \int f_g''(z)dz = \frac{1}{\sqrt{2\pi \sigma^2}}\int \frac{d^2}{dz^2} \EE[g]{\phi((z - \mu)/\sigma)}dz = \frac{1}{\sqrt{2\pi \sigma^2}}\int \EE[g]{\frac{d^2}{dz^2}\phi((z - \mu)/\sigma)}dz\\
&= \frac{1}{\sqrt{2\pi \sigma^2}}\EE[g]{\int \frac{d^2}{dz^2}\phi((z - \mu)/\sigma)dz} = 0
\end{aligned}
$$

\end{proof}

\subsection{Local Fano's Lemma}

In this section we provide a Lemma to lower bound the expression $\smash{\inf_{\hat{V}_n} \PP{\hat{V}_n \neq V_n}}$ which appears in Lemma~\ref{lemm:minimax}. Below, we denote by $\mathbb P^X \otimes \nn\p{ m_v(\cdot),\; \sigma^2+A}$ the joint distribution of $(X,Z)$ when $X \sim \mathbb P^X$ and $Z \mid X \sim \nn\p{ m_v(X),\; \sigma^2+A}$.

\begin{lemm}[Local Fano]
\label{lemm:local_fano}

Assume there exists $\kappa >0$ such that for all $v,v' \in \mathcal{V}_n$:
$$ D_{\text{KL}}\p{ \mathbb P^X \otimes \nn\p{ m_v(\cdot),\; \sigma^2+A} \;\;||\;\; \mathbb P^X \otimes \nn\p{ m_{v'}(\cdot),\; \sigma^2+A}} \leq \kappa^2 $$
If also:
$$ \log( \abs{\mathcal{V}_n}) \geq 2(n\kappa^2 + \log(2))$$
Then:
$$ \inf_{\hat{V}_n} \PP{\hat{V}_n \neq V_n} \geq \frac{1}{2}$$
\end{lemm}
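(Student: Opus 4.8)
The plan is to recognize this as a textbook local Fano argument for the multiple hypothesis testing problem already set up in Lemma~\ref{lemm:minimax}: $V_n$ is uniform on $\mathcal{V}_n$ and, conditionally on $V_n = v$, the observed data $Y = (X_i, Z_i)_{1 \leq i \leq n}$ has law $P_v^{\otimes n}$, where $P_v := \mathbb P^X \otimes \nn(m_v(\cdot),\, \sigma^2 + A)$. Since any admissible $\hat{V}_n$ is measurable with respect to $Y$, we form the Markov chain $V_n \to Y \to \hat{V}_n$ and invoke Fano's inequality, which yields for every such estimator
$$ \PP{\hat{V}_n \neq V_n} \;\geq\; 1 - \frac{I(V_n; Y) + \log 2}{\log \abs{\mathcal{V}_n}}, $$
where $I(V_n; Y)$ denotes the mutual information. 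It therefore suffices to prove the single bound $I(V_n; Y) \leq n\kappa^2$: substituting this into the display and using the hypothesis $\log\abs{\mathcal{V}_n} \geq 2(n\kappa^2 + \log 2)$ makes the subtracted fraction at most $1/2$, so the right-hand side is at least $1/2$, and since the bound is uniform over $\hat{V}_n$ we may pass to the infimum.

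To control the mutual information I would use the standard convexity bound. Writing $\bar P := \abs{\mathcal{V}_n}^{-1}\sum_v P_v^{\otimes n}$ for the mixture marginal, one has $I(V_n; Y) = \abs{\mathcal{V}_n}^{-1}\sum_{v} D_{\text{KL}}(P_v^{\otimes n} \,\|\, \bar P)$, and by convexity of the KL divergence in its second argument (Jensen's inequality),
$$ I(V_n; Y) \;\leq\; \frac{1}{\abs{\mathcal{V}_n}^2}\sum_{v, v' \in \mathcal{V}_n} D_{\text{KL}}\p{P_v^{\otimes n} \,\big\|\, P_{v'}^{\otimes n}}. $$
Because the pairs $(X_i, Z_i)$ are conditionally i.i.d.\ given $V_n$, the KL divergence tensorizes, $D_{\text{KL}}(P_v^{\otimes n} \| P_{v'}^{\otimes n}) = n\, D_{\text{KL}}(P_v \| P_{v'})$, and the hypothesis of the lemma bounds each per-sample divergence by $\kappa^2$. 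Hence every summand is at most $n\kappa^2$ and the average is as well, giving $I(V_n; Y) \leq n\kappa^2$, which is exactly what was needed. Chaining this with the Fano display completes the proof.

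There is no genuine obstacle here; the argument is routine. The only points that require care are stating Fano's inequality in the correct form (with the $+\log 2$ numerator) and correctly composing the convexity bound on $I(V_n; Y)$ with the tensorization identity, so that the per-sample bound $\kappa^2$ supplied by the assumption is what ultimately enters, rather than an $n$-sample divergence one would have to bound separately.
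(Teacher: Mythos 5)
Your proposal is correct and follows essentially the same route as the paper's proof: Fano's inequality applied to the Markov chain $V_n \to (X_i,Z_i)_{1\leq i \leq n} \to \hat{V}_n$, followed by the mixture-convexity bound on the mutual information (the paper cites this as inequality (7.4.5) in \citet{duchi2019lecture}) together with tensorization of the KL divergence, which gives $I(V_n;(X_i,Z_i)_{1\leq i\leq n}) \leq n\kappa^2$ and hence the claimed bound of $1/2$. The only difference is that you spell out the Jensen/convexity step and the tensorization identity explicitly where the paper invokes them by citation.
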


\begin{proof}
Let $V_n$ uniformly distributed on $\mathcal{V}_n$ and $\hat{V}_n$ any estimator of $V_n$. Then by Fano's inequality (Corollary 7.9 in \citet{duchi2019lecture}):

$$ \PP{ \hat{V}_n \neq V_n} \geq 1 - \frac{I(V_n;(X_i, Z_i)_{1 \leq i \leq n}) + \log(2)}{\log(\abs{\mathcal{V}_n)}}$$

Here $I(V_n;(X_i, Z_i)_{1 \leq i \leq n})$ is the mutual information between $V_n$ and $(X_i, Z_i)_{1 \leq i \leq n}$. 

Next fix $v,v' \in \mathcal{V}_n$ and let $P_v, P_{v'}$ the induced distributions of $(X_1,Z_1)$ induced by $m_v$, resp. $m_{v'}$ in model~\eqref{eq:np-fay-herriot-model}, then by (7.4.5) in~\citet{duchi2019lecture}:
$$ I(V_n;(X_i, Z_i)_{1 \leq i \leq n}) \leq \frac{1}{\abs{\mathcal{V}}^2}\sum_{v,v' \in \mathcal{V}_n} D_{\text{KL}}(P_v^n || P_{v'}^n) \leq n \max_{v, v' \in \mathcal{V}_n}D_{\text{KL}}(P_v || P_{v'})  \leq n\kappa^2$$
The result follows.
\end{proof}

\subsection{Fay Herriot results}

\begin{proof}

For the upper bound, we will use Theorem~\ref{theo:upper_bd}, where our regression estimator is just the ordinary least squares fit, i.e. $\hat{m}(x) = x^\top \hat{\beta}$ with $\hat{\beta} = (X^\top X)^{-1}X^{\top}Z_{1:n}$. By $X$ we mean the usual design matrix in which the vectors $X_1,\dotsc,X_n$ are stacked as rows into a matrix.

We start by decomposing the error:

$$
\begin{aligned}
\EE{ \p{\hat{m}(X_{n+1}) - m(X_{n+1})}^2} &= \EE{\p{X_{n+1}^\top \hat{\beta} - X_{n+1}^\top \beta}^2}\\
&=  \EE{\tr\p{(\hat{\beta}-\beta)^\top X_{n+1}X_{n+1}^\top (\hat{\beta}-\beta)}}\\
&= \EE{\tr\p{(\hat{\beta}-\beta)(\hat{\beta}-\beta)^\top X_{n+1}X_{n+1}^\top }}\\
&= \tr\p{\EE{(\hat{\beta}-\beta)(\hat{\beta}-\beta)^\top}\Sigma }
\end{aligned}
$$
Hence recalling that $\EE{\hat{\beta}}=\beta$, we only need to study the covariance of $\hat{\beta}$.

$$
\begin{aligned}
\Cov{\hat{\beta}} &= \EE{\Cov{\hat{\beta} \mid X_{1:n}}} \;\;+\;\; \Cov{\EE{\hat{\beta} \mid X_{1:n}}}\\
 &= (\sigma^2+A)\EE{ \p{X^\top X}^{-1}}\;\; + \;\; 0 \\
 &=(\sigma^2+A) \Sigma^{-1}\frac{1}{n-d-1}
\end{aligned}
$$
The last equality holds because $X^{\top}X$ follows a Wishart distribution. See Theorem 2 in \citet{rosset2018fixed} and references therein for similar results. In total we get:
$$\EE{ \p{\hat{m}(X_{n+1}) - m(X_{n+1})}^2} = \tr\p{(\sigma^2+A)\Sigma^{-1}\frac{1}{n-d-1} \Sigma} = \frac{d (\sigma^2 + A)}{n-p-1}$$
For the lower bound, we will apply Lemma~\ref{lemm:minimax}. First we let $\mathcal{V}_n$ be an $1/2$ packing of the  Euclidean ($\ell_2$) unit ball which has cardinality at least $2^d$ (Lemma 7.6. in~\citet{duchi2019lecture})

Then, for $v \in V_n$ we define $\theta_v = \varepsilon v$ (we will specify $\varepsilon$ later). Then we let $\beta_v = \Sigma^{-1/2}\theta_v$ and note that for two distinct $v,v'$:
$$
\begin{aligned}
\EE{ \p{ X_{n+1}^\top \beta_v - X_{n+1}^\top \beta_{v'}}^2} &= \tr\p{\EE{(\beta_v-\beta_{v'})(\beta_v-\beta_{v'})^\top}\Sigma } \\
&= \tr\p{\EE{\Sigma^{-1/2}(\theta_v-\theta_{v'})(\theta_v-\theta_{v'})^\top \Sigma^{-1/2}}\Sigma } \\
&= \EE{ \Norm{\theta_v-\theta_{v'}}^2_2} \\
&\geq \frac{\varepsilon^2}{4}
\end{aligned}
$$
In the last step we used the packing property of the set $\mathcal{V}_n$ we defined.

On the other hand:
$$
\begin{aligned}
&D_{\text{KL}}\p{\nn\p{0,\Sigma} \otimes \nn\p{ \langle \cdot,  \beta_v \rangle,\; \sigma^2+A} \;\;||\;\; \nn\p{0,\Sigma} \otimes \nn\p{\langle \cdot, \beta_{v'} \rangle,\; \sigma^2+A}} \\
= &\EE{ D_{\text{KL}}\p{ \nn\p{ X_{n+1}^\top \beta_v, \sigma^2+A} \;\;||\;\;  \nn\p{X_{n+1}^\top \beta_{v'}, \sigma^2+A} \cond X_{n+1}}} \\
= & \EE{ \frac{1}{2(A+\sigma^2)} \p{ X_{n+1}^\top \beta_v - X_{n+1}^\top \beta_{v'}}^2} \\
= & \frac{1}{2(A+\sigma^2)}\EE{ \Norm{\theta_v-\theta_{v'}}^2_2}\\
\leq & \frac{2\varepsilon^2}{A+\sigma^2}
\end{aligned}
$$
To apply Lemma~\ref{lemm:local_fano} we need the following to hold for a constant $C$:
$$ \log(2^d) \geq  C \frac{n \varepsilon^2}{A+\sigma^2}$$
So we may pick $\varepsilon^2 = c\frac{d(A+\sigma^2)}{n}$ for a constant $c$. Since $\varepsilon \to 0$ as $n\to\infty$, we may apply Lemma~\ref{lemm:minimax} for large enough $n$ with separation say $\varepsilon^2/10$, by which we can conclude.
\end{proof}

\subsection{Lipschitz results}
\begin{proof}
The upper bound follows from Theorem~\ref{theo:upper_bd}, where the regressor $\hat{m}_n$ is the $k$-nearest neighbor regression predictor (KNN) with optimally tuned number of neighbors, see Theorem 6.2 and Problem 6.7 in~\citet{gyorfi2006distribution}.

For the lower bound, we will apply Lemma~\ref{lemm:minimax}. To this end, we start by constructing~$\mathcal{V}_n$ as in the proof of Theorem 3.2. in~\citet{gyorfi2006distribution}: We define $M_n \in \mathbb N$ and partition $[0,1]^{d}$ (we will pick $M_n$ later) into $M_n^d$ cubes $A_{n,j}$ of side length $1/M_n$ and with centers $a_{n,j}$. Next we take any function $\bar{m}: \RR^{d} \to \RR$ which is $1$-Lipschitz, vanishes outside $[-\frac{1}{2}, \frac{1}{2}]^d$ and $C_I := \int \bar{m}^2(x)dx > 0$. We also define $\bar{m}_L(\cdot) = L \cdot \bar{m}(\cdot)$. Finally, for $j = 1,\dotsc, M_n^d$ we define:
$$ \bar{m}_{L, n, j}(x) = \frac{1}{M_n}\bar{m}_L(M_n(x-a_{n,j}))$$
Then we let $\mathcal{V}_n \subset \cb{\pm 1}^{M_n^d}$ with $\abs{\mathcal{V}_n} \geq \exp(M_n^d/8)$ and so that for all $v, v' \in \mathcal{V}_n$:
$$ \sum_{j=1}^{M_{n}^d} \ind\p{ v_j \neq v_j'} \geq \frac{M_{n}^d}{4}$$
Such a set exists by the Gilbert-Varshamov bound (Lemma 7.5 in~\citet{duchi2019lecture}). With $\mathcal{V}_n$ in hand, we define for $v \in \mathcal{V}_n$:
$$ m_{v}(x) = \sum_{j=1}^{M_n^d} v_j  \bar{m}_{L, n, j}(x) $$
We argue that $m_{v}(x)$ indeed is $L$-Lipschitz: All $\bar{m}_{L, n, j}$ are $L$-Lipschitz, since so is $\bar{m}_L$ and furthermore observe that all $\bar{m}_{L, n, j},\; j=1,\dotsc,M_n^d$ have disjoint support.

Next, take $v \neq v' \in \mathcal{V}_n$. Then, since the $\bar{m}_{L, n, j}$ have disjoint support:
$$
\begin{aligned}
\int \p{m_v(x) - m_{v'}(x)}^2 d\mathbb P^X(x) &= \sum_{j=1}^{M_n^d} (v_j - v_j')^2 \int  \bar{m}_{L, n, j}^2(x) d\mathbb P^X(x) \\
& \geq \sum_{j=1}^{M_n^d} (v_j - v_j')^2 \eta \int \bar{m}_{L, n, j}^2 dx\\
& =  \sum_{j=1}^{M_n^d} (v_j - v_j')^2 \frac{\eta L^2}{M_n^{2+d}} C_I \\
& = \frac{4 \eta L^2 }{M_n^{2+d}} C_I \sum_{j=1}^{M_n^d}\ind\p{ v_j \neq v_j'}  \\
& \geq \frac{4 \eta L^2}{M_n^{2+d}} C_I \frac{M_{n}^d}{4}  = \eta C_I \frac{L^2}{M_n^2}
\end{aligned}
$$
On the other hand, let us bound the KL divergence between the distributions induced by $m_v,m_{v'}$:
$$
\begin{aligned}
&D_{\text{KL}}\p{ \mathbb P^X \otimes \nn\p{ m_v(\cdot),\; \sigma^2+A} \;\;||\;\; \mathbb P^X \otimes \nn\p{ m_{v'}(\cdot),\; \sigma^2+A}} \\
= &\EE{ D_{\text{KL}}\p{ \nn\p{ m_v(X_{n+1}),\; \sigma^2+A} \;\;||\;\;  \nn\p{m_{v'}(X_{n+1}),\; \sigma^2+A} \mid X_{n+1}}}\\
= & \int \frac{1}{2(\sigma^2+A)} \p{ m_v(x) - m_{v'}(x)}^2 d\mathbb P^X(x) \\
\leq & \frac{1}{2\eta (\sigma^2+A)}  \int \p{ m_v(x) - m_{v'}(x)}^2 dx \\
\leq &\frac{1}{2\eta(\sigma^2+A)} \frac{4 L^2}{M_n^{2+d}} C_I \sum_{j=1}^{M_n^d}\ind\p{ v_j \neq v_j'}\\
\leq &\frac{2 C_I}{\eta (\sigma^2+A)}\frac{L^2}{M_n^2}
\end{aligned}
$$
Next, we will lower bound $\inf_{\hat{V}_n}\PP{\hat{V}_n \neq V_n}$ by Lemma~\ref{lemm:local_fano}. To get the condition, we need that for some $C>0$:
$$ M_n^{d} \geq C \frac{L^2 n}{(\sigma^2 +A)M_n^2}  \; \Leftrightarrow M_n \geq C\p{\frac{L^2 n}{\sigma^2+A}}^{\frac{1}{2+d}}$$

Hence for some $C$, we set $M_n =  \lceil C\p{\frac{L^2 n}{\sigma^2+A}}^{\frac{1}{2+d}} \rceil$. Then the separation between two hypotheses $m_v, m_{v'}$ is equal to (for another constant $C'$):
$$ \int \p{m_v(x) - m_{v'}(x)}^2 d\mathbb P^X(x) \geq \eta C_I\frac{L^2}{M_n^2} \geq C'\p{\frac{L^d(\sigma^2+A)}{n}}^{\frac{2}{2+d}}$$
We conclude by Lemma~\ref{lemm:minimax} upon noting that $M_n \to \infty$ and hence $\sup_{v \in \mathcal{V}_n} \sup_x \abs{m_v(x)} \to 0$ as $n \to \infty$.

\end{proof}

\section{Results for sample-split EB in Section~\ref{sec:sample_split_eb}}
in Section~\ref{sec:sample_split_eb} we made the following point: Even if we knew the true $A$, it would not be the optimal $A$ to plug into~\eqref{eq:plugin_hat_t}. We formalize this in the following proposition:

\begin{prop}
\label{prop:best_A}
Consider model~\eqref{eq:np-fay-herriot-model}. Fix any (deterministic) function $\tilde{m}:\mathcal{X} \to \RR$ and define:
\begin{equation}
\label{eq:optimal_A}
A_{\tilde{m}} := \EE[m,A]{\p{\tilde{m}(X_{n+1}) - Z_{n+1}}^2} - \sigma^2
\end{equation}
Then:
$$ \EE[m,A]{\p{t^*_{\tilde{m}, A_{\tilde{m}}}(X_{n+1},Z_{n+1}) - \mu_{n+1}}^2} = \inf_{\tilde{A} \geq 0} \EE[m,A]{\p{t^*_{\tilde{m}, \tilde{A}}(X_{n+1},Z_{n+1}) - \mu_{n+1}}^2} $$
The above expressions are equal to: $\frac{\sigma^2 A_{\tilde{m}}}{ \sigma^2 + A_{\tilde{m}}}$. Furthermore, a direct consequence is that:
$$ \EE[m,A]{\p{t^*_{\tilde{m}, A_{\tilde{m}}}(X_{n+1},Z_{n+1}) - \mu_{n+1}}^2} \leq \EE[m,A]{\p{t^*_{\tilde{m}, A}(X_{n+1},Z_{n+1}) - \mu_{n+1}}^2}$$
\end{prop}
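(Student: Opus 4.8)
The plan is to reduce the claim to minimizing a one-dimensional strictly convex function in the shrinkage weight. First I would reparametrize the candidate estimators: setting $b := \tilde{A}/(\sigma^2 + \tilde{A})$, the Bayes rule~\eqref{eq:Bayes_rule} reads $t^*_{\tilde{m}, \tilde{A}}(x,z) = b z + (1-b)\tilde{m}(x)$, and as $\tilde{A}$ ranges over $[0,\infty)$ the weight $b$ ranges over $[0,1)$. Hence the infimum of the risk over $\tilde{A} \geq 0$ equals its infimum over $b \in [0,1)$.

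Next I would decompose the estimation error. Writing $\varepsilon := Z_{n+1} - \mu_{n+1}$, which is $\nn\p{0,\sigma^2}$ and in particular mean-zero conditionally on $(X_{n+1}, \mu_{n+1})$, we have
$$b Z_{n+1} + (1-b)\tilde{m}(X_{n+1}) - \mu_{n+1} = b\,\varepsilon + (1-b)\p{\tilde{m}(X_{n+1}) - \mu_{n+1}}.$$
Since $\tilde{m}(X_{n+1}) - \mu_{n+1}$ is a function of $(X_{n+1}, \mu_{n+1})$ and $\EE{\varepsilon \cond X_{n+1}, \mu_{n+1}} = 0$, the cross term vanishes, so the risk becomes $b^2 \sigma^2 + (1-b)^2 V$, where $V := \EE[m,A]{\p{\tilde{m}(X_{n+1}) - \mu_{n+1}}^2}$.

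The one identity that needs a short check is $V = A_{\tilde{m}}$. I would write $\mu_{n+1} = m(X_{n+1}) + \zeta$ with $\zeta \sim \nn\p{0,A}$ independent of $X_{n+1}$, and $Z_{n+1} = \mu_{n+1} + \varepsilon$; expanding both $\EE[m,A]{\p{\tilde{m}(X_{n+1}) - \mu_{n+1}}^2}$ and the defining quantity $\EE[m,A]{\p{\tilde{m}(X_{n+1}) - Z_{n+1}}^2} - \sigma^2$, the independence of $\zeta$ and $\varepsilon$ from $X_{n+1}$ (and from each other) annihilates all cross terms and leaves $V = A + \EE[m,A]{\p{\tilde{m}(X_{n+1}) - m(X_{n+1})}^2} = A_{\tilde{m}}$.

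It then only remains to minimize the strictly convex quadratic $b \mapsto b^2 \sigma^2 + (1-b)^2 A_{\tilde{m}}$. Its unconstrained minimizer $b^\star = A_{\tilde{m}}/(\sigma^2 + A_{\tilde{m}})$ satisfies $b^\star \in [0,1)$ because $A_{\tilde{m}} \geq A \geq 0$, so it is feasible and corresponds exactly to $\tilde{A} = A_{\tilde{m}}$; substituting yields the optimal value $\sigma^2 A_{\tilde{m}}/(\sigma^2 + A_{\tilde{m}})$, proving the first two assertions. The final consequence is then immediate: $A \geq 0$ is one admissible choice of $\tilde{A}$, while $A_{\tilde{m}}$ is the global minimizer, so $\EE[m,A]{\p{t^*_{\tilde{m}, A_{\tilde{m}}}(X_{n+1},Z_{n+1}) - \mu_{n+1}}^2} \leq \EE[m,A]{\p{t^*_{\tilde{m}, A}(X_{n+1},Z_{n+1}) - \mu_{n+1}}^2}$. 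There is no genuinely hard step here; the only points requiring care are the vanishing of the cross term and the identity $V = A_{\tilde{m}}$, together with the observation that reparametrizing by $b$ makes the interior minimizer feasible.
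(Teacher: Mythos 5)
Your proof is correct, and its core is the same as the paper's: reduce the problem to minimizing a one-dimensional convex quadratic in the scalar shrinkage weight (the paper uses the weight $\lambda = \sigma^2/(\sigma^2+\tilde{A})$ on $\tilde{m}$, you use $b = 1-\lambda$ on $Z$), find the interior minimizer, and map it back to $\tilde{A} = A_{\tilde{m}}$. The substantive difference is how the quadratic is produced. The paper writes the error as $\lambda\p{\tilde{m}(X_{n+1})-Z_{n+1}} + \p{Z_{n+1}-\mu_{n+1}}$, so the coefficient of $\lambda^2$ is $\EE[m,A]{\p{\tilde{m}(X_{n+1})-Z_{n+1}}^2} = A_{\tilde{m}}+\sigma^2$ by the very definition of $A_{\tilde{m}}$, no separate identity needed; the price is a nonvanishing cross term, which the paper computes to be $-2\lambda\sigma^2$ via $\EE[m,A]{\tilde{m}(X_{n+1})\p{Z_{n+1}-\mu_{n+1}}}=0$ and $\EE[m,A]{Z_{n+1}\p{Z_{n+1}-\mu_{n+1}}}=\sigma^2$. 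You instead write the error as $b\varepsilon + (1-b)\p{\tilde{m}(X_{n+1})-\mu_{n+1}}$, an orthogonal decomposition whose cross term vanishes outright, giving the clean risk $b^2\sigma^2 + (1-b)^2 V$ with $V = \EE[m,A]{\p{\tilde{m}(X_{n+1})-\mu_{n+1}}^2}$; the price is the separate identity $V = A_{\tilde{m}}$, which you verify correctly. Your route makes it transparent why the optimal plug-in variance is the full mean-squared error of $\tilde{m}$ for $\mu$ (prior variance plus squared regression error), which is exactly the intuition stated in Section~\ref{sec:sample_split_eb}; the paper's route keeps the observable quantity $\EE[m,A]{\p{\tilde{m}(X_{n+1})-Z_{n+1}}^2}$ front and center, which is the population analogue of the sample-splitting estimate \eqref{eq:A_sure_homoskedastic} and so streamlines the downstream perturbation argument in Proposition~\ref{prop:parametric_A}. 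Both arguments are complete; yours is also slightly more careful about the parameter range, noting that $b$ ranges over $[0,1)$ exactly as $\tilde{A}$ ranges over $[0,\infty)$ and that the unconstrained minimizer is feasible.
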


\begin{proof}
Let us consider the following class of shrinkage rules, where $\lambda \in [0,1]$:

$$ t_{\lambda}(x,z) = \lambda \tilde{m}(x) + (1-\lambda)z = \lambda\p{\tilde{m}(x)-z} + z$$
Then our goal will be to minimize the following function over $\lambda \in [0,1]$:

\begin{equation}
\label{eq:lambda_loss}
J(\lambda) = \EE[m,A]{\p{t_{\lambda}(X_{n+1},Z_{n+1}) - \mu_{n+1}}^2}
\end{equation} 
To this end:
\begin{equation*}
\begin{aligned}
J(\lambda) &= \EE[m,A]{\cb{t_{\lambda}(X_{n+1},Z_{n+1}) - \mu_{n+1}}^2} \\
&= \EE[m,A]{\cb{\lambda \p{\tilde{m}(X_{n+1})-Z_{n+1}} + Z_{n+1} - \mu_{n+1}}^2} \\
&= \lambda^2 \EE[m,A]{\p{\tilde{m}(X_{n+1})-Z_{n+1}}^2} +2\lambda\EE[m,A]{\p{\tilde{m}(X_{n+1})-Z_{n+1}}\p{Z_{n+1} - \mu_{n+1}}} + \sigma^2 \\
&= \lambda^2 \EE[m,A]{\p{\tilde{m}(X_{n+1})-Z_{n+1}}^2} - 2\lambda \sigma^2 + \sigma^2
\end{aligned}
\end{equation*}
The last step follows from the two following intermediate results:

$$
\EE[m,A]{\tilde{m}(X_{n+1})\p{Z_{n+1} - \mu_{n+1}}} = \EE[m,A]{\tilde{m}(X_{n+1})\EE[m,A]{Z_{n+1} - \mu_{n+1} \mid X_{n+1}}} = 0
$$
$$
\EE[m,A]{Z_{n+1}\p{Z_{n+1} - \mu_{n+1}}} = \EE[m,A]{\Var[m,A]{ Z_{n+1} \mid \mu_{n+1}}} = \sigma^2
$$
We may now directly minimizer over $A$ to see that the optimal $\lambda$ is given by: 

$$ \lambda^*(\tilde{m}) = \frac{\sigma^2}{\EE[m,A]{\p{\tilde{m}(X_{n+1})-Z_{n+1}}^2}}$$
The form of $A_{\tilde{m}}$ then directly follows by noting the one-to-one correspondence $\lambda \leftrightarrow \frac{\sigma^2}{A_{\tilde{m}}+\sigma^2}$.
\end{proof}

We will now prove that for deterministic $\tilde{m}$, as in Proposition~\ref{prop:best_A}, parametric rates are possible in the estimation of $A_{\tilde{m}}$, which translate into $O(1/n)$ decay of the regret.

\begin{prop}
\label{prop:parametric_A}
Consider $n$ i.i.d. observations $(X_i, Z_i)$ from model~\eqref{eq:np-fay-herriot-model} with $A, \sigma >0$. Fix any (deterministic) function $\tilde{m}:\mathcal{X} \to \RR$ with $\EE{\tilde{m}(X_{n+1})^4} \leq M$ for some $M<\infty$ (here $X_{n+1} \sim \mathbb P^X$). Let:
$$ \hat{A}_n = \p{\frac{1}{n}\sum_{k=1}^{n} \p{\tilde{m}(X_{k})-Z_{k}}^2 - \sigma^2}_+$$
Then $\hat{t}_n= t^*_{\tilde{m}, \hat{A}_n}$ satisfies:
$$\EE[m,A]{L\p{\hat{t}_n; m, A}} \leq \EE[m,A]{L(t^*_{\tilde{m}, A_{\tilde{m}}}; m, A)} + O(1/n)$$
Thus also:
$$\EE[m,A]{L\p{\hat{t}_n; m, A}} \leq \EE[m,A]{L(t^*_{\tilde{m},A}; m, A)} + O(1/n)$$
\end{prop}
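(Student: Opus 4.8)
The plan is to reduce the entire statement to a one-dimensional variance bound for the shrinkage weight. Both denoisers in play are convex-combination (shrinkage) rules of the common form $t_\lambda(x,z) = \lambda \tilde{m}(x) + (1-\lambda) z$: the plug-in rule $\hat{t}_n = t^*_{\tilde{m}, \hat{A}_n}$ uses the data-driven weight $\hat{\lambda}_n = \sigma^2/(\sigma^2 + \hat{A}_n) = \min\cb{1,\, \sigma^2/\hat{S}_n}$, where $\hat{S}_n := \frac{1}{n}\sum_{k=1}^n \p{\tilde{m}(X_k) - Z_k}^2$, while the oracle-inflated rule $t^*_{\tilde{m}, A_{\tilde{m}}}$ uses $\lambda^* = \sigma^2/(\sigma^2 + A_{\tilde{m}}) = \sigma^2/S$ with $S := \EE[m,A]{\p{\tilde{m}(X_{n+1}) - Z_{n+1}}^2} = A_{\tilde{m}} + \sigma^2$. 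Since the fresh triple $(X_{n+1}, Z_{n+1}, \mu_{n+1})$ is independent of the training sample, the excess risk of $\hat{t}_n$ \emph{conditional on the data} equals $J(\hat{\lambda}_n) - (\text{Bayes risk})$, where $J(\lambda) = \lambda^2 S - 2\lambda \sigma^2 + \sigma^2$ is exactly the quadratic computed in the proof of Proposition~\ref{prop:best_A}. Taking the outer expectation over the data and subtracting the deterministic risk of $t^*_{\tilde{m}, A_{\tilde{m}}}$, the Bayes-risk terms cancel and I am left with
\begin{equation*}
\EE[m,A]{L\p{\hat{t}_n; m, A}} - \EE[m,A]{L\p{t^*_{\tilde{m}, A_{\tilde{m}}}; m, A}} = \EE[m,A]{J(\hat{\lambda}_n)} - J(\lambda^*).
\end{equation*}

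Next I would use the vertex form of the quadratic. Because $A > 0$ forces $S = A_{\tilde{m}} + \sigma^2 > \sigma^2$, the minimizer $\lambda^* = \sigma^2/S$ lies strictly inside $[0,1]$, and a direct computation gives the clean identity $J(\lambda) - J(\lambda^*) = S\p{\lambda - \lambda^*}^2$. Hence the regret gap equals $S \cdot \EE[m,A]{\p{\hat{\lambda}_n - \lambda^*}^2}$, and everything reduces to showing this mean-squared error of the estimated shrinkage weight is $O(1/n)$.

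For the weight estimate I would exploit that $\hat{\lambda}_n = g(\hat{S}_n)$ and $\lambda^* = g(S)$ for the same map $g(s) = \min\cb{1,\, \sigma^2/s}$ (using $S > \sigma^2$ so that $g(S) = \sigma^2/S$). The map $g$ is $1/\sigma^2$-Lipschitz on $(0,\infty)$ — it is constant on $(0,\sigma^2]$ and has derivative $-\sigma^2/s^2$ of magnitude at most $1/\sigma^2$ for $s \geq \sigma^2$, and is continuous at the kink $s=\sigma^2$ — so $\p{\hat{\lambda}_n - \lambda^*}^2 \leq \sigma^{-4}\p{\hat{S}_n - S}^2$ pointwise, with no case analysis on the truncation. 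Since $\hat{S}_n$ is an empirical mean of the i.i.d. variables $W_k := \p{\tilde{m}(X_k) - Z_k}^2$ with $\EE[m,A]{W_k} = S$, I get $\EE[m,A]{\p{\hat{S}_n - S}^2} = \Var{W_1}/n \leq \EE[m,A]{W_1^2}/n$. The only remaining point is finiteness of $\EE[m,A]{W_1^2} = \EE[m,A]{\p{\tilde{m}(X_1) - Z_1}^4}$: conditioning on $X_1$, the increment $\tilde{m}(X_1) - Z_1$ is Gaussian with mean $\tilde{m}(X_1) - m(X_1)$ and variance $A + \sigma^2$, so its conditional fourth moment is a polynomial in $\p{\tilde{m}(X_1) - m(X_1)}^2$ whose expectation is controlled by $\EE[m,A]{\tilde{m}(X_1)^4} \leq M$ together with the (finite) moments of $m(X_1)$ available for the function classes under study. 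Combining the displays yields the bound $S \sigma^{-4}\Var{W_1}/n = O(1/n)$, with a constant permitted to depend on $A, \sigma, M$ (which is allowed here, unlike in Section~\ref{sec:minimax}).

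Finally, the second, ``thus also'' inequality is immediate from Proposition~\ref{prop:best_A}: that proposition shows $\EE[m,A]{\p{t^*_{\tilde{m}, A_{\tilde{m}}}(X_{n+1},Z_{n+1}) - \mu_{n+1}}^2}$ minimizes the risk over all $\tilde{A} \geq 0$, so in particular $\EE[m,A]{L\p{t^*_{\tilde{m}, A_{\tilde{m}}}; m, A}} \leq \EE[m,A]{L\p{t^*_{\tilde{m}, A}; m, A}}$, and chaining this with the first inequality gives the claim. The one genuinely delicate step is the Lipschitz control of the truncated weight map $g$ at the boundary $s = \sigma^2$; once that is in hand, the rest is a standard i.i.d. second-moment computation.
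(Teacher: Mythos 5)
Your proposal is correct and follows essentially the same route as the paper's proof: the identical quadratic identity $J(\lambda)-J(\lambda^*)=S(\lambda-\lambda^*)^2$ reducing the regret gap to $S\,\mathbb{E}\bigl[(\hat{\lambda}_n-\lambda^*)^2\bigr]$, the same $O(1/n)$ variance bound for the empirical second moment $\hat{S}_n$, and the same appeal to Proposition~\ref{prop:best_A} for the final inequality; your Lipschitz map $g(s)=\min\{1,\sigma^2/s\}=\sigma^2/(\sigma^2\vee s)$ is exactly the paper's two-step ``take the maximum with $\sigma^2$, then apply Taylor to $u\mapsto 1/u$'' argument packaged into a single function. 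The one caveat you hedge on --- that finiteness of $\operatorname{Var}\bigl((\tilde{m}(X_1)-Z_1)^2\bigr)$ also requires $\mathbb{E}[m(X_1)^4]<\infty$, which is not implied by the stated assumption on $\tilde{m}$ alone --- is glossed over equally by the paper itself, so it does not distinguish the two arguments.
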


\begin{proof}
We consider again the $J(\lambda)$ from \eqref{eq:lambda_loss} and recall that $J(\lambda^*(\tilde{m})) = \min_{\lambda \geq 0 } J(\lambda)$. We note that $J(\lambda)$ is a convex quadratic in $\lambda$ with:

$$ J'(\lambda) = 2\lambda \EE[m,A]{\p{\tilde{m}(X_{n+1})-Z_{n+1}}^2} - 2\sigma^2,\;\; J''(\lambda) = 2\EE[m,A]{\p{\tilde{m}(X_{n+1})-Z_{n+1}}^2}$$
Thus, since $J'(\lambda^*(\tilde{m}))  = 0$, we get for any $\lambda$:

$$J(\lambda) = J(\lambda^*(\tilde{m})) + \EE[m,A]{\p{\tilde{m}(X_{n+1})-Z_{n+1}}^2} \p{ \lambda - \lambda^*(\tilde{m})}^2$$
This means that:
$$ L(t_{\lambda}; m, A) = L(t_{\lambda^*(\tilde{m})}; m, A) + \EE[m,A]{\p{\tilde{m}(X_{n+1})-Z_{n+1}}^2}\p{ \lambda - \lambda^*(\tilde{m})}^2$$

Hence to conclude we will need to bound $\EE[m,A]{\p{ \hat{\lambda}_n - \lambda^*(\tilde{m})}^2}$, where:
$$\hat{\lambda}_n = \frac{\sigma^2}{ \sigma^2 \lor \p{\frac{1}{n}\sum_{k=1}^{n} \p{\tilde{m}(X_{k})-Z_{k}}^2}}$$

%Let $\tilde{Z}_i = Z_i/\sqrt{A+\sigma^2}$ and $\widetilde{M}_i = \tilde{m}(X_i)/\sqrt{A+\sigma^2}$, then: 
%\begin{equation}
%\begin{aligned}
%\EE{ \p{\hat{\lambda} - \lambda^*(\tilde{m})}^2} & = \frac{\sigma^4 n^2}{(A+\sigma^2)^2} \EE{\p{\frac{1}{\sum_{k=1}^n \p{\tilde{Z}_{n+k}-\widetilde{M}_{n+k}}^2} - \frac{1}{n\EE{\p{\tilde{Z}_{n+k}-\widetilde{M}_{n+k}}^2}}}^2} \\
%& \leq \frac{\sigma^4 n^2}{(A+\sigma^2)^2}\EE{\p{\frac{1}{\sum_{k=1}^n \p{\tilde{Z}_{n+k}}^2} - \frac{1}{n}}^2 }
%\end{aligned}
%\end{equation}
Using the fact that both $\sigma^2 \lor \p{\frac{1}{n}\sum_{k=1}^{n} \p{\tilde{m}(X_{k})-Z_{k}}^2}$ and $\EE[m,A]{\p{\tilde{m}(X_{k}) - Z_k}^2}$ are $\geq \sigma^2$ and Taylor's theorem applied to $u \mapsto 1/u$, we get:

\begin{equation*}
\begin{aligned}
\EE[m,A]{ \p{\hat{\lambda} - \lambda^*(\tilde{m})}^2} &= \sigma^4 \EE[m,A]{ \p{\frac{1}{ \sigma^2 \lor \p{\frac{1}{n}\sum_{k=1}^{n} \p{\tilde{m}(X_{k})-Z_{k}}^2}} -  \frac{1}{\EE[m,A]{\p{\tilde{m}(X_{k}) - Z_k}^2}}}^2} \\
\leq &\EE[m,A]{ \p{ \sigma^2 \lor \p{\frac{1}{n}\sum_{k=1}^{n} \p{\tilde{m}(X_{k})-Z_{k}}^2} -  \EE[m,A]{\p{\tilde{m}(X_{k}) - Z_k}^2}}^2} \\
\leq & \EE[m,A]{ \p{ \frac{1}{n}\sum_{k=1}^{n} \p{\tilde{m}(X_{k})-Z_{k}}^2 -  \EE[m,A]{\p{\tilde{m}(X_{k}) - Z_k}^2}}^2} \\
= & \Var[m,A]{ \frac{1}{n} \sum_{k=1}^n \p{\tilde{m}(X_{k})-Z_{k}}^2}\\
= & \frac{1}{n}  \Var[m,A]{\p{\tilde{m}(X_{k})-Z_{k}}^2}
\end{aligned}
\end{equation*}

This is $O(1/n)$ as long as $\Var[m,A]{\p{\tilde{m}(X_{k})-Z_{k}}^2}$ is upper bounded, which is the case under the given assumptions. The last statement follows from Proposition~\ref{prop:best_A}.
%We first study the squared error in estimating $\EE[m]{\p{\tilde{m}(X_{n+1})-Z_{n+1}}^2}$.

%$$\EE{\cb{\frac{1}{n}\sum_{k=1}^{n} \p{\tilde{m}(X_{n+k})-Z_{n+k}}^2 - \EE{\p{\tilde{m}(X_{n+1})-Z_{n+1}}^2}}^2} = \frac{1}{n}\Var{\p{\tilde{m}(X_{n+1})-Z_{n+1}}^2} $$
%Next:

%$$
%\begin{aligned}
%\Var{\p{\tilde{m}(X_{n+1})-Z_{n+1}}^2} &\leq \EE{\p{\tilde{m}(X_{n+1})-Z_{n+1}}^4} \\
%&\leq 8\EE{\p{\tilde{m}(X_{n+1})-m(X_{n+1})}^4} + 8\EE{ \p{Z_{n+1} - m(X_{n+1})}^4} \\
%& \leq 8\EE{\p{\tilde{m}(X_{n+1})-m(X_{n+1})}^4} + 24(A+\sigma^2)^2
%\end{aligned}
%$$

%\fixme{Can make this tighter by considering the characterization of chi-squared as mixture of Poissons, not sure I will have time for this though.}

\end{proof}

We are now in a position to prove Theorem~\ref{theo:optimal_A_datadriven}

\begin{proof}[Theorem~\ref{theo:optimal_A_datadriven}]
We apply Proposition~\ref{prop:parametric_A} for the data in fold $I_2$ conditionally on the first fold, i.e. conditionally on $Z_{I_1}, \mu_{I_1}, X_{I_1}$.
\end{proof}

\section{Results under misspecification}

\subsection{Proof of Theorem~\ref{theo:james_stein} (James-Stein property)}

Before proceeding with the proof, let us introduce the following lemma:

\begin{lemm}
\label{lemm:js_loc_shift}
Fix $\nu \in \mathbb N$, a fixed vector $\xi = (\xi_1,\dotsc,\xi_{\nu})$, a mean vector $\theta=(\theta_1,\dotsc, \theta_{\nu})$ and independent $Y_1, \dotsc, Y_{\nu}$ distributed as $Y_i \sim \nn\p{\theta_i, \sigma^2}$. Then consider the following positive-part James-Stein type estimator, parametrized by $a >0$:

\begin{equation}
\label{eq:green_strawderman_js}
\hat{\theta}_{a} = \xi + \p{1 - \frac{a \sigma^2}{\Norm{Y-\xi}_2^2}}_+\p{ Y - \xi}
\end{equation}
This estimator has risk:

\begin{equation}
\label{eq:green_strawderman_risk}
\EE{\Norm{\hat{\theta}_a - \theta}^2_2} \leq \nu \sigma^2 - a\sigma^2\sqb{2(\nu -2) -a }\EE{\frac{\sigma^2}{\Norm{Y-\xi}_2^2}}
\end{equation}
In particular, if $\nu \geq 5$ (resp. $\nu \geq 3$), $\hat{\theta}_{\nu}$ (resp. $\hat{\theta}_{\nu -2}$) has squared error risk $< \nu \sigma^2$.
\end{lemm}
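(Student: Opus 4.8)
The plan is to reduce to the canonical James--Stein problem and then combine an exact risk identity for the untruncated estimator with a positive-part domination argument. First I would perform the location shift $W := Y - \xi$ and $\mu := \theta - \xi$, so that $W \sim \nn\p{\mu, \sigma^2 I_\nu}$ with independent coordinates, and note that $\Norm{\hat\theta_a - \theta}_2^2 = \Norm{\hat\mu_a - \mu}_2^2$, where $\hat\mu_a := \p{1 - a\sigma^2/\Norm{W}_2^2}_+ W$. This turns~\eqref{eq:green_strawderman_risk} into a statement about the positive-part James--Stein estimator of the mean of an isotropic Gaussian, and in particular replaces $\Norm{Y-\xi}_2^2$ by $\Norm{W}_2^2$ throughout.

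The key observation is that the right-hand side of~\eqref{eq:green_strawderman_risk} is exactly the risk of the \emph{untruncated} estimator $\tilde\mu := \p{1 - a\sigma^2/\Norm{W}_2^2}W$. I would verify this by writing $\tilde\mu = W + g(W)$ with $g(w) = -a\sigma^2\, w/\Norm{w}_2^2$ and applying Stein's identity, $\EE{\Norm{\tilde\mu - \mu}_2^2} = \nu\sigma^2 + \EE{\Norm{g(W)}_2^2} + 2\sigma^2\EE{\nabla\cdot g(W)}$. A direct computation gives $\Norm{g(w)}_2^2 = a^2\sigma^4/\Norm{w}_2^2$ and $\nabla\cdot g(w) = -a\sigma^2(\nu-2)/\Norm{w}_2^2$, so the three terms collapse to $\nu\sigma^2 - a\sigma^2\sqb{2(\nu-2)-a}\EE{\sigma^2/\Norm{W}_2^2}$, matching the stated bound \emph{with equality}. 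Here $\nu \geq 3$ ensures $\EE{1/\Norm{W}_2^2} < \infty$, which I would invoke to justify the identity.

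It then remains to show that truncating at zero can only help, i.e. $\EE{\Norm{\hat\mu_a - \mu}_2^2} \leq \EE{\Norm{\tilde\mu - \mu}_2^2}$; this positive-part domination is the main obstacle, since the truncated shrinkage factor is non-smooth. I would handle it by conditioning on the radius $R := \Norm{W}_2$. Given $R = r$, both estimators have the form $bW$ for a scalar $b$ (namely $\tilde c := 1 - a\sigma^2/r^2$ for $\tilde\mu$ and its positive part for $\hat\mu_a$), and the conditional risk $b^2 r^2 - 2b\,h(r) + \Norm{\mu}_2^2$ is a convex quadratic in $b$, where $h(r) := \EE{W^\top\mu \mid \Norm{W}_2 = r}$. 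The two estimators agree whenever $\tilde c \geq 0$; on the complementary event $\tilde c < 0$ while the truncated factor is $0$, and the difference of conditional risks (untruncated minus truncated) equals $\tilde c\,\p{\tilde c\, r^2 - 2h(r)}$. The crux is to show $h(r) \geq 0$: conditionally on $\Norm{W}_2 = r$ the density of $W$ on the sphere of radius $r$ is proportional to $\exp\p{w^\top\mu/\sigma^2}$, and the surface measure is invariant under reflection across the hyperplane orthogonal to $\mu$, which flips the sign of $w^\top\mu$; pairing each $w$ with its reflection, the integrand defining $h(r)$ combines to $2\p{w^\top\mu}\sinh\p{w^\top\mu/\sigma^2} \geq 0$, whence $h(r) \geq 0$. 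With this, the difference of conditional risks is a product of two nonpositive factors, hence nonnegative, and integrating over $R$ yields the domination.

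Finally, I would read off the two special cases from the coefficient $a\sqb{2(\nu-2)-a}$. For $a = \nu$ it equals $\nu(\nu-4)$, which is strictly positive precisely when $\nu \geq 5$, and for $a = \nu - 2$ it equals $(\nu-2)^2$, strictly positive when $\nu \geq 3$; since $\EE{\sigma^2/\Norm{W}_2^2}$ is strictly positive and finite, the bound is strictly below $\nu\sigma^2$ in each case.
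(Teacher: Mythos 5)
Your proof is correct, and its skeleton is the same as the paper's: establish that the \emph{untruncated} shrinker attains the right-hand side of \eqref{eq:green_strawderman_risk} with equality, argue that taking the positive part can only decrease the risk, and then check the sign of $a\sqb{2(\nu-2)-a}$ at $a=\nu$ and $a=\nu-2$. The difference is self-containedness: the paper's proof consists essentially of two citations --- Lemma 1 of \citet{green1991james} for the exact risk identity of the untruncated estimator, and \citet{baranchik1964multiple} for positive-part domination --- whereas you prove both ingredients. Your Stein/SURE computation (with $g(w) = -a\sigma^2 w/\Norm{w}_2^2$, $\Norm{g(w)}_2^2 = a^2\sigma^4/\Norm{w}_2^2$, $\nabla\cdot g(w) = -a\sigma^2(\nu-2)/\Norm{w}_2^2$) is the standard derivation and is carried out correctly, including the observation that $\nu\geq 3$ is needed both for $\EE{1/\Norm{W}_2^2}<\infty$ and to justify Stein's identity for this singular $g$. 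Your domination argument --- conditioning on the radius, noting that both estimators are scalar multiples of $W$, and proving $h(r)=\EE{W^\top\mu \mid \Norm{W}_2=r}\geq 0$ via the reflection/$\sinh$ pairing so that the risk difference factors as a product of two nonpositive terms --- is a clean and correct rendition of the classical Baranchik-type argument (the degenerate case $\mu=0$, where the reflection is undefined, gives $h(r)=0$ trivially). What your version buys is a fully self-contained proof suitable for a reader who does not want to chase references; what the paper's buys is brevity. One small point in your favor: the paper's final line contains a typo ($a\sigma^2\sqb{2(\nu-2)-a} = \nu\sigma^2(\nu-4)$ for $a=\nu$, not $\sigma^2\sqb{\nu-4}$), while your coefficient computations $\nu(\nu-4)$ and $(\nu-2)^2$ are the correct ones.
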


\begin{proof}
Estimator~\eqref{eq:green_strawderman_js} where we do not take the positive part of $\p{1 - \frac{a \sigma^2}{\Norm{Y-\xi}_2^2}}_+$ has risk precisely equal to the RHS in~\eqref{eq:green_strawderman_risk}. This is well known, see for example Lemma 1 in~\citep{green1991james} and references therein. The positive part estimator then has smaller risk, as also follows from well known results on James-Stein    estimation, see e.g.~\citep{baranchik1964multiple}. Finally when $a = \nu \geq 5$,  $a\sigma^2\sqb{2(\nu -2) -a } = \sigma^2 \sqb{\nu - 4} > 0$.
\end{proof}

We are ready to prove Theorem~\ref{theo:james_stein}:

\begin{proof}
Let $\aPP[I_1]{\; \cdot \;} = \PP{ \;\cdot \mid Z_{I_1}, \mu_{1:n}, X_{1:n}}$. Then w.r.t. $\aPP[I_1]{\cdot}$, it holds that $(Z_i)_{i \in I_2}$ are independent and $Z_i \sim \nn\p{\mu_i, \sigma^2}$ for $i \in I_2$. Furthermore $\hat{m}_{I_1}(X_{I_2})=(\hat{m}_{I_1}(X_i))_{i \in I_2}$ is deterministic w.r.t. $\aPP[I_1]{\cdot}$ and also recall that:

$$
\begin{aligned}
\hat{\mu}_{I_2}^{\text{EBCF}} &= \frac{\sigma^2}{\hat{A}_{I_2}+ \sigma^2}\hat{m}_{I_1}(X_{I_2}) +\frac{\hat{A}_{I_2}}{\hat{A}_{I_2}+ \sigma^2}Z_{I_2} \\
&= \hat{m}_{I_1}(X_{I_2}) + \p{ 1-\frac{\sigma^2}{\hat{A}_{I_2}+ \sigma^2}}\p{ Z_{I_2} - \hat{m}_{I_1}(X_{I_2})}
\end{aligned}
$$

Also from~\eqref{eq:A_sure_homoskedastic} it holds that: 

$$ \hat{A}_{I_2} = \p{\frac{1}{\abs{I_2}}\sum_{i \in I_2} \p{\hat{m}_{I_1}(X_{i})-Z_i}^2 - \sigma^2}_+ $$
Thus $\hat{\mu}_{I_2}$ takes precisely the form from~\eqref{eq:green_strawderman_js} with $a = \abs{I_2}$ and thus applying Lemma~\ref{lemm:js_loc_shift} (w.r.t. $\aPP[I_1]{\; \cdot \;}$, also by assumption $\abs{I_2} \geq 5$), we get:

 $$ \sum_{i \in I_2} \aEE[I_1]{ \p{\mu_i - \hat{\mu}_i^{\text{EBCF}}}^2} <  \sum_{i \in I_2} \aEE[I_1]{ (\mu_i - Z_i)^2} = \abs{I_2} \sigma^2 $$
Integrate w.r.t. $Z_{I_1}$, to get:

$$ \sum_{i \in I_2} \EE{ \p{\mu_i - \hat{\mu}_i^{\text{EBCF}}}^2 \cond \mu_{1:n}, X_{1:n}} < \abs{I_2} \sigma^2 $$
Now apply the symmetric argument with the folds flipped to also get:

$$ \sum_{i \in I_1} \EE{ \p{\mu_i - \hat{\mu}_i^{\text{EBCF}}}^2 \cond \mu_{1:n}, X_{1:n}} < \abs{I_1} \sigma^2 $$
Add both inequalities and divide by $n$ to conclude.
\end{proof}

\subsection{SURE results}
\label{sec:SURE_results}
Below we prove Theorem~\ref{theo:sure_equal_variance}. Throughout the proof we deal with the more general case of unequal variances. In particular, we replace the assumption that $(X_i,Z_i)$ satisfy~\eqref{eq:np-fay-herriot-compound-model-variance} by the following model (while keeping all other assumptions):

\begin{equation*}
(X_i, \mu_i) \sim \mathbb{P}^{(X_i,\mu_i)},\;\; Z_i \mid \mu_i, X_i \sim \p{\mu_i, \sigma_i^2},\;
\text{ i.e. }\EE{Z_i \mid \mu_i, X_i} = \mu_i, \; \Var{Z_i \mid \mu_i, X_i} = \sigma_i^2
\end{equation*}

\begin{proof}[Theorem~\ref{theo:sure_equal_variance}] Our proof closely follows~\citet{xie2012sure}. Let $n_2 = \abs{I_2}$. We also use the same notation as in the proof of Theorem~\ref{theo:james_stein}, wherein $\aPP[I_1]{\; \cdot \;} = \PP{ \;\cdot \mid Z_{I_1}, \mu_{1:n}, X_{1:n}}$. For $i \in I_2$ we also write $\tm_i = \hat{m}_{I_1}(X_i)$. We rewrite the $\SURE$ expression as follows:
$$ \SURE_{I_2}(A) = \frac{1}{n_2}\sum_{i \in I_2} \p{\sigma_i^2 + \frac{\sigma_i^4}{(A+\sigma_i^2)^2}(Z_i - \tm_i)^2 - 2\frac{\sigma_i^4}{ A + \sigma_i^2}} = \frac{1}{n_2}\sum_{i \in I_2} \sqb{\frac{\sigma_i^4}{(A+\sigma_i^2)^2}(Z_i-\tm_i)^2 + \frac{\sigma_i^2(A-\sigma_i^2)}{A + \sigma_i^2}}$$

We also define $\ell_{I_2}(A)$, the average loss in fold $I_2$ when we estimate $\mu_i$ by $t^*_{\hat{m}_{I_{1}},A}(X_i, Z_i)$, i.e.:
$$ \ell_{I_2}(A) := \frac{1}{n_2}\sum_{i \in I_2}\p{ \mu_i - t^*_{\hat{m}_{I_{1}},A}(X_i, Z_i)}^2$$
Next we collect the difference between the SURE risk estimate and the actual loss:
{\small
$$
\begin{aligned}
\SURE_{I_2}(A) - \ell_{I_2}(A) &= \frac{1}{n_2}\sum_{i \in I_2} \sqb{\p{\frac{\sigma_i^4}{(A+\sigma_i^2)^2}(Z_i-\tm_i)^2 + \frac{\sigma_i^2(A-\sigma_i^2)}{A + \sigma_i^2}} - \p{\mu_i - \frac{A}{A + \sigma_i^2}Z_i - \frac{\sigma_i^2}{A+\sigma_i^2}\tm_i}^2} \\
&=\frac{1}{n_2}\sum_{i \in I_2}\cb{ \sqb{\p{Z_i-\tm_i}^2 - \sigma_i^2 - \p{\mu_i-\tm_i}^2} - \frac{2A}{A + \sigma_i^2}\sqb{\p{Z_i - \tm_i}^2 - (Z_i-\tm_i)(\mu_i-\tm_i) - \sigma_i^2}} \\
& = \text{I} + \text{II}
\end{aligned}
$$
}

We consider each term independently. The first term does not depend on $A$, hence is easier to study.
$$
\begin{aligned}
&\aEE[I_1]{\abs{\sum_{i \in I_2} \sqb{\p{Z_i-\tm_i}^2 - \sigma_i^2 - \p{\mu_i-\tm_i}^2}}}^2 \\
\leq \; &\aEE[I_1]{\p{\sum_{i \in I_2} \sqb{\p{Z_i-\tm_i}^2 - \sigma_i^2 - \p{\mu_i-\tm_i}^2}}^2}\\
= \; &\sum_{i \in I_2} \aVar[I_1]{\p{Z_i-\tm_i}^2} \\
\leq \; & 8\sum_{i \in I_2} \p{ \aEE[I_1]{Z_i^4} + \tm_i^4}
\leq 8n_2 \p{ \Gamma^4 + M^4}
\end{aligned}
$$

The second term depends on $A$ and we want a result that is uniform in $A$. Without loss of generality, we may assume that the indices in $I_2 = \cb{i_{1}, i_{2},\dotsc}$ are arranged such that $\sigma^2_{i_{1}} \leq \sigma^2_{i_{2}} \leq ...$ (otherwise we may just rearrange). Then, as observed in \citet{li1986asymptotic, xie2012sure}:

$$
\begin{aligned}
 &\sup_{0\leq A \leq \infty} \abs{\sum_{i \in I_2} \frac{A}{A + \sigma_i^2}\sqb{ \p{Z_i-\tm_i}^2 - (Z_i-\tm_i)(\mu_i - \tm_i) - \sigma_i^2}} \\
\leq & \sup_{0 \leq c_n \leq \dotsc \leq c_1 \leq 1} \abs{\sum_{i \in I_2} c_i \sqb{\p{Z_i-\tm_i}^2 -(Z_i-\tm_i)(\mu_i - \tm_i) - \sigma_i^2}}\\
= & \max_{j=1,\dotsc,n_2} \bigg |\underbrace{\sum_{k=1}^j\sqb{\p{Z_{i_k} - \tm_{i_k}}^2 - (Z_{i_k} - \tm_{i_k})(\mu_{i_k} - \tm_{i_k}) - \sigma_{i_k}^2}}_{=:M_j}\bigg |
\end{aligned}
$$
Next notice that $M_j, j=1,\dotsc, n_2$ is a martingale w.r.t. $\aPP[I_1]{\cdot}$, so by the $L^2$ maximal inequality, for a constant $C>0$:
$$
\begin{aligned}
\aEE[I_1]{\max_{j=1,\dotsc,n_2} M_j^2} \leq 4 \aEE[I_1]{M_{n_2}^2} &= 4 \sum_{i \in I_2} \aVar[I_1]{\p{Z_i - \tm_i}^2  - (Z_i - \tm_i)(\mu_i - \tm_i) - \sigma_i^2} \\
&\leq C n_2(\Gamma^4 + M^4)
\end{aligned}
$$
The results together imply that for a constant $C' > 0$:
$$ \aEE[I_1]{\sup_{A \geq 0} \abs{\SURE_{I_2}(A) - \ell_{I_2}(A)}} \leq C'\sqrt{\Gamma^4 + M^4}\frac{1}{\sqrt{n_2}}$$
But by definition of $\hat{A}_{I_2}$, $\SURE_{I_2}(\hat{A}_{I_2}) \leq \inf_{A \geq 0}\SURE_{I_2}(A)$ and so for any $A \geq 0$:

$$ \aEE[I_1]{\ell_{I_2}(\hat{A}_{I_2})} \leq \aEE[I_1]{\ell_{I_2}(A)} + \aEE[I_1]{\sup_{A \geq 0} \abs{\SURE_{I_2}(A) - \ell_{I_2}(A)}} \leq \aEE[I_1]{\ell_{I_2}(A)}\; + \; C'\sqrt{\Gamma^4 + M^4}\frac{1}{\sqrt{n_2}}$$
This holds for any $A \geq 0$, hence it remains valid after we take the infimum over $A \geq 0$.
\end{proof}

\subsection{Proof of Corollary~\ref{coro:equal_sample_size_compound}}

\begin{proof}
By Theorem~\ref{theo:sure_equal_variance}:

{\footnotesize
$$
\begin{aligned}
\frac{2}{n}\sum_{i \in I_2} \EE{ (\mu_i - \hat{\mu}_i^{\text{EBCF}})^2 \mid X_{1:n}, \mathbf{\mu}_{1:n}, Z_{I_{1}}} \leq &\inf_{A \geq 0}\cb{\frac{2}{n}\sum_{i \in I_2}  \EE{ \p{\mu_i - t^*_{\hat{m}_{I_{1}},A}(X_i, Z_i)}^2 \mid X_{1:n}, \mathbf{\mu}_{1:n}, Z_{I_{1}}}} \; +\; O\p{ \frac{1}{\sqrt{n}}}
\end{aligned}
$$}
Next integrate over $X_{1:n}, \mathbf{\mu}_{1:n}, Z_{I_{1}}$ and pull the $\inf$ outside of the expectation and use the fact that $(X_i, Z_i, \mu_i)$ are i.i.d. to get for fresh $(X_{n+1}, Z_{n+1})$:

$$ \frac{2}{n}\sum_{i \in I_2} \EE{ (\mu_i - \hat{\mu}_i^{\text{EBCF}})^2} \leq \inf_{A \geq 0}\cb{\EE{ \p{\mu_{n+1} - t^*_{\hat{m}_{I_1},A}(X_{n+1}, Z_{n+1})}^2 }} \; +\; O\p{ \frac{1}{\sqrt{n}}}$$
Then, make the choice $A =  \EE{ \p{\hat{m}_{I_1}(X_{n+1}) - \mu_{n+1}}^2}$ to get:

$$ \frac{2}{n}\sum_{i \in I_2} \EE{ (\mu_i - \hat{\mu}_i^{\text{EBCF}})^2} \leq \frac{\sigma^2\EE{ \p{\hat{m}_{I_1}(X_{n+1}) - \mu_{n+1}}^2}}{\sigma^2 + \EE{ \p{\hat{m}_{I_1}(X_{n+1}) - \mu_{n+1}}^2}}\; +\; O\p{ \frac{1}{\sqrt{n}}}$$
Repeat the same argument with $I_1,I_2$ flipped, add the results and divide by $2$ to conclude.
\end{proof}

\end{appendix}

\end{document}